\numberwithin{equation}{section}
\newtheorem{theorem}{Theorem}[section]
\newtheorem{lemma}[theorem]{Lemma}
\newtheorem{proposition}[theorem]{Proposition}
\theoremstyle{definition}
\newtheorem{definition}[theorem]{Definition}
\newtheorem{remark}[theorem]{Remark}
\newtheorem{notations}[theorem]{Notations}
\newcommand{\ind}{1\hspace{-2.1mm}{1}} \newcommand{\CC}{\mathbb{C}}
\newcommand{\RR}{\mathbb{R}}
\newcommand{\NN}{\mathbb{N}}
\newcommand{\PP}{\mathbb{P}}
\newcommand{\DD}{\mathbb{D}}
\newcommand{\EE}{\mathbb{E}}
\newcommand{\VV}{\mathbb{V}}
\newcommand{\FF}{\mathbb{F}}
\newcommand{\LL}{\mathbb{L}}
\newcommand{\Cc}{\mathcal{C}}
\newcommand{\Ee}{\mathcal{E}}
\newcommand{\Eeb}{\Ee^\blacklozenge}
\newcommand{\Call}{\mathrm{C}}
\newcommand{\Ff}{\mathcal{F}}
\newcommand{\Ss}{\mathcal{S}}
\newcommand{\Nn}{\mathcal{N}}
\newcommand{\Ii}{\mathcal{I}}
\newcommand{\Tt}{\mathcal{T}}
\newcommand{\Vv}{\mathcal{V}}
\newcommand{\M}{\mathrm{M}}
\newcommand{\D}{\mathrm{d}}
\newcommand{\E}{\mathrm{e}}
\newcommand{\I}{\mathrm{i}}
\newcommand{\Fm}{\mathrm{F}}
\newcommand{\Gm}{\mathrm{G}}
\newcommand{\BS}{\mathrm{BS}}
\newcommand{\Wh}{\widehat{W}}
\newcommand{\Ffr}{\mathfrak{F}}
\newcommand{\ff}{\mathfrak{f}}
\newcommand{\cff}{\mathfrak{b}}
\newcommand{\cf}{\mathfrak{c}}
\newcommand{\Hp}{H_{+}}
\newcommand{\Hm}{H_{-}}
\newcommand{\Bh}{B^{H}}
\newcommand{\Ba}{B^{\frac{\alpha}{2}}}
\newcommand{\Dr}{\mathrm{D}}
\newcommand{\FfB}{\Ff^{\mathscr{B}}}
\newcommand{\FFB}{\FF^{\mathscr{B}}}
\newcommand{\Ef}{\Ee}
\newcommand{\Mf}{\mathcal{M}}
\newcommand{\half}{\frac{1}{2}}
\newcommand{\uu}{\boldsymbol{\mathrm{u}}}
\DeclareMathOperator*{\argmin}{arg\,min}
\DeclareMathOperator{\VIX}{\mathrm{VIX}}
\title{Rough Bergomi turns grey}
\author{Antoine Jacquier}
\email{a.jacquier@imperial.ac.uk}
\address{Department of Mathematics, Imperial College London}
\author{Adriano Oliveri Orioles}
\email{a.oliveriorioles23@imperial.ac.uk}
\address{Department of Mathematics, Imperial College London}
\author{Žan Žurič}
\email{zan.zuric@gmail.com}
\address{Kaiju Capital Management}
\date{\today}
\keywords{Rough volatility, multi-factor, asymptotics, VIX, Malliavin calculus}
\subjclass{60G15, 60G22, 60H07, 91G20}
\thanks{AJ acknowledges financial support from the EPSRC grant EP/T032146/1.}
\begin{document}
\begin{abstract}
We propose a tractable extension of the rough Bergomi model, 
replacing the fractional Brownian motion with a generalised grey Brownian motion, which we show to be reminiscent of models with stochastic volatility of volatility.
This extension breaks away from the log-Normal assumption of rough Bergomi, thereby making it a viable suggestion for the Equity Holy Grail --- the joint SPX/VIX options calibration.
For this new (class of) model(s), we provide semi-closed and asymptotic formulae for SPX and VIX options and show numerically its potential advantages as well as calibration results.
\end{abstract}

\maketitle

\tableofcontents

\section{Introduction}

Fractional Brownian motion has a long and famous history in probability, stochastic analysis, Physics and their applications to diverse fields~\cite{hurst1951long, kolmogorov1940wienersche, mandelbrot1968fractional}.
It was recently rejuvenated in the form of fractional volatility models in mathematical finance. 
First introduced by Comte and Renault~\cite{comte1996long}, 
later studied theoretically by Djehiche and Eddahbi~\cite{djehiche2001hedging}, 
Al\`os, Le\'on and Vives~\cite{alos2007short} and Fukasawa~\cite{fukasawa2011asymptotic}, they were properly empirically justified and promoted by Gatheral, Jaisson and Rosenbaum~\cite{Gatheral2018VolatilityRough} 
and Bayer, Friz and Gatheral~\cite{Bayer2015PricingVolatility}. 
Since then, a vast literature has pushed the analysis in many directions~\cite{jaber2022characteristicfunctiongaussianstochastic, jaber2024statespacesmultifactorapproximations, bank2023rough, bayer2022markovian, Bourgey2024, gatheral2020quadraticroughhestonmodel, horvath2019deeplearningvolatility, Jacquier2021RoughOptions},
leading to theoretical and practical challenges (and exciting progress) to understand and implement these models.

While these rough stochastic volatility models may ultimately not be the universally perfect models they may have seemed at first --- and new ones in the form of multifactor or path-dependent models are fascinating competitors --- they nevertheless helped re-design a new framework for volatility modelling, 
spearheading new advances in the joint calibration of SPX and VIX options, the so-called \emph{Holy Grail} of Equity modelling.
Among these rough volatility models, the rough Bergomi~\cite{Bayer2015PricingVolatility} model, an extension of the classical Bergomi model~\cite{Bergomi}, was first highlighted for its tractability.
Indeed, as with its classical cousin, the variance process is log-Normal (though non-Markovian) and thus more readily amenable to computations.
Unfortunately, this property also implies that the squared VIX, as a conditional expectation of the  integrated variance, is also close to log-Normal, implying an almost flat VIX volatility smile, 
markedly different from the upward curve observed on the market.
One may, of course — and this has indeed been done — construct more sophisticated versions, such as rough Heston~\cite{euch2018perfect, el2019characteristic}, though at the cost of losing tractability (even though the simplified version~\cite{gatheral2020quadraticroughhestonmodel} appears very promising).

With this in mind, we introduce an extension of the rough Bergomi (rBergomi) model, which we call the Grey Bergomi (gBergomi) model:
it preserves self-similarity and stationary increments while relaxing the log-Normality constraint. 
This is inspired by and reminiscent of (rough)  stochastic volatility of volatility models or volatility-modulated models, as studied in~\cite{barndorff2012stochastic, Fouque2018HestonOptions, horvath2020volatility, huang2019volatility}.
We develop the mathematical background, the model and its pricing characteristics in Section~\ref{sec:ggBm_definitions}
and gather the required numerical algorithms in Section~\ref{sec:algos}.
Since some pricing formulae may be complicated and cumbersome, we develop some asymptotic closed-form approximations in Section~\ref{sec:asymptotics} 
and gather all these tools in Section~\ref{sec:calibration} to examine the calibration properties of this model on SPX and VIX options data.

\section{The grey Bergomi model}\label{sec:ggBm_definitions}

We start in Section~\ref{sec:ggBm} by introducing the so-called generalised grey Brownian and its properties.
This allows us, in Section~\ref{sec:gBergomi}, to extend the rough Bergomi model to its generalised counterpart, 
highlighting some of its useful properties, 
before showing in Section~\ref{sec:gBergomiVIX} how the VIX and VIX Futures take shape in this model.

\subsection{Generalised grey Brownian motion}
\label{sec:ggBm}

For $\beta>0$, the standard Mittag-Leffler function~$\Ef_{\beta}$ is defined as an entire function via the series representation
\begin{equation}\label{def:MLf}
\Ef_{\beta}(z):=\sum_{n\geq 0} \frac{z^n}{\Gamma(\beta n+1)},
\qquad\text{for all }z \in \CC,
\end{equation}
where~$\Gamma$ is the Gamma function and, for $\beta\in(0, 1]$,
the $\M$-Wright function~$\Mf_{\beta}$ reads
\begin{equation}\label{eq:M_Wright}
\Mf_{\beta}(z) := 
\sum_{n\geq 0} \frac{(-z)^n}{n ! \Gamma(-\beta n+1-\beta)}, \quad\text{for }z \in \CC.
\end{equation}

The choice $\beta=\half$ reduces the $\M$-Wright function to the Gaussian density, 
as a simple computation shows that 
$\Mf_{\half}(z)
  = \frac{1}{\sqrt{\pi}}\exp\left\{-\frac{z^2}{4}\right\}$.
The two functions~$\Mf_{\beta}$ and~$\Ef_{\beta}$ are related through the Laplace transform
\begin{equation}\label{eq:LaplaceM-Wright}
\int_0^{\infty} \E^{-s u} \Mf_{\beta}(u) \D u
= \Ef_{\beta}(-s),
\qquad\text{for any }s\in \RR.
\end{equation}
Finally, a random variable~$Y_\beta$ is said to follow the (one-sided) $\M$-Wright distribution
if it is supported on the positive half line and admits 
the~$\M$-Wright function~\eqref{eq:M_Wright}
as probability density.
Note in particular that its moment of order $\kappa>-1$ exists~\cite{Piryatinska2005ModelsCase} and 
\begin{equation}\label{eq:MWrightMom}
\EE \left[Y_\beta^\kappa\right]
 = \frac{\Gamma(1+\kappa)}{\Gamma(1+\beta \kappa)}.
\end{equation}

We are now ready to introduce the generalised grey Brownian motion.

\begin{definition}\label{def:ggBm}
Let $\beta\in(0,1]$, $\alpha\in(0,2)$. 
A generalised grey Brownian motion $B^{\beta, \alpha}$ defined on a complete probability space $(\Omega, \Ff, \PP)$ is a one-dimensional continuous process  starting from 
$B^{\beta, \alpha}_0 = 0$ $\PP$-almost surely,
such that, for any $0 \leq t_1<\ldots<t_n<\infty$, its joint characteristic function is given by
$$
    \EE\left[\exp\left\{\I \sum_{k=1}^n u_k B^{\beta, \alpha}_{t_k}\right\}\right]
    = \Ef_{\beta}\left(-\half\uu^\top\Sigma_\alpha \uu\right),
    \quad \text{ for any }\uu=(u_1, \ldots, u_n) \in\RR^n,
$$
where
$\displaystyle \Sigma_\alpha := \half\left(t_k^\alpha+t_j^\alpha-\left|t_k-t_j\right|^\alpha\right)_{k, j=1}^n$ 
denotes its covariance matrix.
\end{definition}
Note that~$\Sigma_\alpha$ is the covariance matrix of a (Mandelbrot-Van Ness, or Type-I) fractional Brownian motion with Hurst exponent~$\frac{\alpha}{2}$ and is symmetric positive definite.
\begin{remark}
When $\beta=\alpha=1$, the generalised grey Brownian motion reduces to a standard Brownian motion, as one can indeed check that
\[
\Ef_{1}\left(-\half\uu^\top\Sigma_1 \uu\right) = \exp\left\{- \half\uu^\top \left(\min\{t_i, t_j\}\right)_{i,j=1}^n \uu\right\}.
\]
In the Physics literature pertaining to anomalous diffusions, 
the~$\half$ factor is not present in~$\Sigma_{\alpha}$,
as Physics conventions normalise Brownian motion to a variance of~$2$ at time~$1$~\cite[Chapter~10]{ibe2013markov} as opposed to a normalisation of~$1$, 
standard in probability theory ~\cite{mura2007classselfsimilarstochasticprocesses}. 
\end{remark}
By inverse Fourier transform, the joint characteristic function above is integrable and decays rapidly; therefore the distribution is absolutely continuous and the joint density of $(B^{\beta, \alpha}_{t_1}, \ldots, B^{\beta, \alpha}_{t_n})$ reads~\cite{daSilva2020SingularityMotion}
$$
    f_\beta(\uu) = 
    \frac{(2\pi)^{-\frac{n}{2}}}{\sqrt{\operatorname{det} \Sigma_\alpha}} 
    \int_0^{\infty}
    \tau^{-\frac{n}{2}}\exp\left\{-\frac{\uu^{\top} \Sigma_\alpha^{-1} \uu}{2\tau}
    \right\} \Mf_{\beta}(\tau) \D \tau,
    \qquad\text{for all }\uu\in\RR^{n}.
$$
By simple computations, 
for any $s,t \geq 0, n \in\NN$,
odd moments of~$B_t^{\beta, \alpha}$ are null and
$$
\EE\left[\left(B_t^{\beta, \alpha}\right)^{2 n}\right]
= \frac{(2n)! t^{n\alpha}}{2^n \Gamma(\beta n+1)},
\qquad\text{and}\qquad
\EE\left[B_t^{\beta, \alpha} B_s^{\beta, \alpha}\right]
= \frac{t^\alpha+s^\alpha-|t-s|^\alpha}{2 \Gamma(\beta+1)}.
$$
Furthermore, for each $t, s \geq 0$, the characteristic function of the increments reads
\begin{equation}\label{eq:gBm_char0} \EE\left[\exp\left\{\I u\left(B_t^{\beta, \alpha}-B_s^{\beta, \alpha}\right)\right\}\right]
    = \varphi_{t-s}(u), 
    \qquad \text{for all } u \in \RR, 
\end{equation}
    where
\begin{equation}\label{eq:gBm_char}
\varphi_{\delta}(u):=\Ef_{\beta}\left(-\frac{u^2}{2}|\delta|^\alpha\right), \quad \text{for all } \delta, u\in\RR.
\end{equation}
Since the Mittag-Leffler function~$\Ef_{\beta}$ is not quadratic, 
the marginals of~$B^{\beta, \alpha}$ are not Gaussian,
and~\eqref{eq:gBm_char0} shows that it is $\frac{\alpha}{2}$-self-similar with stationary increments.
A careful analysis~\cite{daSilva2016SingularityParameters} further shows that the sample paths of~$B^{\beta, \alpha}$ have finite $p$-variation for any $p>\frac{2}{\alpha}$, thereby implying that it is not a semimartingale whenever $\alpha\in(0,1)$.
When $\alpha \in (1,2)$, an argument similar to~\cite{rogers1997arbitrage} shows that it cannot be a semimartingale either since its quadratic variation is null 
(which would imply finite variation, incompatible with its $1$-variation being infinite).
We highlight one interesting property that will be key for our computations and which will help provide financial intuition later on:
\begin{lemma}[Proposition~3~in~\cite{Mura2008}]\label{lem:gBm_decomposition}
The representation
$B_t^{\beta, \alpha} \stackrel{(d)}{=}
\sqrt{Y_\beta} B_t^{\frac{\alpha}{2}}$
holds for all $t \geq 0$, where~$\Ba$ is a Mandelbrot-Van Ness fractional Brownian motion with Hurst parameter~$\frac{\alpha}{2}$ and~$Y_\beta$ an independent (one-sided) $\M$-Wright distribution.
\end{lemma}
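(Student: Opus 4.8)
The plan is to establish the equality in distribution at the level of finite-dimensional distributions by comparing joint characteristic functions, and then to upgrade this to an equality of the two (continuous) processes. Since both $B^{\beta,\alpha}$ and $t \mapsto \sqrt{Y_\beta}\,\Ba_t$ have almost surely continuous sample paths, it is enough to show that, for every $n\in\NN$ and every choice of times $0 \leq t_1 < \cdots < t_n < \infty$, the joint characteristic functions of $(B^{\beta,\alpha}_{t_1}, \ldots, B^{\beta,\alpha}_{t_n})$ and of $(\sqrt{Y_\beta}\,\Ba_{t_1}, \ldots, \sqrt{Y_\beta}\,\Ba_{t_n})$ coincide.

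First I would compute the characteristic function of the right-hand side by conditioning on $Y_\beta$. Because $Y_\beta$ is independent of the fractional Brownian motion $\Ba$, conditioning on $\{Y_\beta = \tau\}$ leaves $\sqrt{\tau}\,(\Ba_{t_1}, \ldots, \Ba_{t_n})$ a centred Gaussian vector with covariance matrix $\tau\,\Sigma_\alpha$; this is exactly consistent with $\Ba$ being a Mandelbrot--Van Ness fractional Brownian motion of Hurst parameter $\frac{\alpha}{2}$, whose covariance matrix is precisely $\Sigma_\alpha$. The conditional characteristic function is therefore the standard Gaussian expression
$$
\EE\left[\exp\left\{\I \sum_{k=1}^n u_k \sqrt{\tau}\,\Ba_{t_k}\right\}\right]
= \exp\left\{-\half\,\tau\,\uu^\top \Sigma_\alpha \uu\right\}.
$$

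Taking the expectation over $Y_\beta$, whose density is the $\M$-Wright function $\Mf_\beta$, and interchanging expectation and integration (legitimate since the integrand is bounded in modulus by $\Mf_\beta(\tau)$, a probability density), yields
$$
\EE\left[\exp\left\{\I \sum_{k=1}^n u_k \sqrt{Y_\beta}\,\Ba_{t_k}\right\}\right]
= \int_0^\infty \exp\left\{-\half\,\tau\,\uu^\top \Sigma_\alpha \uu\right\} \Mf_\beta(\tau)\,\D\tau.
$$
The key step is to recognise the right-hand side, via the Laplace-transform identity~\eqref{eq:LaplaceM-Wright} evaluated at $s = \half\,\uu^\top \Sigma_\alpha \uu \geq 0$ (nonnegative because $\Sigma_\alpha$ is positive definite), as exactly $\Ef_\beta\!\left(-\half\,\uu^\top \Sigma_\alpha \uu\right)$. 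By Definition~\ref{def:ggBm} this is precisely the joint characteristic function of $(B^{\beta,\alpha}_{t_1}, \ldots, B^{\beta,\alpha}_{t_n})$, so the finite-dimensional distributions agree and the lemma follows.

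I do not expect a genuine obstacle here, as the argument is a clean subordination computation. The only points warranting minor care are verifying that the scalar $\sqrt{Y_\beta}$ rescales the Gaussian covariance by the factor $\tau$ (rather than $\sqrt{\tau}$) inside the characteristic function, and confirming that matching finite-dimensional distributions suffices to conclude equality in law of the full processes --- which is legitimate precisely because both sides are continuous, so their laws on path space are determined by their finite-dimensional marginals.
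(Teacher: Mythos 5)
Your argument is correct and coincides with the proof behind the cited result: the paper itself offers no proof but points to Proposition~3 of Mura (2008), which is established by exactly this subordination computation --- condition on $Y_\beta$, use the Gaussian characteristic function with covariance $\tau\,\Sigma_\alpha$, and invoke the Laplace-transform identity~\eqref{eq:LaplaceM-Wright} to recover the Mittag-Leffler characteristic function of Definition~\ref{def:ggBm}. The only cosmetic remark is that $\Sigma_\alpha$ is in general only positive semi-definite (e.g.\ when $t_1=0$), which still gives $s=\half\,\uu^\top\Sigma_\alpha\uu\geq 0$, so nothing in your argument is affected, and your final step correctly upgrades the agreement of finite-dimensional distributions to equality in law on path space via continuity of both processes.
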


Given the characteristic function~\eqref{eq:gBm_char0}-\eqref{eq:gBm_char}
for $0<s<t$, define its analytic extension
\begin{equation}\label{eq:ggBmMGF}
    \mathfrak{M}_{t-s}(u)
    \coloneqq \varphi_{t-s}\left(-\I u\right)
    = \EE\left[\E^{ u\left(B_t^{\beta, \alpha} - B_s^{\beta, \alpha}\right)}\right]
    = \Ef_{\beta}\left(\frac{u^2}{2}|t-s|^\alpha\right),
    \quad\text{for all }u\in\RR,
\end{equation}
well-defined and positive,
thus fully characterising the moment-generating function.

\subsection{The grey Bergomi model}\label{sec:gBergomi}
Our starting point is a simple variation of the rough Bergomi model for an underlying stock price~$S$,
originally proposed by Bayer, Friz and Gatheral~\cite{Bayer2015PricingVolatility}, with risk-neutral dynamics (assuming no interest rate)
\begin{equation}\label{eq:rBergomi_dynamics}
\begin{array}{r@{\;}lc}
\displaystyle \frac{\D S_t}{S_t} &= \displaystyle \sqrt{V_t} \D W_t, & S_0=1, \\
V_t &= \displaystyle \xi_0(t) \Ee^\lozenge(\eta \Gm_t), & V_0>0,
\end{array}
\end{equation}
where $\xi_0(\cdot)>0$ denotes the initial (forward) variance curve, $\eta>0$
is a (volatility of volatility) parameter, and~$\Gm$ a continuous non-degenerate Gaussian process.
Standard Gaussian theory ensures the existence of a standard Brownian motion~$B$,
correlated with~$W$ with correlation $\rho\in[-1,1]$,
such that
$\Gm_t = \int_{0}^{t}K(t-s)\D B_s$ for some kernel function~$K \in L^2((0,\infty))$.
Here~$\Ee^\lozenge$ denotes the Wick exponential defined as~$\Ee^\lozenge(Z)\coloneqq \exp\{Z-\half\EE[Z^2]\}$
(note that the classical Dol{\'e}ans-Dade is not well defined for fractional Brownian motion as the latter is not a semimartingale~\cite{rogers1997arbitrage} for $H \ne\half$).
This version of the rough Bergomi model exhibits VIX marginals that closely resemble a log-Normal distribution, hence implying an almost flat VIX smile;
this was already noted in~\cite{Bayer2015PricingVolatility},
but is not consistent with the observed upward-sloping behaviour of the VIX smile, and we showcase some empirical evidence in Appendix~\ref{apx:VIX_smile_rBergomi}.
In a stochastic volatility framework, 
evidence has highlighted that 
the volatility of the volatility parameter should not be deterministic~\cite{barndorff2012stochastic, Fouque2018HestonOptions}, 
leading some to believe that a stochastic formulation could help reconcile the calibration of SPX and VIX smiles.
Furthermore, the inclusion of additional factors for an adequate joint fit is well-documented and supported by the literature:
in particular R{\o}mer~\cite{Rmer2022EmpiricalMarkets} investigated calibration problems of one-factor rough volatility models over 2004–2019 and concluded that the joint calibration problem is largely solvable with two-factor volatility models. 
This was further documented for rough volatility models~\cite{Jacquier2021RoughOptions}, 
regime-switching dynamics~\cite{Goutte2017Regime-switchingOptions}, and path-dependent volatility models~\cite{Guyon2022VolatilityPath-Dependent}.

In this spirit, we introduce a new, strictly positive, factor as a random volatility of volatility~$\widetilde{\eta}$, so that the rough Bergomi variance process in~\eqref{eq:rBergomi_dynamics} is replaced by
\[
V_t = \xi_0(t) \Ee^\blacklozenge\left(\widetilde{\eta} \Gm\right)_t, \qquad \xi_0(t)>0, \quad t\in[0,T],
\]
where, for any $\eta>0$,
\begin{equation}
    \Eeb(\eta Z)_{t}
    \coloneqq \exp\left\{\eta Z_t - \log\mathfrak{M}^Z_{t}(\eta)\right\}, \quad \text{ for }  t\in[0,T],
\end{equation}
with~$\mathfrak{M}^Z$ introduced in~\eqref{eq:ggBmMGF}.
Note that this collapses to~$\Ee^{\lozenge}(\eta Z)$ 
for~$Z$ centered Gaussian.
By definition, the forward variance reads $\xi_{s}(t) := \EE[V_t | \Ff_{s}^\Gm]$
for $0\leq s\leq t$, and
$\xi_{0}(t) = \EE[V_t]$ is
guaranteed as $\EE\left[\Eeb(\eta Z)\right] = 1$.
We may in particular choose $\widetilde{\eta} = \eta\sqrt{Y_\beta}$ with $\eta>0$ 
and~$Y_\beta$ an independent non-negative random variable with density~$\Mf_{\beta}$ in~\eqref{eq:M_Wright},
in which case, by Lemma~\ref{lem:gBm_decomposition}, 
this \emph{grey Bergomi} 
can be viewed as an extension of~\eqref{eq:rBergomi_dynamics}
with randomised volatility of volatility, where
\begin{equation}\label{eq:gBergomi_dynamics_old}
V_t = \displaystyle \xi_0(t) 
\Eeb\left(\eta B^{\beta, \alpha}_t\right),\qquad V_0 > 0.
\end{equation}
where~$B^{\beta, \alpha}$ is a generalised grey Brownian motion from Definition~\ref{def:ggBm}.
For computational reasons however, the fractional Brownian motion~$\Ba$ has some drawbacks, 
in particular the fact that its integral (Mandelbrot-van Ness) representation against a standard Brownian motion 
requires a rather cumbersome kernel.
We thus amend the above setup slightly (similar to~\cite{Bayer2015PricingVolatility}),
replacing~$\Ba$ by a Riemann-Liouville fractional Brownian motion. 
With Lemma~\ref{lem:gBm_decomposition} in mind, the system~\eqref{eq:gBergomi_dynamics_old} then becomes
\begin{equation}\label{eq:gBergomi_dynamics}
\left\{
\begin{array}{rll}
\displaystyle \frac{\D S_t}{S_t} & 
= \displaystyle \sqrt{V_t}\left(\rho \D B_t + \sqrt{1-\rho^2}\D W_t\right),  & S_0 = 1,\\
V_t & = \displaystyle \xi_0(t) 
\Eeb\left(\eta \sqrt{Y_\beta}\Bh_t\right), & V_0 > 0.
\end{array}
\right.
\end{equation}
where from now on, $\Bh$ denotes a Riemann-Liouville fractional Brownian motion with Hurst parameter $H \in (0,1)$
(from the results in Section~\ref{sec:ggBm}, we have the correspondence $\alpha = 2H$), admitting the representation
$$
\Bh_t = \cf\int_{0}^{t}(t-s)^{\Hm}\D B_s,
\qquad\text{for all }t\geq 0,
$$
with the following convenient notations from now on:

\begin{notations}
$\displaystyle \cf := \frac{1}{\Gamma(\Hp)}$,
$H_{\pm} := H\pm\half$,
$\displaystyle \cff := \frac{\eta^2 \cf^2}{4H}$.
\end{notations}

In the sequel, we denote for any $t>0$ the sigma-algebras $\Ff^Z_t\coloneqq \sigma(Z_t)$ for $Z\in\{W,B\}$, $\FfB_t:= \sigma(Y_\beta B_t) = \sigma(Y_\beta) \lor \Ff_t^B$ and $\Ff_t\coloneqq \Ff^W_t\lor \FfB_t$. 
Finally, the filtrations are denoted by~$\FF^Z:=\cup_{t\geq 0}\Ff^Z_t$ for $Z\in\{W,B\}$
and $\FF \coloneqq \FF^W \lor\FFB$.

\begin{remark}
    We shall refer to the system~\eqref{eq:gBergomi_dynamics} as the gBergomi model despite the use of the Riemann-Liouville representation for the fractional Brownian motion.
\end{remark}
In order for this model to make sense, the Fundamental Theorem of Asset Pricing requires the stock price to be a true martingale:
\begin{theorem}\label{thm:bbBmartingale}
If $\rho \leq 0$, the stock price $(S_t)_{t\geq 0}$, solution to~\eqref{eq:gBergomi_dynamics}, is a true $\FF$-martingale. 
\end{theorem}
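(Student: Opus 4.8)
The plan is to prove that $(S_t)_{t\geq 0}$ is a true martingale by conditioning on the randomness coming from the volatility of volatility, thereby reducing the problem to a conditional rough Bergomi model where the multiplicative factor $\sqrt{Y_\beta}$ is frozen. Concretely, I would condition on $Y_\beta$ (equivalently, work on the $\sigma$-algebra $\sigma(Y_\beta)$) so that, given $Y_\beta = y$, the driving log-variance process is $\eta\sqrt{y}\,\Bh$, which is exactly a Gaussian Riemann-Liouville rough Bergomi setup with effective vol-of-vol $\eta\sqrt{y}$. If I can show that for every fixed $y\geq 0$ the conditional process is a true martingale with $\EE[S_t \mid Y_\beta = y] = 1$, then taking expectation over the $\M$-Wright law of $Y_\beta$ and using the tower property yields $\EE[S_t] = 1$ for all $t$, and more generally the martingale property with respect to the enlarged filtration $\FF$.

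The key steps, in order, are as follows. First I would note that $S_t = \exp\{\int_0^t \sqrt{V_s}\,(\rho\,\D B_s + \sqrt{1-\rho^2}\,\D W_s) - \half\int_0^t V_s\,\D s\}$, so $S$ is a nonnegative local martingale and hence a supermartingale; it is therefore a true martingale if and only if $\EE[S_t] = 1$ for all $t\in\TT$. Second, conditionally on $Y_\beta = y$, the process is precisely a Riemann-Liouville rough Bergomi model with Gaussian log-variance $\eta\sqrt{y}\,\Bh$, for which the true-martingale property under the sign condition $\rho\leq 0$ is known — this is the rough Bergomi martingale result of Gassiat and related work, exploiting that the Wick exponential $\Eeb$ reduces to the standard Wick exponential $\Ee^\lozenge$ in the Gaussian case (as noted in the text), so that $\EE[V_t\mid Y_\beta=y]$ integrates correctly against the forward curve. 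The crucial structural input is the correlation decomposition $\D S_t/S_t = \sqrt{V_t}(\rho\,\D B_t + \sqrt{1-\rho^2}\,\D W_t)$: since $V_t$ depends on $B$ only, the sign of $\rho$ controls whether the stochastic integral exponent can create the explosive feedback that would destroy integrability. Third, I would verify the conditional martingale property via a Novikov-type or, more robustly, a linearisation/stopping argument, then integrate over $Y_\beta$ using $\EE[Y_\beta^\kappa] = \Gamma(1+\kappa)/\Gamma(1+\beta\kappa)$ from~\eqref{eq:MWrightMom} to confirm that the outer expectation remains finite and equal to one.

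The hard part will be establishing the conditional true-martingale property uniformly enough in $y$ that the integration over $Y_\beta$ is legitimate, and ensuring the martingale property holds with respect to the full filtration $\FF = \FF^W\lor\FFB$ rather than merely in expectation. The delicate point is that $\FFB_t = \sigma(Y_\beta)\lor\Ff_t^B$ already carries the (non-time-indexed) information $\sigma(Y_\beta)$ at time zero, so $Y_\beta$ is $\Ff_0$-measurable; this is precisely what makes the conditioning legitimate and allows the tower argument to upgrade $\EE[S_t]=1$ to the genuine $\FF$-martingale property. I would handle the conditional step by the standard rough Bergomi technique: because $\rho\leq 0$, the correlation term $\rho\int_0^t\sqrt{V_s}\,\D B_s$ combined with $-\half\rho^2\int_0^t V_s\,\D s$ does not generate the super-exponential growth in $\int_0^t V_s\,\D s$ that the log-Normal variance could otherwise produce; one makes this rigorous by a change of measure removing the $W$-part (which is unconditionally fine since $W$ enters with a bounded, $B$-measurable coefficient) and then controlling the remaining $B$-driven exponential moment, exploiting that $\log V_s$ has Gaussian tails conditionally on $Y_\beta$. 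Quantifying this exponential-moment bound so that it survives integration against the $\M$-Wright density — whose moments grow like $\Gamma(1+\kappa)/\Gamma(1+\beta\kappa)$ — is where the main technical care is required.
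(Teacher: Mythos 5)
Your proposal is correct and follows essentially the same route as the paper: the paper likewise exploits that $\sigma(Y_\beta)$ is contained in $\Ff_0$ to condition on the vol-of-vol factor, and then runs the Gassiat-type argument for the frozen rough Bergomi model --- $S$ is a nonnegative local martingale, hence a supermartingale, and with stopping times $\tau_n = \inf\{t>0 : B^H_t = n\}$ and a conditional Girsanov change of measure, the sign condition $\rho \leq 0$ gives the comparison $\widehat{B}_t \leq \widehat{B}^H_t$, so the stopped mass $\EE[\widehat{\PP}_n(\tau_n \leq T)]$ vanishes and $\EE[S_T] = S_0$. Two minor remarks: discard the Novikov option you mention (it fails here, since conditionally on $Y_\beta$ the variance has log-Normal-type tails), and your worry about uniformity in $y$ is unnecessary --- the relevant hitting probabilities are bounded by one and, under $\widehat{\PP}_n$, have a law not depending on $Y_\beta$, so bounded convergence handles the outer expectation without any quantitative moment bound against the $\M$-Wright density.
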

\begin{proof}
Following the ideas developed in~\cite{gassiat2019martingale, lions2007correlations}, clearly~$S$ is a non-negative local martingale, hence a supermartingale.
With $\tau_n := \inf \{t>0, B^H_t = n\}$ for $n\in\NN$, then
$$
S_0 =\EE\left[S_{T\land \tau_n}\right]
= \EE\left[S_{T}\mathbbm{\ind}_{\{T\leq \tau_n\}}\right]
+ \EE\left[S_{\tau_n}\mathbbm{\ind}_{\{\tau_n \leq T\}}\right],
$$
and therefore
$$
S_0 - \EE\left[S_{T}\right]
= \lim_{n\uparrow\infty} \EE\left[\EE\left[S_{\tau_n}\mathbbm{\ind}_{\{\tau_n \leq T\}}|\sigma(Y_{\beta})\right]\right].
$$
Girsanov's theorem yields
$
\EE\left[\EE\left[S_{\tau_n}\mathbbm{\ind}_{\{\tau_n \leq T\}}|\sigma(Y_{\beta})\right]\right]
= S_0 \EE\left[\widehat{\PP}_n (\tau_n \leq T)\right]$,
where~$\widehat{\PP}_n$ is a random measure (conditional on $\sigma(Y_{\beta})$) chosen such that 
$\Wh^{(n)}_t = W_t - \int_0^{t\land \tau_n} \sqrt{V_t}\D s$
is a $\widehat{\PP}_n$-Brownian motion.
Note that for $t \leq \tau_n$,
\begin{align*}
\widehat{B}_t & = 
\cf\int_0^t (t-s)^{\Hm} \left(\D\widehat{B}^{(n)}_s
+ \rho \sqrt{V_t}\D s\right)\\
 & = \widehat{B}^{H}_t
 + \rho \cf\int_0^t (t-s)^{\Hm}
\sqrt{V_t}\D s,
\end{align*}
where $\widehat{B}^{(n)}_t$ is a $\widehat{\PP}_n$-Brownian motion and 
$\widehat{B}^{H}_t := \cf\int_0^t (t-s)^{\Hm}\D\widehat{B}^{(n)}_s$.
When $\rho \leq 0$, then $\widehat{B}_t \leq \widehat{B}^{H}_t$ for $t\leq \tau_n$ almost surely and  $\tau_n \geq \tau_n^0 := \inf\{t>0, \widehat{B}^{H}_t = n\}$. By Dominated Convergence and since $\widehat{B}^{(n)}$ is a $\widehat{\PP}_n$-Brownian motion, then
$$
\lim_{n\uparrow\infty} \EE\left[\widehat{\PP}_n \left(\tau_n^0 \leq T \right)\right] = \EE\left[\lim_{n\uparrow\infty} \widehat{\PP}_n \left(\tau_n^0 \leq T \right)\right] = \EE\left[\lim_{n\uparrow\infty}\PP \left(\sup_{t\in[0,T]} B^{H}_t \geq n \right)\right] = 0.
$$
Therefore
$S_0 - \EE[S_T]= 0$,
which concludes the proof.
\end{proof}

In the context of option pricing, in particular for American options, 
the following result, adapted from~\cite{gerhold2024integrability}, 
guarantees that the model can be used:
\begin{lemma}
For any $t\geq 0$,
$\EE\left[\sup_{u \in [0,t]}S_u\right] < \infty$.
\end{lemma}

\begin{proof}
In light of Theorem~\ref{thm:bbBmartingale}, 
we first assume that $\rho \leq 0$.
Using the Riemann-Liouville fractional Brownian motion representation and Lemma~\ref{lem:gBm_decomposition},
the variance under~\eqref{eq:gBergomi_dynamics} reads
$$
V_t = \frac{\xi_0(t)}{\Ee_{\beta}(\cff t^{2H})}\exp\left\{\eta \cf\sqrt{Y_{\beta}} \int_0^t (t-s)^{\Hm} \D W_s\right\} =: f(t,Z_t),
$$
where 
$f(t,z) := \frac{\xi_0(t)\E^{z}}{\Ee_{\beta}(\cff t^{2H})}$ and 
$Z_t := \int_0^t K_{H, \eta}(t,s) \D W_s$ with~$K_{H, \eta}(t,s):= \eta \cf\sqrt{Y_{\beta}} (t-s)^{\Hm}$. From~\cite[Assumption 2.2, Theorem B.3]{gerhold2024integrability} and an identical argument as the one from~\cite[Corollary 2.2]{gerhold2024integrability}, we can conclude that the equation
$$ 
\widetilde{Z}_t = 
Z_t + \int_0^t K_{H, \eta}(t,s) \rho \sqrt{\frac{\xi_0(s)}{\Ee_{\beta}(\cff t^{2H})}} \exp\left\{ \frac{\widetilde{Z}_s}{2}\right\} \D s
$$
admits a unique strong solution. 
Furthermore, since $f(t,\cdot)$ is non-decreasing for each~$t$,
$$
0 \leq f(t, \widetilde{Z}_t) \leq \frac{\xi_0(t) \exp\{Z_t\}}{\Ee_{\beta}(\cff t^{2H})}.
$$
The existence of a strong and unique solution to $\widetilde{Z}_t$ allows to use of~\cite[Lemma~2.4]{gerhold2024integrability} and when coupled with the inequality above, we obtain for some positive constants~$\kappa$ and~$\gamma$
\begin{align}    \EE\left[\sup_{u\in[0,t]} S_t \,\Big\vert\, \sigma(Y_{\beta})\right] &\leq \kappa + \gamma \int_0^t \EE\left[f(s, \widetilde{Z}_s) \vert \sigma(Y_{\beta})\right] \D s\\
&\leq
\kappa + \gamma \int_0^t \frac{\xi_0(s)}{\Ee_{\beta}\left(\cff s^{2H}\right)}\exp\left\{\cff Y_{\beta}  s^{2H}\right\} \D s < \infty.
    \end{align}
    Taking expectations results in
    $$
    \EE\left[\sup_{u\in[0,t]} S_t\right] \leq \kappa + \gamma \int_0^t \EE\left[f(s, \widetilde{Z}_s)\right] \D s \leq \kappa + \gamma \int_0^t \xi_0(s)\D s < \infty.
    $$
\end{proof}


\subsection{VIX}\label{sec:gBergomiVIX}
Now that the basic properties for the stock price have been set, we move on to studying the VIX under the grey Bergomi model.

\subsubsection{VIX Dynamics}
The continuously monitored version of the VIX is defined as
$$
\VIX^2_{T} := \EE\left[\frac{1}{\Delta}\int_{T}^{T+\Delta}V_{s}\D s\vert \Ff_T\right],
\qquad\text{for any }T\geq 0,
$$
with~$\Delta$ corresponding to one month.
To streamline the results, introduce the quantities
\begin{equation}\label{eq:Vprocesses}
\Vv_{s}^T := \displaystyle \int_0^T (s-u)^{\Hm}\D B_u
\qquad\text{and}\qquad
\Vv_{s,T} := \displaystyle \int_{T}^{s} (s-u)^{\Hm}\D B_u,
\end{equation}
for any $0\leq T\leq s$.
The following proposition derives an expression for it:

\begin{proposition}\label{prop:vixdynamics}
The VIX dynamics under~\eqref{eq:gBergomi_dynamics} are given by 
$$
\VIX^2_T = \int_T^{T+\Delta}
\frac{\xi_0(s)}{\Ee_{\beta}\left(\cff s^{2H}\right)} \zeta_T(s)\Ef_{\beta}\left(\cff\left(s-T\right)^{2H} \right)\D s,
$$
for any $T\geq 0$, where $\zeta_T(s) := \sum_{k\geq 0} \frac{(\eta \cf)^k}{k!}\frac{\Gamma(1+\frac k2)}{\Gamma(1 + \frac{\beta k }{2})}(\Vv_s^T)^k$.
\end{proposition}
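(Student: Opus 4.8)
The plan is to reduce the claim to a family of one-dimensional conditional moment-generating-function computations, one for each maturity $s\in[T,T+\Delta]$. Since $V\ge 0$, Tonelli lets me interchange the conditional expectation and the time integral, so that $\VIX_T^2=\frac{1}{\Delta}\int_T^{T+\Delta}\EE[V_s\mid\Ff_T]\,\D s$ and everything hinges on $\EE[V_s\mid\Ff_T]$ for $s\ge T$. Using the explicit form of the variance coming from the Wick exponential (exactly as in the integrability lemma), $V_s=\frac{\xi_0(s)}{\Ee_\beta(\cff s^{2H})}\exp\{\eta\sqrt{Y_\beta}\,\Bh_s\}$, the deterministic prefactor $\xi_0(s)/\Ee_\beta(\cff s^{2H})$ pulls out and the whole problem collapses to evaluating the conditional exponential moment $\EE[\exp\{\eta\sqrt{Y_\beta}\,\Bh_s\}\mid\Ff_T]$.

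The structural key is to split the Riemann--Liouville kernel across the monitoring date. Writing $\Bh_s=\cf\big(\Vv_s^{T}+\Vv_{s,T}\big)$ with $\Vv_s^{T}$ and $\Vv_{s,T}$ as in~\eqref{eq:Vprocesses}, the first piece $\Vv_s^{T}=\int_0^T(s-u)^{\Hm}\D B_u$ is $\Ff_T$-measurable, while the second piece $\Vv_{s,T}=\int_T^s(s-u)^{\Hm}\D B_u$ is independent of the past and centred Gaussian with variance $\int_T^s(s-u)^{2\Hm}\D u=\frac{(s-T)^{2H}}{2H}$. This is precisely the split that will separate the answer into a ``spot'' factor depending on the realised path through $\Vv_s^{T}$ and a ``forward'' factor depending only on the residual time $s-T$.

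For the forward factor I would integrate the independent Gaussian increment first: conditionally on $Y_\beta$ one gets $\EE[\exp\{\eta\cf\sqrt{Y_\beta}\,\Vv_{s,T}\}\mid\Ff_T]=\exp\{\cff\,Y_\beta\,(s-T)^{2H}\}$, and averaging over $Y_\beta$ via the $\M$-Wright moment-generating function~\eqref{eq:ggBmMGF}---equivalently the Laplace identity~\eqref{eq:LaplaceM-Wright}---produces $\Ee_\beta(\cff(s-T)^{2H})$. For the spot factor I would expand $\exp\{\eta\cf\sqrt{Y_\beta}\,\Vv_s^{T}\}=\sum_{k\ge0}\frac{(\eta\cf\,\Vv_s^{T})^k}{k!}\,Y_\beta^{k/2}$ and integrate term by term against the $\M$-Wright law, invoking the fractional-moment formula~\eqref{eq:MWrightMom}, $\EE[Y_\beta^{k/2}]=\frac{\Gamma(1+\frac k2)}{\Gamma(1+\frac{\beta k}{2})}$; the resulting series is exactly $\zeta_T(s)$. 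Reinstating the prefactor $\xi_0(s)/\Ee_\beta(\cff s^{2H})$ and integrating over $s$ then yields the stated integrand.

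The delicate point---and the step I expect to be the main obstacle---is that a single randomising variable $Y_\beta$ multiplies \emph{both} $\Vv_s^{T}$ and $\Vv_{s,T}$, so the clean product $\zeta_T(s)\,\Ee_\beta(\cff(s-T)^{2H})$ is not simply a product of two independent expectations and must be obtained by carefully controlling the order in which the Gaussian future increment and the $\M$-Wright variable are integrated, together with the $\Ff_T$-measurability of $\Vv_s^{T}$; making this decoupling rigorous is where the grey structure of the driving noise really enters. Alongside this, I would need to justify the interchange of the conditional expectation with the infinite sum defining $\zeta_T(s)$ and the convergence of that series---the fractional moments $\Gamma(1+\frac k2)/\Gamma(1+\frac{\beta k}{2})$ grow fast enough that a dominated-convergence argument (or a direct tail bound using Stirling together with the Gaussian tails of $\Vv_s^{T}$) is required to legitimise the term-by-term integration.
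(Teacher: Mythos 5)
Your proposal follows the paper's proof essentially step by step: the same interchange of the conditional expectation with the time integral, the same split $\Vv_s=\Vv_s^T+\Vv_{s,T}$ of the Riemann--Liouville kernel at the monitoring date, the same Gaussian evaluation of the forward factor (conditionally on $Y_\beta$, then averaged via~\eqref{eq:LaplaceM-Wright}--\eqref{eq:ggBmMGF} to produce $\Ef_{\beta}\left(\cff(s-T)^{2H}\right)$), and the same series expansion with the fractional moments~\eqref{eq:MWrightMom} yielding $\zeta_T(s)$.

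The one point where you part ways with the paper is instructive: the ``delicate point'' you flag --- that a single $Y_\beta$ multiplies both $\Vv_s^T$ and $\Vv_{s,T}$, so the product $\zeta_T(s)\,\Ef_{\beta}\left(\cff(s-T)^{2H}\right)$ is not a consequence of independence --- is exactly the step the paper's proof performs without comment. The paper conditions on $\Ff^B_T$ (note, incidentally, not on $\Ff_T\supset\sigma(Y_\beta)$ as in the definition of the VIX; with $\Ff_T$-conditioning $Y_\beta$ would remain explicit in the answer, so the stated $\zeta_T$, with $Y_\beta$ integrated out, only makes sense under $\Ff^B_T$) and simply writes
\[
\EE\left[\E^{\eta\cf\sqrt{Y_\beta}\left(\Vv_s^T+\Vv_{s,T}\right)}\,\middle\vert\,\Ff^B_T\right]
=\zeta_T(s)\,\EE\left[\E^{\eta\cf\sqrt{Y_\beta}\,\Vv_{s,T}}\,\middle\vert\,\Ff^B_T\right],
\]
offering no device to decouple the shared $Y_\beta$. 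Your suspicion that no ordering of the integrations achieves this decoupling is well founded: integrating the Gaussian increment first, conditionally on $Y_\beta$ and $\Ff^B_T$, gives termwise
\[
\sum_{k\geq0}\frac{\left(\eta\cf\,\Vv_s^T\right)^k}{k!}\,
\EE\left[Y_\beta^{\frac k2}\exp\left\{\cff\,Y_\beta\,(s-T)^{2H}\right\}\right],
\]
and since $y\mapsto y^{k/2}$ and $y\mapsto\exp\{\cff\,y\,(s-T)^{2H}\}$ are both nondecreasing, the association (Chebyshev) inequality makes each coefficient dominate the product $\EE[Y_\beta^{k/2}]\,\EE[\exp\{\cff Y_\beta(s-T)^{2H}\}]$, with equality only in the degenerate case $\beta=1$ (where $Y_\beta\equiv1$ and the model collapses to rough Bergomi). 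So you should not expect to find the missing decoupling argument in the paper: the rest of your computation --- both factor evaluations, and the term-by-term integration that your Stirling/Gaussian-tail domination legitimises --- matches the paper's proof line by line, and your sketch is in fact more candid than the paper about the one step on which the stated product formula rests.
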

\begin{proof}
Let $\Vv_t := \Vv_{t}^{t}$. 
We can write
\begin{align*}
\VIX^2_T &= \int_T^{T+\Delta}
\frac{\xi_0(s)}{\Ee_{\beta}\left(\cff s^{2H}\right)} 
\EE\left[\exp\left\{\eta \cf \sqrt{Y_{\beta}} \Vv_s \right\}\middle\vert \Ff^B_T\right] \D s\\
&= \int_T^{T+\Delta}
\frac{\xi_0(s)}{\Ee_{\beta}\left(\cff s^{2H}\right)} 
\EE\left[\exp\left\{\eta \cf \sqrt{Y_{\beta}} \left(\Vv_{s,T} + \Vv_s^T \right)\right\}\middle\vert \Ff^B_T\right] \D s.
\end{align*}
We can then compute
\begin{align*}
\zeta_T(s) := \EE\left[\exp\left\{\eta \cf \sqrt{Y_{\beta}}\Vv_s^T\right\}\middle\vert\Ff^B_T\right] = \sum_{k\geq 0} \frac{(\eta \cf \Vv_s^T)^k}{k!}\EE\left[Y^{\frac k2}_{\beta}\right]
 & = \sum_{k\geq 0} \frac{(\eta \cf \Vv_s^T)^k}{k!}\frac{\Gamma(1+\frac k2)}{\Gamma(1 + \frac{\beta k }{2})},
\end{align*}
and therefore
$$
\VIX^2_T = \int_T^{T+\Delta}
\frac{\xi_0(s)}{\Ee_{\beta}\left(\cff s^{2H}\right)} \zeta_T(s) 
\EE\left[\exp\left\{\eta \cf \sqrt{Y_{\beta}}\Vv_{s,T}\right\}\middle\vert \Ff^B_T\right] \D s.
$$
Since $\Vv_{s,T}$ is centered Gaussian independent of~$\Ff^{B}_T$
with 
$\VV[\Vv_{s,T}] = \frac{1}{2H}(s-T)^{2H}$,
then
$\EE\left[\exp\left\{\eta \cf \sqrt{Y_{\beta}} \Vv_{s,T} \right\}| \Ff^B_T\right]  = \Ef_{\beta}\left(\cff (s-T)^{2H} \right)$
and the proposition follows.
\end{proof}

\subsubsection{VIX Futures}
Following~\cite{jacquier2018vix}, the VIX Future~$\Ffr_T$ with maturity~$T$ is given by
\begin{align}\label{eqn:vixfuture}
    \Ffr_T := \EE\left[\VIX_T | \Ff^B_0\right]
    & = \EE\left[\sqrt{\frac{1}{\Delta}\int_T^{T+\Delta} \EE\left[\D\langle X_s,X_s\rangle\middle\vert\FfB_T\right]}\;\middle\vert\;\Ff^B_0\right] \notag \\
    &= \EE\left[\sqrt{\frac{1}{\Delta}\int_T^{T+\Delta} \xi_T(s) \D s }\;\middle\vert\;\Ff^B_0\right].
\end{align}
Since $\xi_T(t) = \EE[V_t | \Ff^B_T]$ for $t\geq T$, the following is immediate from Proposition~\ref{prop:vixdynamics}:

\begin{proposition}\label{prop:forwardvarcurve}
    Under~\eqref{eq:gBergomi_dynamics} the forward variance curve admits the representation 
    $$
    \xi_T(t) = \frac{\xi_0(t)}{\Ee_{\beta}(\cff t^{2H})} \zeta_T(t)\Ef_{\beta}\left(\cff (t-T)^{2H}\right), 
\qquad\text{for any } t \geq T,
    $$
with~$\zeta_T$ defined in Proposition~\ref{prop:vixdynamics}.
\end{proposition}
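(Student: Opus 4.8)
The plan is to recognise that Proposition~\ref{prop:forwardvarcurve} merely records, pointwise in the forward time~$t$, the integrand already produced inside the proof of Proposition~\ref{prop:vixdynamics}. By the convention adopted in~\eqref{eqn:vixfuture}, the forward variance curve is $\xi_T(t) = \EE[V_t \mid \Ff^B_T]$ for $t\geq T$, with the latent factor~$Y_\beta$ averaged out. This is precisely the conditional expectation $\EE[V_s \mid \Ff^B_T]$ that appears under the integral sign of $\VIX^2_T$ in Proposition~\ref{prop:vixdynamics}, now evaluated at the free variable~$t$ instead of being integrated over $s\in[T,T+\Delta]$.

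Concretely, I would first substitute the explicit variance from~\eqref{eq:gBergomi_dynamics}, using Lemma~\ref{lem:gBm_decomposition} and the Riemann-Liouville kernel, to write
$$
V_t = \frac{\xi_0(t)}{\Ee_\beta(\cff t^{2H})}\exp\left\{\eta\cf\sqrt{Y_\beta}\,\Vv_t\right\},
\qquad \Vv_t = \int_0^t (t-u)^{\Hm}\D B_u.
$$
I would then split this Wiener integral via~\eqref{eq:Vprocesses} as $\Vv_t = \Vv_t^T + \Vv_{t,T}$, separating the $\Ff^B_T$-measurable part $\Vv_t^T$ from the centred Gaussian increment $\Vv_{t,T}$, which is independent of $\Ff^B_T$ with variance $\frac{1}{2H}(t-T)^{2H}$. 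Conditioning on $\Ff^B_T$ and following the same power-series evaluation as in the proof of Proposition~\ref{prop:vixdynamics}, the conditional expectation splits into $\zeta_T(t)$ --- arising from the $\M$-Wright moments~\eqref{eq:MWrightMom} applied to the measurable part --- and $\Ee_\beta(\cff(t-T)^{2H})$ --- arising from integrating out the independent Gaussian increment and then averaging $\exp\{\cff Y_\beta(t-T)^{2H}\}$ over $Y_\beta$ through the Mittag-Leffler series~\eqref{def:MLf}. Multiplying by the deterministic prefactor $\xi_0(t)/\Ee_\beta(\cff t^{2H})$ delivers the claimed representation.

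Because all the analytic content already lives in Proposition~\ref{prop:vixdynamics}, there is essentially no obstacle here; the only points worth flagging are that the identity must be read off at the level of the integrand rather than the integral itself, and that no step of the computation constrains~$t$ to the window $[T,T+\Delta]$, so the representation is valid for every $t\geq T$ exactly as stated.
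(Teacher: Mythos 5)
Your proposal is correct and follows essentially the same route as the paper, which simply observes that $\xi_T(t) = \EE[V_t \mid \Ff^B_T]$ is the integrand already computed in the proof of Proposition~\ref{prop:vixdynamics} (splitting $\Vv_t = \Vv_t^T + \Vv_{t,T}$, conditioning, and using the $\M$-Wright moments and the Mittag--Leffler series). Your explicit re-derivation pointwise in~$t$, including the remark that nothing constrains~$t$ to $[T,T+\Delta]$, is exactly the content the paper declares immediate.
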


\subsubsection{Upper and lower bounds for VIX Futures}
Similarly to~\cite[Theorem~3.2]{jacquier2018vix}, one can construct bounds for VIX Futures.

\begin{proposition}\label{prop:vixbounds}
    The following bounds hold for VIX Futures:
    $$
    \frac{1}{\Delta}\int_T^{T+\Delta} \sqrt{\frac{\xi_0(s)\Ef_{\beta}\left(\cff 
    \left(s-T\right)^{2H}\right)}{ \Ee_{\beta}\left(\cff s^{2H}\right)}}\EE\left[\sqrt{\zeta_T(s)} \;\middle\vert\; \Ff^B_0\right] \D s\\
    \leq \Ffr_T \leq \sqrt{\frac{1}{\Delta}\int_T^{T+\Delta} \xi_0(s) \D s }\,.
    $$
\end{proposition}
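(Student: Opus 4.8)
The plan is to obtain both inequalities from two separate applications of Jensen's inequality to the concave square-root function, exploiting the two distinct averaging operations present in the VIX Future~\eqref{eqn:vixfuture}: the conditional expectation $\EE[\,\cdot\,|\Ff^B_0]$ on the one hand, and the normalised time integral $\frac{1}{\Delta}\int_T^{T+\Delta}(\cdot)\D s$ on the other, the latter being a probability measure on $[T,T+\Delta]$. Because $\sqrt{\cdot}$ is concave, applying Jensen along these two directions produces inequalities pointing in opposite senses, which is precisely what yields the lower and upper bound respectively.

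For the upper bound, I would start from the representation $\Ffr_T = \EE\big[\sqrt{\frac{1}{\Delta}\int_T^{T+\Delta}\xi_T(s)\D s}\,\big|\,\Ff^B_0\big]$ and apply the conditional Jensen inequality to pull the concave square root outside the conditional expectation, obtaining $\Ffr_T \leq \sqrt{\frac{1}{\Delta}\int_T^{T+\Delta}\EE[\xi_T(s)|\Ff^B_0]\D s}$ after a Tonelli interchange of the time integral and $\EE[\,\cdot\,|\Ff^B_0]$. The closing step is the forward-variance martingale property: since $\xi_T(s)=\EE[V_s|\Ff^B_T]$ and $\Ff^B_0\subseteq\Ff^B_T$, the tower property gives $\EE[\xi_T(s)|\Ff^B_0]=\EE[V_s|\Ff^B_0]=\xi_0(s)$, which is exactly the claimed upper bound $\sqrt{\frac{1}{\Delta}\int_T^{T+\Delta}\xi_0(s)\D s}$.

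For the lower bound, I would instead apply Jensen's inequality inside the conditional expectation, with respect to the probability measure $\frac{1}{\Delta}\D s$ on $[T,T+\Delta]$ and the concave map $x\mapsto\sqrt{x}$, to obtain the pathwise inequality $\frac{1}{\Delta}\int_T^{T+\Delta}\sqrt{\xi_T(s)}\D s\leq\sqrt{\frac{1}{\Delta}\int_T^{T+\Delta}\xi_T(s)\D s}$. Taking $\EE[\,\cdot\,|\Ff^B_0]$ of both sides and again interchanging (by Tonelli, as every integrand is non-negative) the conditional expectation with the time integral yields $\frac{1}{\Delta}\int_T^{T+\Delta}\EE[\sqrt{\xi_T(s)}|\Ff^B_0]\D s\leq\Ffr_T$. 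It then remains to substitute the explicit form of $\xi_T(s)$ from Proposition~\ref{prop:forwardvarcurve} and to factor the deterministic quantity $\sqrt{\frac{\xi_0(s)\Ef_\beta(\cff(s-T)^{2H})}{\Ee_\beta(\cff s^{2H})}}$ out of the conditional expectation, it being $\Ff^B_0$-measurable; what survives inside is precisely $\EE[\sqrt{\zeta_T(s)}\,|\,\Ff^B_0]$, recovering the stated lower bound.

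The computations are essentially routine, and the only points demanding any care are the two Tonelli interchanges of integration with conditional expectation, justified by non-negativity of all integrands, and the identity $\EE[\xi_T(s)|\Ff^B_0]=\xi_0(s)$, which is just the martingale property of the forward variance. The \emph{main obstacle}, insofar as there is one, is conceptual rather than technical: recognising that the two bounds are manifestations of the same concavity of $\sqrt{\cdot}$ applied along two different averaging directions, over the probability space for the upper bound and over the time interval for the lower bound, with the direction of the resulting inequality reversing accordingly.
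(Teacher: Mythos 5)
Your proof is correct and takes essentially the same route as the paper's: conditional Jensen's inequality, the Fubini--Tonelli interchange, and the martingale/tower property of the forward variance for the upper bound, then the pathwise inequality $\frac{1}{\Delta}\int_T^{T+\Delta}\sqrt{\xi_T(s)}\,\D s \leq \sqrt{\frac{1}{\Delta}\int_T^{T+\Delta}\xi_T(s)\,\D s}$ combined with Proposition~\ref{prop:forwardvarcurve} and factoring out the deterministic term for the lower bound. The only cosmetic difference is that the paper obtains this pathwise inequality via Cauchy--Schwarz, which is equivalent to your Jensen argument with respect to the normalised measure $\frac{1}{\Delta}\,\D s$ on $[T,T+\Delta]$.
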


\begin{proof}
The conditional Jensen's inequality and Fubini's theorem ($\xi_T$ is $\Ff^B_0$-adapted) give 
$$
\Ffr_T = \EE\left[\VIX_T \middle\vert \Ff^B_0 \right] = \EE\left[\sqrt{\frac{1}{\Delta}\int_T^{T+\Delta} \xi_T(s) \D s }\;\Bigg\vert\;\Ff^B_0\right] \leq \sqrt{\frac{1}{\Delta}\int_T^{T+\Delta} \EE[\xi_T(s)] \D s }\,.
$$
For any $s>0$, the martingale property of~$(\xi_{t}(s))_{t\leq s}$ implies that
$\Ffr_T \leq \sqrt{\frac{1}{\Delta}\int_T^{T+\Delta} \xi_0(s) \D s }$\,.
To obtain the lower bound we use Proposition~\ref{prop:forwardvarcurve}, Cauchy-Schwarz inequality and Fubini's theorem to deduce
\begin{align}
\Ffr_T
 := \EE\left[\VIX_T \middle\vert \Ff^B_0\right] 
& = \EE\left[\sqrt{\frac{1}{\Delta}\int_T^{T+\Delta} \frac{\xi_0(s)}{ \Ee_{\beta}\left(\cff s^{2H}\right)}\zeta_T(s)
\Ef_{\beta}\left(\cff 
\left(s-T\right)^{2H}\right) \D s } \; \Bigg\vert \; \Ff^B_0\right]\\
&\geq \EE\left[\frac{1}{\Delta}\int_T^{T+\Delta} \sqrt{\frac{\xi_0(s)\zeta_T(s)\Ef_{\beta}\left(\cff 
\left(s-T\right)^{2H}\right)}{ \Ee_{\beta}\left(\cff s^{2H}\right)}}
 \D s \; \Bigg\vert \; \Ff^B_0\right]\\
&= \frac{1}{\Delta}\int_T^{T+\Delta} \sqrt{\frac{\xi_0(s)\Ef_{\beta}\left(\cff 
\left(s-T\right)^{2H}\right)}{ \Ee_{\beta}\left(\cff s^{2H}\right)}}\EE\left[\sqrt{\zeta_T(s)} \,\middle\vert\, \Ff^B_0\right] \D s.
\end{align}
\end{proof}

Figure~\ref{fig:boundscenarios} illustrates these upper and lower bounds 
when
$(H, \beta, \eta) = (0.07,0.9,1.23)$ (chosen from the later calibration results) in the following three scenarios for the initial forward variance curve (same as in~\cite{jacquier2018vix}):
\begin{equation}\label{eq:Scenarios}
\begin{array}{rl}
\text{Scenario 1: } & \xi_0(t) = 0.235^2, \\
\text{Scenario 2: } & \xi_0(t) = 0.235^2 (1+t)^2, \\ 
\text{Scenario 3: } & \xi_0(t) = 0.235^2 \sqrt{1+t}.
\end{array}
\end{equation}

\begin{figure}[ht]
\centering
    \subfloat{\includegraphics[width=0.5\linewidth]{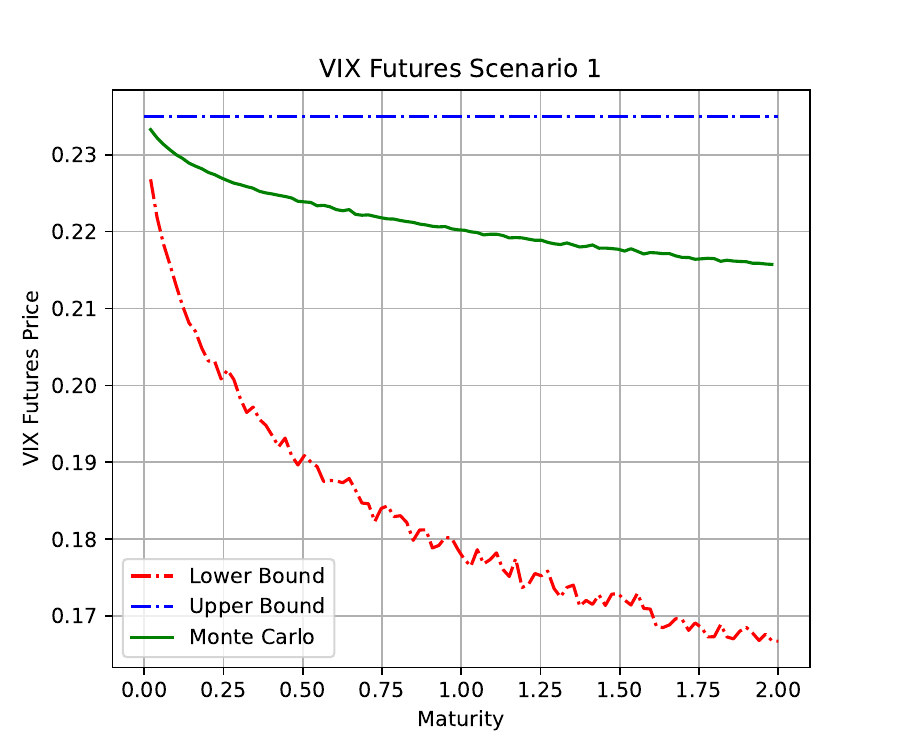}}
    \subfloat{\includegraphics[width=0.5\linewidth]{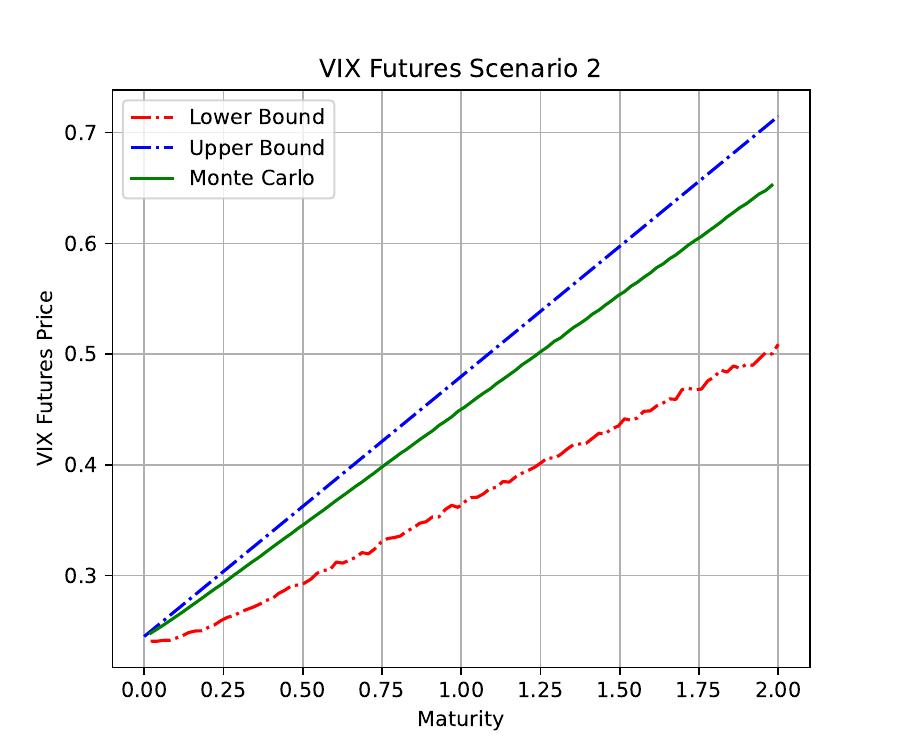}}
    \newline 
    \includegraphics[width=0.5\linewidth]{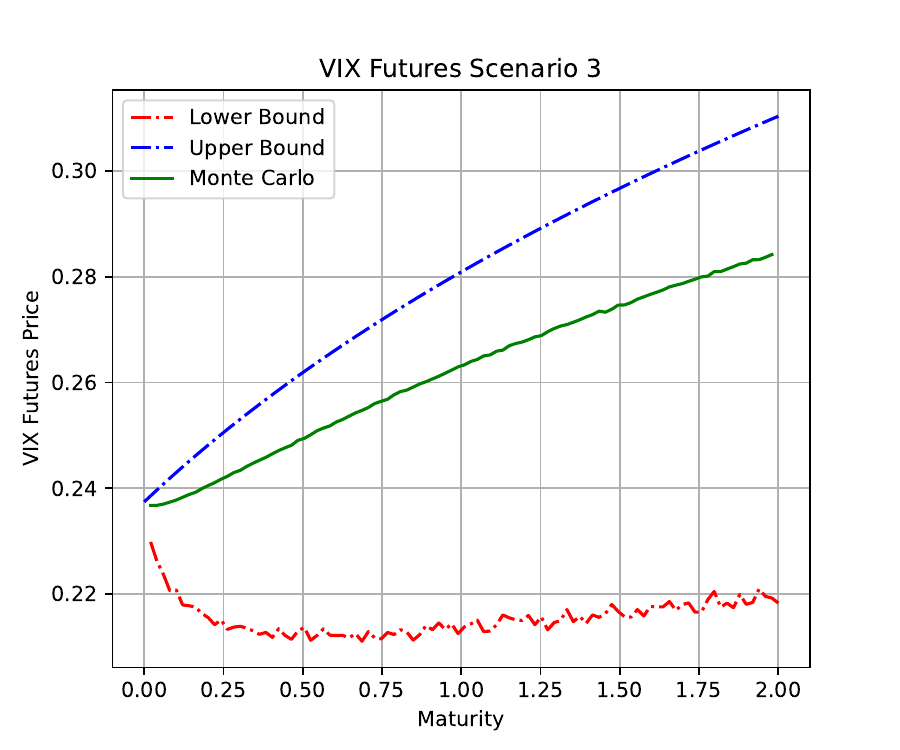}
    \caption{Upper and lower bounds in all three scenarios.}
\label{fig:boundscenarios}
\end{figure}
Our lower bound here is clearly not as tight as the one in~\cite[Theorem~3.2]{jacquier2018vix}
because of the difference in magnitude between $\Ee_{\beta}\left(\cff (s-T)^{2H}\right)$ and $\Ee_{\beta}\left(\cff s^{2H}\right)$ when $\beta \neq 1$.

\subsection{Skew-Stickiness Ratio (SSR)}
While joint SPX/VIX calibration is key, other metrics should also be considered,
in particular related to the term structure of the model.
Let $[X,Y]_{[t,s]}$ denote the quadratic covariation
between two Itô processes~$X$ and~$Y$ on  $[t, s]$. 
The Skew-Stickiness Ratio (SSR) introduced by Bergomi in~\cite{BergomiIV} (see also|\cite{florian2024smile})
at time~$t$ with maturity~$\tau$ is defined as
$$
\mathcal{R}_t (\tau) := \frac{1}{\Ss_t(\tau)}\frac{\dv{s} [\log(S), \sigma_{\cdot}(\tau) ]_{[t,s]} \vert_{s=t}}{\dv{s} [\log(S)]_{[t,s]} \vert_{s=t}},
$$
where $\sigma_{\cdot}$ and $\mathcal{S}_{\cdot}$ denote the at-the-money-forward implied volatility and skew at time~$t$.
To obtain the variance-swap version of the SSR as in~\cite{florian2024smile} one simply replaces~$\sigma$ with~$\sigma_{\text{VS}}$, the variance-swap implied volatility. 
Under the gBergomi model the limiting value of the both the variance swap and standard SSR as|$\tau$ tends to zero
is $H + \frac{3}{2}$.
In the variance-swap case the proof of this is identical to that of~\cite[Proposition 20]{florian2024smile}. 
Meanwhile, for the standard SSR, the result follows from~\cite[Corollary 5.6]{friz2024computingssr}.

\section{Numerical algorithms}\label{sec:algos}
\subsection{Numerical implementation of VIX process}\label{sec:AlgoVIX}
Our numerical implementation here closely follows the methodology developed in~\cite{jacquier2018vix},
and we thus refer the reader to the latter for full details.
Recall that the covariance structure
of~$\Vv^T$ in~\eqref{eq:Vprocesses} reads
\begin{align}\label{eqn:covstruct}
\EE\left[\Vv_t^T \Vv_s^T\right]
&= \int_0^T [(t-u)(s-u)]^{\Hm} \D u\\
&= \frac{(s-t)^{\Hm}}{\Hp}\left\{t^{\Hp}\Fm\left(\frac{-t}{s-t}\right) - (t-T)^{\Hp}\Fm\left(\frac{T-t}{s-t}\right)\right\},
\end{align}
for any $t<s$, where~${}_2F_1$ is the Hypergeometric function~\cite[Chapter 15]{handbookoffunctions} and
$$
\Fm(u):= {}_2F_1\left(-\Hp, \Hp, 1+\Hp,u\right).
$$

\algo[VIX simulation]{
Fix a grid $\mathfrak{T} = \{\tau_j\}_{j = 0,\ldots, N}$ on $[T,T+\Delta]$, and $l\in\mathbb{N}$.
\begin{enumerate}[(i)]
    \item Compute the covariance matrix of 
$(\Vv_{\tau_j}^T)_{j=1,\ldots,l}$ using~\eqref{eqn:covstruct};
\item compute $\rho_{j-1,j} := \mathrm{Corr}(\Vv_{\tau_{j-1}}^T,\Vv_{\tau_{j}}^T)$ by Cholesky decomposition for $j = l+1, \ldots, N$;
    \item generate $\{\Vv_{\tau_j}^T\}_{j=l+1, \ldots, N}$ via
    $$
    \Vv_{\tau_j}^T = \sqrt{\VV[\Vv_{\tau_j}^T]} \left(\frac{\rho_{j-1,j}\Vv_{\tau_{j-1}}^T}{\sqrt{\VV[\Vv_{\tau_{j-1}}^T]} } + \sqrt{1 - \rho_{j-1,j}^2} \ Z\right), \textup{ for } j = l+1, \ldots, N\,,
    $$
    where $Z \sim \mathcal{N}(0,1)\,;$
    \item compute $\VIX_T$ by numerical integration, for example with a trapezoidal rule:
    $$
    \text{VIX}_T \approx \sqrt{\frac{1}{\Delta} \sum_{j=0}^{N-1} \frac{Q^2_{T,\tau_j}+Q^2_{T,\tau_{j+1}}}{2} (\tau_j - \tau_{j-1})}\,,
    $$
    where $\displaystyle Q^2_{T,\tau_j} := \frac{\xi_0(t)}{\Ee_{\beta}(\cff t^{2H})} \zeta_T(t)\Ef_{\beta}\left(\cff \left(t-T\right)^{2H}\right).$
\end{enumerate}
}
\begin{remark}
We set $l=8$ as in~\cite{jacquier2018vix} to avoid numerical issues with a small determinant.
\end{remark}

Figure~\ref{fig:mccholesky} shows the results for $10^5$ Monte Carlo simulations in Scenario~1 in~\eqref{eq:Scenarios} for the Truncated Cholesky scheme. As a comparison, VIX Futures prices for the rough Bergomi (rBergomi in the legend) and the Monte Carlo standard deviations are given. 
Compared to rough Bergomi~\cite{jacquier2018vix} (where $\beta = 1$), 
its grey counterpart (with $\beta = 0.9$) yields higher prices, 
easily explained by the fact that Futures are long volatility.

\begin{figure}[ht]
\centering
    \subfloat{\includegraphics[width=0.5\linewidth]{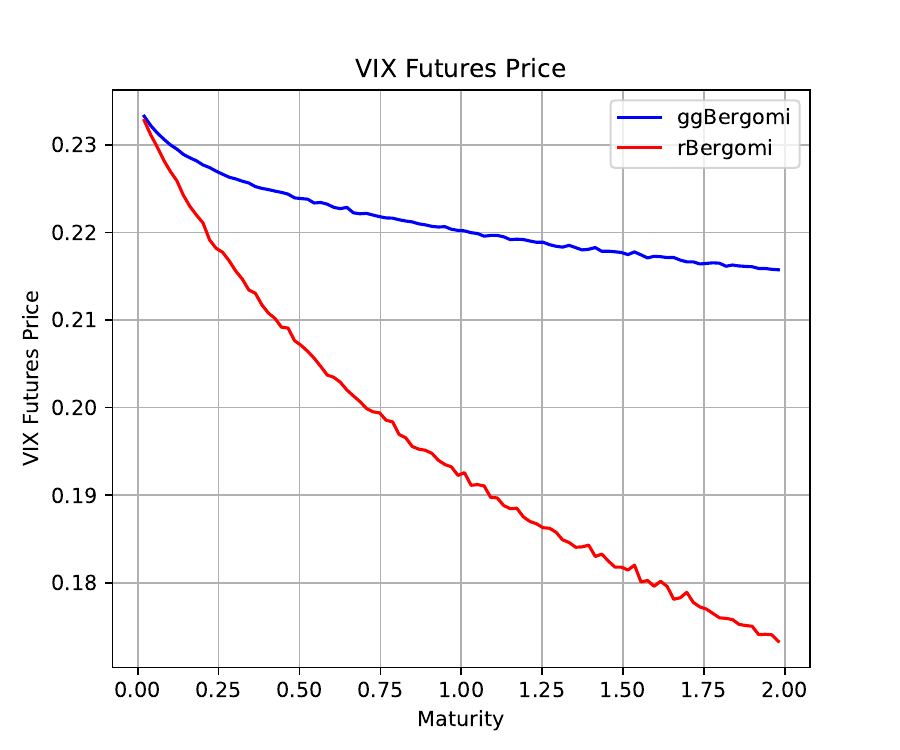}}
    \subfloat{\includegraphics[width=0.5\linewidth]{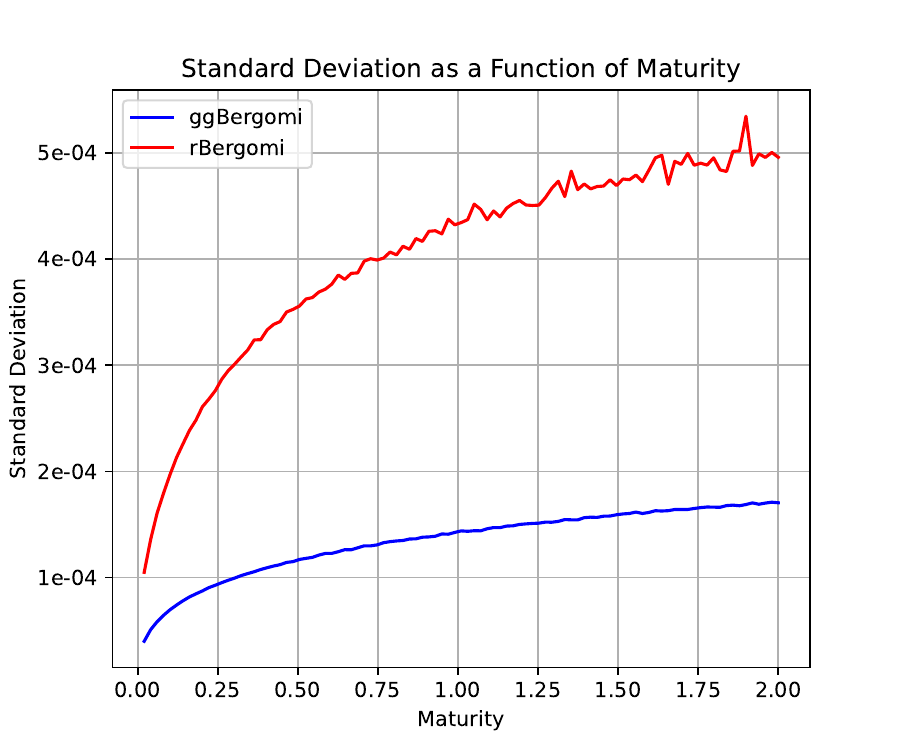}}
    \caption{Truncated Cholesky Monte Carlo prices and Monte Carlo standard deviations in rBergomi and gBergomi with the same parameters (except for~$\beta$).}
    \label{fig:mccholesky}
\end{figure}


\subsection{Algorithm for SPX options}\label{sec:vixtospx}
Based on the above algorithm for the VIX, 
we now develop a numerical scheme for option prices under~\eqref{eq:gBergomi_dynamics}.
From the definition of~$\Vv_t$ in Proposition~\ref{prop:vixdynamics} (and the first line of its proof), then
$\VV[\Vv_t] = \frac{1}{2H}t^{2H}$ and
\begin{align*}
\EE[\Vv_t\Vv_s]
    = \frac{t^{\Hp} s^{\Hm}}{\Hp}{}_2F_1\left(-\Hm, 1, 1+\Hp, \frac{t}{s}\right),
    \qquad\text{for }t < s.
\end{align*}

\algo[Spot process simulation]{Fix $\kappa \geq 1$ and the grid $\Tt := \{t_i\}_{i=0,\ldots,n_T}$.
\begin{enumerate}[(i)]
    \item Simulate the Volterra process~$\Vv$ on~$\Tt$ using~\eqref{eqn:covstruct};
    \item compute the variance~$V$ from~\eqref{eq:gBergomi_dynamics} on~$\Tt$; 
    \item back out the Brownian path from~$\Vv$ to obtain $\{B_{t_i}\}_{i=0}^{n_T-1}$;
    \item compute $\{B^{\bot}_{t_i}\}_{i=0}^{n_T-1}$, where $B^{\bot} \overset{(d)}{=} \mathcal{N}(0,\frac{1}{n_T})$ is an independent Gaussian sample and correlate the two Brownian motions via $W_{t_{i}} - W_{t_{i-1}} = \rho B_{t_{i-1}} + \sqrt{1-\rho^2} B^{\bot}_{t_{i-1}}$;
    \item simulate $X := \log(S)$ using a forward Euler scheme
    $$X_{t_{i+1}} = X_{t_{i}} - \half V_{t_i}(t_{i+1} - t_i) + \sqrt{V_{t_i}} (W_{t_{i+1}}-W_{t_{i}}), \quad\text{for } i=0,\ldots, n_T -1;$$
    \item compute the expectation by averaging the payoff of all paths.
\end{enumerate}
}\label{algo:gBergomisim}

\begin{remark}
    This is not the most  effective way to price since Cholesky is notoriously slow.
    One may instead consider a hybrid scheme approach (as done in~\cite{jacquier2018vix} using~\cite{Bennedsen_2017}) or a Markovian approximation, as explained in Appendix~\ref{apx:markovian_approximation}.
\end{remark}

For intuition about~$\beta$, consider Scenario~1 in~\eqref{eq:Scenarios} with $(H, \eta, \rho) = (0.07, 1.23,-0.9)$ with~$10^5$ Monte Carlo simulations
and $\beta \in \{0.8, 0.9, 1\}$. 
The prices of Call options and implied volatilities on~$S$ with maturity $T = 1$ can be observed in Figure~\ref{fig:ivgBergomi}. 

\begin{figure}[ht]
\centering
    \subfloat{\includegraphics[width=0.5\linewidth]{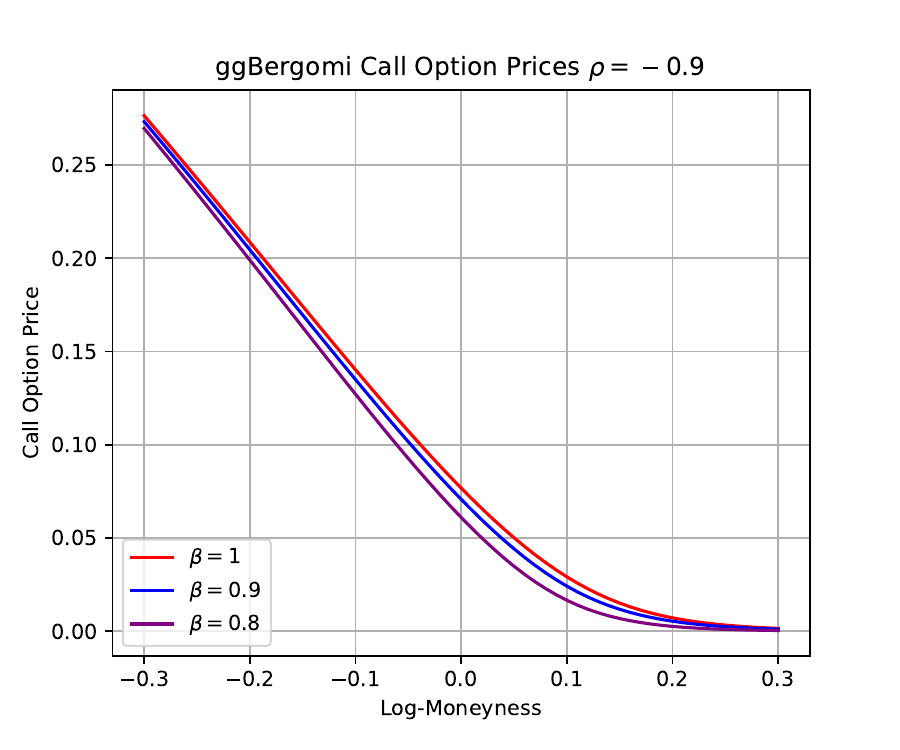}}
    \subfloat{\includegraphics[width=0.5\linewidth]{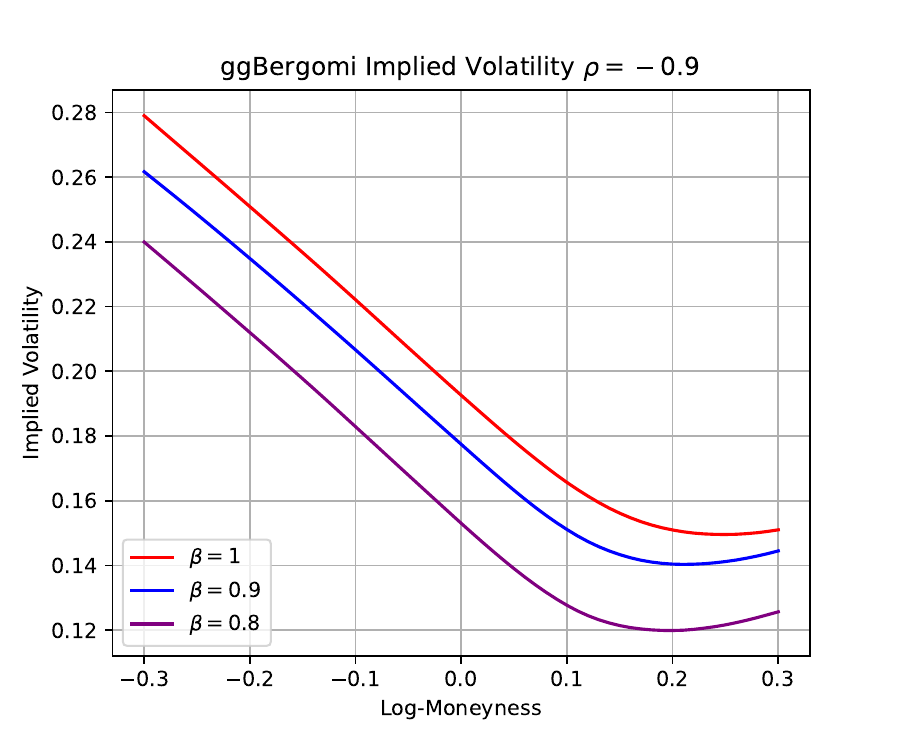}}
    \caption{SPX Call option prices and implied volatilities.}
\label{fig:ivgBergomi}
\end{figure}

\begin{figure}[ht]
\centering
    \subfloat{\includegraphics[width=0.5\linewidth]{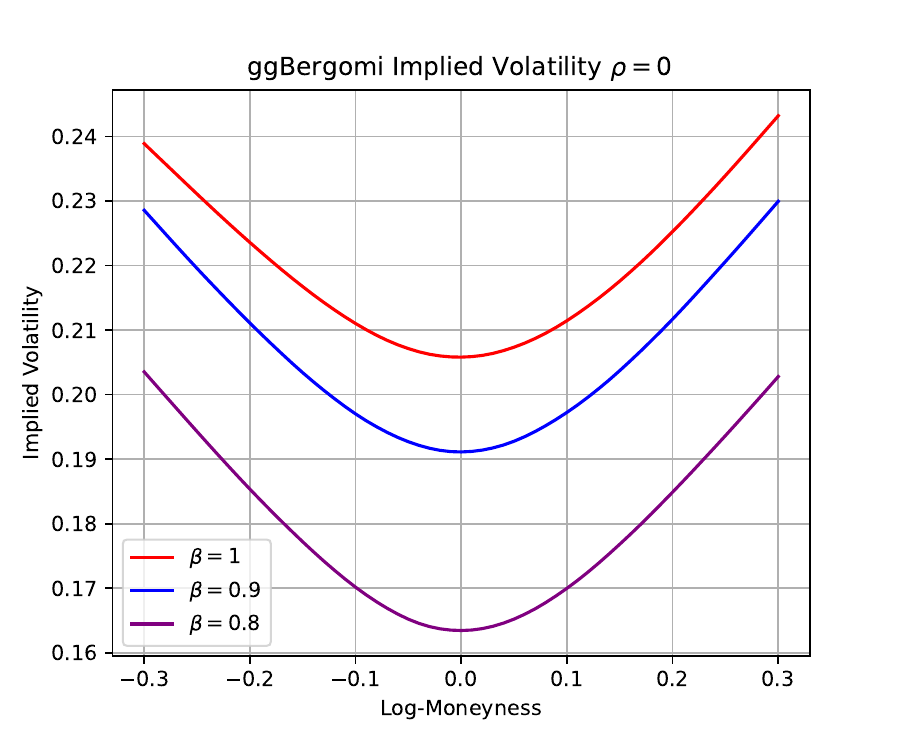}}
    \subfloat{\includegraphics[width=0.5\linewidth]{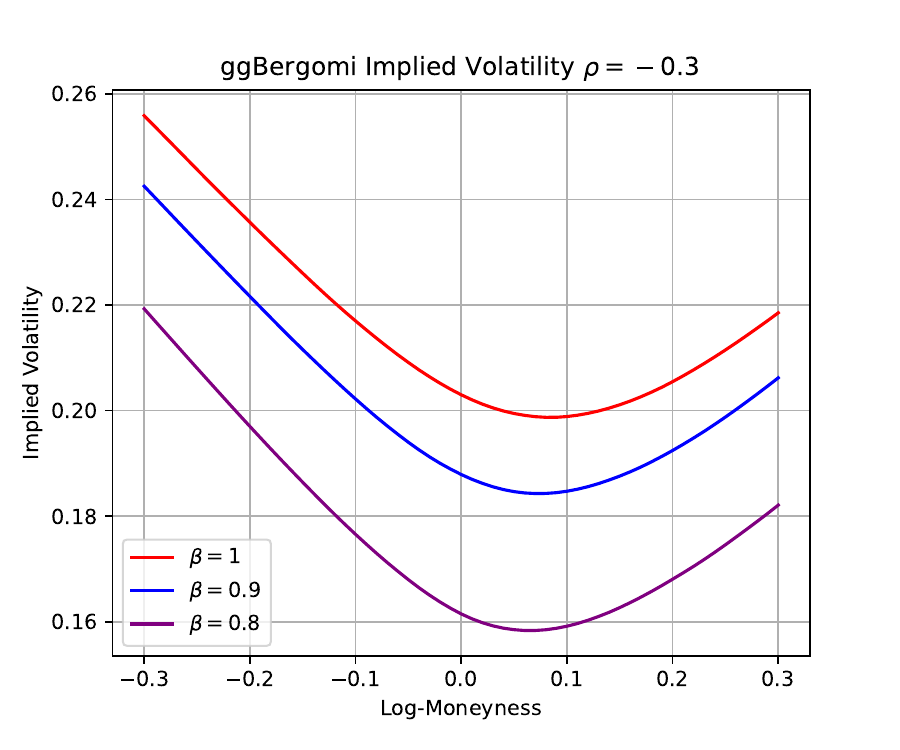}}
    \caption{SPX implied volatilities.}
\end{figure}


\section{Asymptotics of the SPX \& VIX smiles}\label{sec:asymptotics}
We now develop asymptotic closed-form expressions for the implied volatility smile of the SPX and the VIX,
in particular for the ATM short-time level, skew, and curvature, 
following the approach in~\cite[Chapter~6-8]{Alos2021MalliavinFinance}.
We proceed as in~\cite{Jacquier2021RoughOptions} and consider a square-integrable strictly positive process $\left\{A_t\right\}_{t \in [0,T]}$, adapted to the filtration~$\FF$ introduced in Section~\ref{sec:gBergomi}. 
We further introduce the $\FF$-martingale conditional expectation process
\[
\Ffr_{t,T}^A\coloneqq\EE\left[A_T|\Ff_t\right], \quad \text { for all } t \in [0,T],
\]
which is nothing else than a time $t$-price of a Future contract on~$A_T$. 
We use~$\DD$ to denote the domain of the Malliavin operators~$\Dr^i$ for $i\in\{1,2\}$ with respect to the Brownian motion~$W^i$ 
and write $\LL^2 \coloneqq L^2([0, T]; \DD)$
(and refer the interested reader to~\cite{DavidNualart2006TheTopics} for an in-depth introduction to Malliavin calculus). Assuming $A_T \in \DD$, the Clark-Ocone formula~\cite[Theorem~1.3.14]{DavidNualart2006TheTopics} reads, 
for each $t \in [0, T]$,
\[
\Ffr_{t,T}^A=\EE\left[\Ffr_{t,T}^A\right] + \sum_{i=1}^2\int_0^t \ff_{s}^{i}(A_T) \D W_s^i,
\]
where $W^1\coloneqq W$ and $W^2\coloneqq B$ and $\ff_{s}^{i}(A_T)\coloneqq \EE[\Dr^i_s A_T | \Ff_s]$. 
Now, since $\Ffr_{t,T}^A$ is an $\FF$-martingale, this can be further written as
\[
\Ffr_{t,T}^A = \Ffr_{0,T}^A + \sum_{i=1}^N\int_0^t \Ffr_{s,T}^A \phi_s^i \D W_s^i, \quad \text{ with } \quad \phi_s^i \coloneqq \frac{\ff_{s}^{i}(A_T)}{\Ffr_{s,T}^A},
\]
which is well defined since~$A$ is strictly positive,
and hence so is ${\Ffr_{s,T}^A}$. Finally, for $\boldsymbol{\phi}\coloneqq (\phi_1, \dots, \phi_N) \in \LL^{2}$, 
we define (this will be required in Proposition~\ref{prop:gBergomi_VIX_asym})
\begin{equation}
    u_t \coloneqq \sqrt{\frac{1}{T-t} \int_t^T \|\boldsymbol{\phi}_s\|^2 \D s},  \quad\text{ for } t\in[0,T).
\end{equation}
Since $\Ffr^A_{\cdot, T}$ is a martingale, derivative contracts on this process do not exhibit arbitrage under the pricing measure, thus the fair price of a European Call with maturity~$T$ and log-strike $k\in \RR$ can be written as
\[
\Call_t(k) \coloneqq \EE\left[\left(\Ffr_{T,T}^A - \E^k\right)^+|\Ff_t\right] = \EE\left[\left(A_T - \E^k\right)^+|\Ff_t\right].
\]
Denote by $\BS(t, x, k, \sigma)$ the Black-Scholes price of a European Call option at time~${t\in[0, T]}$, with maturity $T$, log-price $x, \log$-strike~$k$ and volatility~$\sigma$, so that
$$
\BS(t, x, k, \sigma)= \begin{cases}\E^x \Nn\left(d_{+}(x, k, \sigma)\right)-\E^k \Nn\left(d_{-}(x, k, \sigma)\right), & \text {if } \sigma \sqrt{T-t}>0, \\ \left(\E^x-\E^k\right)^{+}, & \text {if } \sigma \sqrt{T-t}=0,\end{cases}
$$
with $d_{\pm}(x, k, \sigma):=\frac{x-k}{\sigma \sqrt{T-t}} \pm \frac{\sigma \sqrt{T-t}}{2}$ and~$\Nn$ the Gaussian cumulative distribution function.
\begin{definition}\
\begin{enumerate}[(i)]
    \item For $k \in \RR$, the implied volatility $\Ii_T(k)$ is the unique non-negative solution to $\Call_0(k)=$ $\BS(0, \log\Ffr_0^T, k, \Ii_T(k))$; we drop~$k$ in the at-the-money case $k=\log\Ffr_0^T$.
    \item The at-the-money implied skew $\Ss$ and curvature $\Cc$ at time zero are defined as
    $$
    \Ss_T\coloneqq\left.\partial_k \Ii_T(k)\right|_{k=\log\Ffr_0^T} \quad \text { and } \quad \Cc_T:=\left.\partial_k^2 \Ii_T(k)\right|_{k=\log\Ffr_0^T}
    $$
\end{enumerate}
\end{definition}
Using the decomposition property in Lemma~\ref{lem:gBm_decomposition} we can rewrite the variance process of the generalised grey Brownian motion in terms of the Riemann-Liouville fBm:
$$
    V_t = \frac{\xi_0(t)}{\Ef_{\beta}(\cff t^{2H})}\exp\left\{\eta \cf \sqrt{Y_{\beta}} \int_0^t (t-s)^{H_-}\D B_s\right\} 
$$
where $B$ is the standard Brownian motion related to~$B^{H}$. 
Since the first integral term is $\FfB_t$-measurable, its Malliavin derivative with respect to~$B$ is null for all $t\geq 0$. 
We thus proceed as in~\cite[Section~5.6]{Alos2021MalliavinFinance} and  compute the following Malliavin derivatives with respect to~$B$, for $s\leq u\leq r\leq T\leq t$: 
\begin{equation}\label{eq:malliavin_gBergomi}
\begin{array}{r@{\;}l}
\Dr_r V_t & = \displaystyle \eta \cf \sqrt{Y_\beta} V_t (t-r)^{\Hm}, \\ 
\Dr_u \Dr_r V_t & = \displaystyle (\eta \cf)^2 Y_\beta V_t (t-r)^{\Hm}(t-u)^{\Hm}, \\
\Dr_s \Dr_u \Dr_r V_t & = \displaystyle (\eta \cf)^3 Y_\beta^{\frac{3}{2}} V_t (t-r)^{\Hm}(t-u)^{\Hm} (t-s)^{\Hm}.
\end{array}
\end{equation}
\subsection{Small-time VIX asymptotics}\label{sec:VIX_asymptotics}
Since $V\in\DD$ by~\eqref{eq:malliavin_gBergomi}, then $\VIX_T \in \DD$ and the Clark-Ocone formula~\cite[Theorem~1.3.14]{DavidNualart2006TheTopics} reads, for each $t \in [0, T]$,
$$
\Ffr_{t,T}^{\VIX}=\EE\left[\Ffr_{t,T}^{\VIX}\right] + \int_0^t \ff_{s}(\VIX_T) \D B_s,
$$
where $\ff_{s}(\VIX_T)\coloneqq \EE\left[\Dr_s\VIX_T \mid \FfB_s\right]$. 
Since $\Ffr_{s,T}^{\VIX} \neq 0$ almost surely, let $\phi_s:=\ff_{s}(\VIX_T) / \Ffr_{s,T}^{\VIX}$ and we arrive at the desired setting with $N=1$.

\begin{proposition}\label{prop:gBergomi_VIX_asym}
The following behaviours hold
with $J_1,J_2,J_3$ in~\eqref{eq:J1}-\eqref{eq:J2}-\eqref{eq:J3}:
\begin{equation}
\begin{aligned}
\lim _{T \downarrow 0} \Ii_T & = \frac{J_1}{2 \Delta}\frac{1}{\VIX_0^2}, & \text{ if } H \in\left(0, \half\right), \\
\lim _{T \downarrow 0} \Ss_T & =
\frac{J_2}{2J_1} - \frac{J_1}{2 \Delta}\frac{1}{\VIX_0^2}, & \text{ if } H \in\left(0, \half\right), \\
\lim _{T \downarrow 0} 
T^{\half - 3H} \Cc_T &= \frac{2 \Delta\VIX_0^2}{3J_1^2}
\lim _{T \downarrow 0}
\left\{T^{\half - 3H} J_3(T)\right\}, & \text{ if } H \in\left(0, \frac{1}{6}\right).
\end{aligned}
\end{equation}
\end{proposition}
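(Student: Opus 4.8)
The plan is to apply the Al\`os-type decomposition and implied-volatility expansion of~\cite[Chapters~6--8]{Alos2021MalliavinFinance} to the VIX Future price $\Ffr_{t,T}^{\VIX}$, along the lines already used for rough Bergomi in~\cite{Jacquier2021RoughOptions}. Since $\VIX_T\in\DD$ --- which follows from~\eqref{eq:malliavin_gBergomi} together with the representation of Proposition~\ref{prop:vixdynamics} --- the Future price is a strictly positive $\FF^B$-martingale with $\D\Ffr_{t,T}^{\VIX}=\Ffr_{t,T}^{\VIX}\phi_t\,\D B_t$ and forward-average volatility $u_t$, and the Clark--Ocone representation carries only the $B$-integral (so $N=1$). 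First I would write the ATM Call price as the Black--Scholes price evaluated at $u_0$ plus correction terms built from $\phi$ and its iterated Malliavin derivatives, and then invert the Black--Scholes map at the money to produce the three expansions for $\Ii_T$, $\Ss_T$ and $\Cc_T$.

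The functionals $J_1,J_2,J_3$ collect, in increasing Malliavin order, the contributions governed respectively by the first, second and third Malliavin derivatives of $\VIX_T$ --- which is exactly why~\eqref{eq:malliavin_gBergomi} records $\Dr V$, $\Dr\Dr V$ and $\Dr\Dr\Dr V$. I would compute each by differentiating $\VIX_T=\sqrt{\VIX_T^2}$ through Proposition~\ref{prop:vixdynamics}, using $\Dr_r\Vv_s^T=(s-r)^{\Hm}$ and the fact that $\zeta_T$ is an entire power series in $\Vv_s^T$; the chain rule (including the derivative of the square root) then reduces every term to deterministic kernel integrals against $\xi_0$ and to moments of $Y_\beta$ evaluated via~\eqref{eq:MWrightMom}. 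The level is the cleanest: as $T\downarrow0$ one has $\Ffr_{0,T}^{\VIX}\to\VIX_0$ and $\phi$ stays bounded near the origin, so the ATM implied volatility converges to the spot volatility $\phi_0=\EE[\Dr_0\VIX_T]/\VIX_0=\EE[\Dr_0\VIX_T^2]/(2\VIX_0^2)$ of the VIX Future, which is precisely $\tfrac{J_1}{2\Delta\VIX_0^2}$; the skew then emerges as the first-order term $\tfrac{J_2}{2J_1}$ corrected by the level, and the curvature as the pure second-order term.

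The remaining and most substantial work is the small-$T$ analysis of these kernel integrals. Each Malliavin derivative injects a factor $(s-r)^{\Hm}$, and after rescaling the time variables by $T$ the nested integrals turn into explicit powers of $T$. The level and skew involve at most two such factors and stay finite for every $H\in(0,\tfrac12)$, whereas the curvature couples all three kernels through $\Dr_s\Dr_u\Dr_r V_t$ in~\eqref{eq:malliavin_gBergomi}; the resulting triply-singular nested integral, concentrated near the left edge $s\approx T$ of the averaging window, scales like $T^{3H-\frac12}$, which is exactly why one renormalises by $T^{\frac12-3H}$.

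The main obstacle is precisely this curvature limit. Showing that $\lim_{T\downarrow0}T^{\frac12-3H}J_3(T)$ exists and is finite and nonzero requires uniform control of the triply-singular nested integrals --- products of three kernels $(s'-r)^{\Hm}$ with the window variable $s'\in[T,T+\Delta]$ and the time variables ranging over $[0,T]$ --- together with the conditional expectations $\EE[\,\cdot\mid\FfB_s]$ entering $\phi$ and the control of the tail of the power series defining $\zeta_T$. The threshold $H=\tfrac16$ is the value at which this $T^{3H-\frac12}$ contribution ceases to dominate the competing second-order terms, so the delicate point --- where I expect most of the technical effort to lie --- is to prove that for $H\in(0,\tfrac16)$ the renormalised curvature converges to the stated constant rather than to $0$ or $\infty$, which should follow from a dominated-convergence argument after rescaling.
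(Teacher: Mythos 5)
You have the right frame: the paper also works inside the Al\`os--Clark--Ocone expansion with $N=1$ and reduces all three limits to the quantities $J_1,J_2,J_3$ --- in fact it does not re-derive the expansion at all, but invokes \cite[Proposition~1]{Jacquier2021RoughOptions} as a black box. The genuine gap in your proposal is that this proposition (or any re-derivation with controlled error terms) is \emph{conditional} on nondegeneracy hypotheses you never verify: an almost-sure domination $(\Ffr^{\VIX}_{t,T})^{-1}\leq R$ with $R\in L^p$ for all $p>1$, and bounds on the negative moments $\EE[u_s^{-p}]$ uniform in $s$ and $T$. Your only remark in this direction, that ``$\phi$ stays bounded near the origin'', is neither the statement that is needed nor justified. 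In the gBergomi model these inverse-moment bounds are exactly where the model-specific work lies (the paper's Lemmas~\ref{lem:AssOneAsy} and~\ref{lem:AssThreeAsy}): the random factor $\sqrt{Y_\beta}$ prevents one from recycling the log-normal estimates of rough Bergomi verbatim, and the paper has to combine exp-log and Jensen estimates with the fBm prediction formula of~\cite{Fink2013} and the kernel representation of~\cite{Pipiras2001}, then control the resulting exponential moments in $Y_\beta$ via the Laplace transform~\eqref{eq:LaplaceM-Wright}, i.e.\ the finiteness of $\Ef_\beta(-\lambda)$. Without some version of these lemmas, none of the three limits is established.

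Conversely, you place the technical weight in the wrong spot. The curvature is not a delicate dominated-convergence problem over ``triply-singular nested integrals'': once the black box applies, $J_3(T)=\int_T^{T+\Delta}\EE\left[\Dr_0\Dr_0\Dr_0 V_r\right]\D r$ involves a \emph{single} time integral, and~\eqref{eq:malliavin_gBergomi} makes the integrand $(\eta\cf)^3 r^{3\Hm}\,\EE[Y_\beta^{3/2}V_r]$; conditioning on $\sigma(Y_\beta)$ and applying the M-Wright moment formula~\eqref{eq:MWrightMom} gives the closed form~\eqref{eq:J3}, from which $\lim_{T\downarrow 0}T^{\half-3H}J_3(T)$ is read off directly. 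The constraint $H<\frac16$ is simply the non-integrability of $r^{3H-\frac{3}{2}}$ at the origin, which makes $T^{3H-\frac12}$ the exact blow-up rate of $J_3(T)$ --- not a ``competition with second-order terms''. Likewise, your plan to chain-rule three times through $\VIX_T=\sqrt{\VIX_T^2}$ is unnecessary on the paper's route, since the square-root bookkeeping is internal to the cited proposition and the $J_i$ involve only the Malliavin derivatives of $V$; if you insist on re-deriving the expansion, you must additionally show that all cross terms generated by differentiating the square root are of higher order, which brings you straight back to the inverse-moment bounds you skipped.
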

\begin{proof}
From~\cite[Proposition~1]{Jacquier2021RoughOptions},
we know that the result holds with
$$
J_1\coloneqq\int_0^{\Delta} \EE\left[\Dr_0 V_r\right] \D r, \quad J_2\coloneqq\int_0^{\Delta} \EE\left[\Dr_0 \Dr_0 V_r\right] \D r \quad J_3(T)\coloneqq\int_T^{T+\Delta} \EE\left[\Dr_0 \Dr_0 \Dr_0 V_r\right] \D r.
$$
provided that the following assumptions, given in the above reference, hold:
for ${H\in(0, \half)}$, there exists $R \in L^p$ for all $p>1$ such that,
for all $t \leq s \leq u \leq T \leq r$,
\begin{enumerate}[(i)]
\item $(\Ffr^{\VIX}_{t,T})^{-1} \leq R$ almost surely;
\item almost surely,
\begin{enumerate}[a)]
\item $V_r \leq R$, 
\item $\Dr_u V_r \leq R (r-u)^{\Hm}$,
\item $\Dr_s \Dr_u V_r \leq R(r-s)^{\Hm}(r-u)^{\Hm}$,
\item $\Dr_t \Dr_s \Dr_u V_r \leq R(r-t)^{\Hm}(r-s)^{\Hm}(r-u)^{\Hm}$.
\end{enumerate}
\item $\EE\left[u_s^{-p}\right]$ is uniformly bounded in~$s$ and~$T$ for all $p>1$;
\item $u \mapsto \Dr_u V_r$, $s \mapsto \Dr_s \Dr_u V_r$, $t \mapsto \Dr_t \Dr_s \Dr_u V_r$ are almost surely continuous around zero.
\end{enumerate}
In the gBergomi model, 
(i) and~(iii) hold by Lemmas~\ref{lem:AssOneAsy} and~\ref{lem:AssThreeAsy}.
From~\eqref{eq:malliavin_gBergomi},
the choice~$R:=\mathrm{esssup}_{r}\sum_{k=0}^3|\widetilde{\eta}|^k V_r$ is 
such that $R\in L^p$
using~\eqref{eq:ggBmMGF} and~\eqref{eq:MWrightMom}, thus proving~(ii).
The continuity statement in~(iv) follows by the Malliavin derivatives in~\eqref{eq:malliavin_gBergomi}.
We now derive explicit expressions for $J_1, J_2$ and $J_3(T)$:
\begin{align}\label{eq:J1}
    J_1 &=\int_0^{\Delta} \EE\left[\Dr_0 V_r\right] \D r = \int_0^{\Delta} \EE\left[\eta \cf \sqrt{Y_\beta} V_r r^{\Hm}\right] \D r \nonumber\\
    &= \xi_0 \eta \cf \int_0^{\Delta} \EE\left[ \sqrt{Y_\beta}\EE\left[ \Eeb\left(\eta \cf \sqrt{Y_\beta} \Bh_r\right) \middle \vert  \sigma(Y_\beta)\right]\right] r^{\Hm} \D r \nonumber\\
    &= \xi_0 \eta \cf \int_0^{\Delta} \EE\left[ \sqrt{Y_\beta}\EE\left[ \exp\left\{\eta \cf\sqrt{Y_\beta} \Bh_r - \cff Y_{\beta}r^{2H}\right\} \middle \vert  \sigma(Y_\beta)\right]\right]
    r^{\Hm} \D r. \nonumber\\
    &= \xi_0 \eta \cf \frac{\sqrt{\pi}}{2\Gamma(1+\frac12 \beta)}\frac{\Delta^{H + \half}}{H + \half} ,
\end{align}
where $\VIX_0 = \sqrt{\xi_0}$. Similar calculations yield 
\begin{align}
    J_2 &= \xi_0 \eta^2 \cf^2 \frac{\Delta^{2H}}{\Gamma(1 + \beta)2H}, \label{eq:J2}\\
    J_3(T) &= \xi_0 \eta^3 \cf^3 \frac{3\sqrt{\pi}}{4\Gamma(1+ \frac{3}{2}\beta)(3H - \half)} \left(\left(T + \Delta \right)^{3H - \frac{1}{2}} - T^{3H - \frac{1}{2}}\right).\label{eq:J3}
\end{align}
To conclude, we have that if $H < \frac16$, then 
\begin{equation*}
    \lim_{T \downarrow 0} T^{\half -3H} J_3
    = - \frac{3\xi_0 \eta^3 \cf^3\sqrt{\pi}}{4\Gamma(1+ \frac{3}{2}\beta)(3H - \half)}\,.
\end{equation*}
\vspace*{-\baselineskip}
\end{proof}

\begin{remark}
When $\beta=1$, then
$J_1 = \xi_0 \eta \cf \frac{\Delta^{\Hp}}{\Hp}$,
$J_2 = \xi_0 \eta^2 \cf^2 \frac{\Delta^{2H}}{2H}$ and, 
for $H \in\left(0, \half\right)$,
$$
\lim _{T \downarrow 0} \Ss_T =
\frac{J_2}{2J_1} - \frac{J_1}{2\Delta\VIX_0^2}
 = \frac{\eta \Delta^{\Hm}}{2\Gamma(\Hp)}
 \left(\frac{\Hp }{2H} - \frac{1}{\Hp}\right).
$$
\end{remark}
\begin{lemma}\label{lem:AssOneAsy}
    In the gBergomi model~\eqref{eq:gBergomi_dynamics} with $0 \leq T_1<T_2$,
    \[
    \EE\left[\sup _{u \leq T_1}\left(\EE\left[\frac{1}{T_2-T_1} \int_{T_1}^{T_2} V_r \D r\middle\vert \FfB_u\right]\right)^{-p}\right]
    \]
    is finite for all $p>1$. In particular, $1 / \Ffr^{\VIX}$ is dominated in $L^p$.
\end{lemma}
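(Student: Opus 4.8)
The plan is to reduce the statement to a single-time negative-moment estimate via a martingale argument, and then to control that estimate by integrating out the Gaussian and the $\M$-Wright randomness separately. Write $A_u := \EE\left[\frac{1}{T_2 - T_1}\int_{T_1}^{T_2} V_r \D r \,\middle\vert\, \FfB_u\right]$ for $u\in[0,T_1]$. Since $V>0$, the process $(A_u)_{u\in[0,T_1]}$ is a strictly positive martingale on the Brownian filtration $\FFB$, hence admits a continuous modification (which also settles measurability of the supremum). As $x\mapsto x^{-1}$ is convex on $(0,\infty)$, conditional Jensen shows $(A_u^{-1})_{u\in[0,T_1]}$ is a non-negative submartingale, and Doob's $L^p$ maximal inequality gives, for $p>1$,
$$
\EE\left[\sup_{u\leq T_1} A_u^{-p}\right] = \EE\left[\left(\sup_{u\leq T_1}A_u^{-1}\right)^p\right] \leq \left(\frac{p}{p-1}\right)^p \EE\left[A_{T_1}^{-p}\right].
$$
It therefore suffices to prove $\EE[A_{T_1}^{-p}]<\infty$ for every $p>1$; the same bound yields the domination of $1/\Ffr^{\VIX}$ in $L^p$.

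Next I would compute $A_{T_1}$ explicitly. Using $V_r = \frac{\xi_0(r)}{\Ef_{\beta}(\cff r^{2H})}\exp\{\eta\cf\sqrt{Y_\beta}\,\Vv_r\}$ from Lemma~\ref{lem:gBm_decomposition} and splitting $\Vv_r = \Vv_r^{T_1} + \Vv_{r,T_1}$ as in~\eqref{eq:Vprocesses}, the increment $\Vv_{r,T_1}$ is centred Gaussian, independent of $\FfB_{T_1}$, with variance $\frac{1}{2H}(r-T_1)^{2H}$, whereas $Y_\beta$ and $\Vv_r^{T_1}$ are $\FfB_{T_1}$-measurable. Integrating out $\Vv_{r,T_1}$ (and using $\cff = \frac{\eta^2\cf^2}{4H}$) yields, consistently with Proposition~\ref{prop:forwardvarcurve},
$$
A_{T_1} = \frac{1}{T_2-T_1}\int_{T_1}^{T_2} h(r)\D r, \qquad h(r) := \frac{\xi_0(r)}{\Ef_{\beta}(\cff r^{2H})}\exp\left\{\eta\cf\sqrt{Y_\beta}\,\Vv_r^{T_1} + \cff Y_\beta (r-T_1)^{2H}\right\}.
$$

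Finally, I would bound the negative moment. By Jensen's inequality applied to the average over $r$ (convexity of $x\mapsto x^{-p}$), $A_{T_1}^{-p} \leq \frac{1}{T_2-T_1}\int_{T_1}^{T_2} h(r)^{-p}\D r$, so that $\EE[A_{T_1}^{-p}] \leq \frac{1}{T_2-T_1}\int_{T_1}^{T_2}\EE[h(r)^{-p}]\D r$. Conditioning on $Y_\beta$ and integrating out the centred Gaussian $\Vv_r^{T_1}$, whose variance $\sigma_r^2 = \frac{1}{2H}(r^{2H} - (r-T_1)^{2H})$ is bounded on $[T_1,T_2]$, leaves $\EE[\exp\{c_r Y_\beta\}]$ with $c_r := \frac{p^2(\eta\cf)^2}{2}\sigma_r^2 - p\cff(r-T_1)^{2H}$. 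The decisive point is that summing the moment formula~\eqref{eq:MWrightMom} shows the $\M$-Wright law has an everywhere-finite moment-generating function $\EE[\E^{sY_\beta}] = \Ef_{\beta}(s)$ (consistent with~\eqref{eq:ggBmMGF}), whence $\EE[h(r)^{-p}] = (\xi_0(r))^{-p}\,\Ef_{\beta}(\cff r^{2H})^p\,\Ef_{\beta}(c_r) < \infty$. Continuity of all factors on the compact interval $[T_1,T_2]$, with no singularity arising since $(r-T_1)^{2H}\to 0$ and $\sigma_{T_1}^2 = \frac{T_1^{2H}}{2H}$, then makes the $r$-integral finite and closes the argument.

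The main obstacle is precisely this last step: controlling negative moments of $A_{T_1}$ in the presence of the random scaling $\sqrt{Y_\beta}$ multiplying the Gaussian. Were $Y_\beta$ constant, this would be a routine log-Normal negative-moment bound; but integrating the Gaussian out first produces a term that is genuinely exponential in $Y_\beta$, and its finiteness hinges on the $\M$-Wright distribution having finite exponential moments of all orders, encoded by the entire Mittag-Leffler function. I would double-check that the sign and magnitude of $c_r$ are immaterial, since $\Ef_{\beta}$ is entire and hence $\Ef_{\beta}(c_r)$ is finite for every real $c_r$.
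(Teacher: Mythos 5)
Your proof is correct, but it takes a genuinely different route from the paper's. You dispose of the supremum via Doob: since $A_u := \EE[\frac{1}{T_2-T_1}\int_{T_1}^{T_2}V_r\D r\,\vert\,\FfB_u]$ is a strictly positive closed martingale for the initially enlarged filtration $(\FfB_u)_u$, conditional Jensen makes $A^{-1}$ a non-negative submartingale and the $L^p$ maximal inequality reduces everything to $\EE[A_{T_1}^{-p}]$, which you compute in closed form and bound by pushing the power inside the time integral (Jensen again), integrating out the Gaussian $\Vv_r^{T_1}$ conditionally on $Y_\beta$, and finally using that the $\M$-Wright law has an entire moment-generating function, $\EE[\E^{s Y_\beta}]=\Ef_\beta(s)$ for all real $s$. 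The paper instead keeps the supremum over $u$ throughout: it applies the exp-log identity and Jensen to bound $A_u^{-p}$ by an exponential of the integrated conditional log-variance, evaluates the conditional law of the fractional exponential via the prediction formula of Fink et al.\ (the kernel $\Psi$ in~\eqref{eq:Psi} and the Pipiras--Taqqu representation~\eqref{eq:integralfBmFubini}), controls the resulting Gaussian supremum by exponential-moment estimates for Gaussian processes, and concludes with the Laplace transform~\eqref{eq:LaplaceM-Wright} at \emph{negative} arguments only. Your argument buys elementarity and explicit constants --- no prediction formula, no Gaussian sup estimates, and the clean bound $\EE[h(r)^{-p}]=\xi_0(r)^{-p}\,\Ef_\beta(\cff r^{2H})^p\,\Ef_\beta(c_r)$ --- at the price of needing positive exponential moments of $Y_\beta$, which you correctly identify as the decisive point and which is harmless: for $s\geq 0$ the identity $\EE[\E^{sY_\beta}]=\Ef_\beta(s)$ follows from the moment formula~\eqref{eq:MWrightMom} by monotone convergence, and $\Ef_\beta$ is entire. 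Two small points to pin down for completeness: justify the continuous modification of $A$ on the enlarged filtration, either by conditioning on the independent $Y_\beta$ (so the predictable representation property holds conditionally) or directly from the explicit formula $A_u=\frac{1}{T_2-T_1}\int_{T_1}^{T_2}\frac{\xi_0(r)}{\Ef_\beta(\cff r^{2H})}\exp\{\eta\cf\sqrt{Y_\beta}\,\Vv_r^u+\cff Y_\beta(r-u)^{2H}\}\D r$, whose integrand is almost surely continuous in $u$; and note that the finiteness of the final $r$-integral implicitly assumes $\xi_0(\cdot)$ continuous and strictly positive on compacts, the same standing assumption the paper makes (it treats $\xi_0$ as a constant there).
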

\begin{proof}
Similarly to~\cite[Lemma~6.14]{Jacquier2021RoughOptions}, 
by the exp-log identity and Jensen's inequality,
\begin{align*}
 &     \EE\left[\frac{1}{T_2-T_1} \int_{T_1}^{T_2} V_r \D r \, \middle\vert \, \FfB_u\right]^{-p}\\
 & = \exp \left\{-p \log \left(\frac{1}{T_2-T_1} \int_{T_1}^{T_2} V_r \D r\right)\right\} \\
    &\leq \exp \left\{-\frac{p}{T_2-T_1} \int_{T_1}^{T_2} \log \left(\EE\left[V_r\middle\vert \FfB_u\right]\right) \D r\right\} \\
    &= \exp \left\{-\frac{p}{T_2-T_1} \int_{T_1}^{T_2} \log \left(\frac{\xi_0}{\Ef_{\beta}\left(\frac{\eta^2 r^{2H}}{2}\right)}\EE\left[\exp\left\{\eta\sqrt{Y_{\beta}} \Ba_r\right\}\middle\vert \FfB_u\right]\right) \D r\right\} \\
    & \leq \exp \Bigg\{-\frac{p}{T_2-T_1} \int_{T_1}^{T_2} \left[\log(\xi_0) + C\left(\frac{\eta^2 r^{2H}}{2}\right) + \log \left(\EE\left[\exp\left\{\eta\sqrt{Y_{\beta}} \Ba_r\right\}\middle\vert \FfB_u\right]\right)\right] \D r\Bigg\},
    \end{align*}
    for some $C>0$, where the estimate~\cite[Theorem~9]{Jia2019SomeFunction} was used in the last inequality. Since the first two terms in the integral are clearly finite, we now turn our attention to the conditional expectation. By~\cite[Theorem 3.1]{Fink2013} we have
\begin{align*}   &\int_{T_1}^{T_2}\log\left(\EE\left[\exp(\eta\sqrt{y} B_r^{H})\middle\vert \Ff_u^B, Y_\beta = y\right]\right) \D r \\
    & = \int_{T_1}^{T_2} \left[\eta\sqrt{y}\left(B_u^{H} + \int_0^u \Psi(u,r,v)\D B_v^{H}\right) + \frac{\eta^2 y}{2}\left(|r-u|^{2H} - \EE\left[\left|\int_0^u \Psi(u,r,v)\D B_v^{H}\right|^2\right]\right)\right] \D r
\end{align*}
for~$y\geq 0$ and
\begin{equation}\label{eq:Psi}
\Psi(s,t,u) = \frac{\sin \pi \Hm}{\pi} u^{-\Hm}(s-u)^{-\Hm} \int_s^t \frac{z^{\Hm}(z-s)^{\Hm}}{z-u} \D z.
\end{equation}
The second term can be written as an integral of the kernel function
$K(u,r,v)\coloneqq u^{\Hm} \int_0^r z^{\Hm} (z - v)^{\Hm - 1}\D z$
with respect to the related Brownian motion~\cite[Equation~(7.2)]{Pipiras2001}, so that by Fubini,
\begin{equation}\label{eq:integralfBmFubini}
\int_{T_1}^{T_2} \int_0^u \Psi(u,r,v)\D B_v^{H} \D r = \int_0^u \int_{T_1}^{T_2} K(u,r,v) \D r \D B_v \eqqcolon \overline{B}_u.
\end{equation}
Since the kernel~$K$ is integrable~\cite[Theorem~4.2]{Pipiras2001}, $\overline{B}_u$ is a Gaussian process
\begin{align*}
&\EE\left[\sup _{u \leq T_1} \exp \left(-\frac{p\eta}{T_2-T_1} Y_\beta \int_{T_1}^{T_2} \int_0^u \Psi(u,r,v)\D B_v^{H} \D r\right)\right] \\
& \hspace{4cm} = \EE\left[ \EE\left[\sup _{u \leq T_1} \exp \left(-\frac{p\eta}{T_2-T_1} Y_\beta \overline B_u\right)\middle\vert \sigma(Y_\beta)\right] \right] \\
& \hspace{4cm} \leq \EE\left[ \EE\left[ \exp \left(-\frac{p\eta}{T_2-T_1} Y_\beta \sup_{t\in[0,T]}|\overline B_t|\right)\middle\vert \sigma(Y_\beta)\right] \right] \\
& \hspace{4cm} \leq \EE\left[ \EE\left[ \exp \left(-\frac{p\eta}{T_2-T_1} Y_\beta M\right)\middle\vert \sigma(Y_\beta)\right] \right]
\end{align*}
for some $M>0$ by~\cite[Lemma~6.14]{Jacquier2021RoughOptions}. Using the Laplace transform of the M-Wright distribution~\eqref{eq:LaplaceM-Wright} we conclude that
\[
\EE\left[\sup _{u \leq T_1} \exp \left(-\frac{p\eta}{T_2-T_1} Y_\beta \int_{T_1}^{T_2} \int_0^u \Psi(u,r,v)\D B_v^{H} \D r\right)\right] \leq \Ee_\beta\left(- \frac{\eta p}{T_2-T_1} M\right) < \infty.
\]
Clearly $\int_{T_1}^{T_2}|r-u|^{2H} \D r$ is bounded from below for
$u \in [0,T_1]$ and the related term finite. 
For the last term,  the same representation as above and It\^o's isometry give
\begin{align*}
    \int_{T_1}^{T_2}\EE\left[\left|\int_0^u \Psi(u,r,v)\D B_v^{H}\right|^2\right] \D r 
    &= \int_{T_1}^{T_2} \EE\left[\left|\int_0^u K(u,r,v) \D B_v\right|^2\right] \D r \\
    &= \int_{T_1}^{T_2} \int_{0}^{u}|K(u,r,v)|^2\D v \D r,
\end{align*}
where~$K$ is twice locally integrable with respect to~$v$ by~\cite[Theorem~4.2]{Pipiras2001} for every $r\geq 0$, hence the outer integral over the compact $[T_1, T_2]$ is finite, completing the proof.
\end{proof}

\begin{lemma}\label{lem:AssThreeAsy}
    For any $p>1$, $\EE\left[u_s^{-p}\right]$ is uniformly bounded in $s\in [0,T]$.
\end{lemma}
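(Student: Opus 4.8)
The plan is to bound $u_s$ from below by a quantity whose negative moments I can control directly, exploiting that for the VIX one has $N=1$. I write $u_s^{-p} = (T-s)^{p/2}\bigl(\int_s^T \phi_r^2\,\D r\bigr)^{-p/2}$ with $\phi_r = \ff_r(\VIX_T)/\Ffr_{r,T}^{\VIX}$, so a pointwise lower bound on $\phi_r$ that is uniform in $r\in[s,T]$, in $s$, and in $T$ on compacts immediately controls $\EE[u_s^{-p}]$. The guiding observation is that the vol-of-vol randomisation enters $\phi_r$ only through the $\FfB_r$-measurable factor $\sqrt{Y_\beta}$, which should cancel between numerator and denominator up to this prefactor.

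First I would lower bound $\Dr_r\VIX_T$. Using that the Malliavin derivative commutes with $\EE[\cdot\mid\Ff_T]$ for $r\le T$, that $\Dr_r V_\rho = \eta\cf\sqrt{Y_\beta}\,V_\rho\,(\rho-r)^{\Hm}$ from~\eqref{eq:malliavin_gBergomi} with $Y_\beta$ being $\Ff_T$-measurable, and the chain rule $\Dr_r\VIX_T = \Dr_r\VIX_T^2/(2\VIX_T)$, differentiation of $\VIX_T^2 = \frac1\Delta\int_T^{T+\Delta}\xi_T(\rho)\,\D\rho$ gives
$$
\Dr_r\VIX_T = \frac{\eta\cf\sqrt{Y_\beta}}{2\Delta\,\VIX_T}\int_T^{T+\Delta}(\rho-r)^{\Hm}\xi_T(\rho)\,\D\rho.
$$
Since $\Hm<0$ and $\rho-r\le T+\Delta$ for $0\le r\le\rho\le T+\Delta$, I bound $(\rho-r)^{\Hm}\ge(T+\Delta)^{\Hm}$; as $\frac1\Delta\int_T^{T+\Delta}\xi_T(\rho)\,\D\rho = \VIX_T^2$, the factor $1/\VIX_T$ cancels and I obtain the clean $r$-free estimate $\Dr_r\VIX_T \ge c_T\sqrt{Y_\beta}\,\VIX_T$, with $c_T\coloneqq\tfrac12\eta\cf(T+\Delta)^{\Hm}$.

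I would then pass to $\phi_r$. As $\sqrt{Y_\beta}$ is $\FfB_r$-measurable it factors out of $\ff_r(\VIX_T)=\EE[\Dr_r\VIX_T\mid\FfB_r]$, while $\VIX_T$ is $\FfB_T$-measurable and independent of $W$, so $\Ffr_{r,T}^{\VIX}=\EE[\VIX_T\mid\Ff_r]=\EE[\VIX_T\mid\FfB_r]$. Hence
$$
\ff_r(\VIX_T)\ge c_T\sqrt{Y_\beta}\,\EE[\VIX_T\mid\FfB_r] = c_T\sqrt{Y_\beta}\,\Ffr_{r,T}^{\VIX},
$$
so that $\phi_r\ge c_T\sqrt{Y_\beta}$. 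Since $c_T$ is deterministic and bounded away from $0$ for $T$ on a compact, integrating gives $u_s\ge c_T\sqrt{Y_\beta}$ and therefore $\EE[u_s^{-p}]\le c_T^{-p}\,\EE[Y_\beta^{-p/2}]$, uniformly in $s$ and in $T$, the negative moment being read off from~\eqref{eq:MWrightMom}.

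The main obstacle is precisely this final moment. By~\eqref{eq:MWrightMom}, $\EE[Y_\beta^{\kappa}]=\Gamma(1+\kappa)/\Gamma(1+\beta\kappa)$ is finite only for $\kappa>-1$, so the bound above is finite only for $p<2$. This restriction is not an artefact of crude bounding: the $\sqrt{Y_\beta}$ enters the numerator $\ff_r(\VIX_T)$ linearly and drops out of the denominator $\Ffr_{r,T}^{\VIX}$, and on $\{Y_\beta\to 0\}$ (with $B$ typical) all $\VIX$-quantities converge to strictly positive deterministic limits, so $\phi_r$ vanishes at the rate $\sqrt{Y_\beta}$. This degeneracy is a genuine feature of the grey extension --- for $\beta=1$ one has $Y_1\equiv1$ and all negative moments are trivially finite, recovering the rough Bergomi situation of~\cite{Jacquier2021RoughOptions}. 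I would therefore either restrict the admissible range to $p\in(1,2)$, after verifying this suffices for the orders (level, skew, curvature) actually invoked in Proposition~\ref{prop:gBergomi_VIX_asym}, or supply a finer argument quantifying the suppression of the small-$Y_\beta$ contribution.
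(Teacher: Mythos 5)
Your positive argument is correct, and it is a genuinely different and considerably more elementary route than the paper's. You exploit the special structure of the VIX: the bound $(\rho-r)^{\Hm}\ge (T+\Delta)^{\Hm}$ together with $\frac{1}{\Delta}\int_T^{T+\Delta}\xi_T(\rho)\,\D\rho=\VIX_T^2$ makes the $1/\VIX_T$ from the chain rule cancel exactly, and since $\sqrt{Y_\beta}$ is $\FfB_r$-measurable it factors out of $\ff_r(\VIX_T)$, giving the pointwise bound $\phi_r\ge c_T\sqrt{Y_\beta}$ with an explicit deterministic constant, uniformly in $r,s,T$ on compacts. The paper instead follows \cite[Lemma~6.15]{Jacquier2021RoughOptions}: it lower-bounds $\ff_u$ via the $L^p$-dominating variable $R$, applies Cauchy--Schwarz and the exp-log/Jensen manoeuvre to reduce $\EE[u_s^{-p}]$ to exponential moments of $\int_s^T\int_T^{T+\Delta}\EE[\log \Dr_u V_r\mid\FfB_u]\,\D r\,\D u$, and then controls $\EE[\log V_r\mid\FfB_u]$ through the conditional fBm formula of \cite[Theorem~3.1]{Fink2013} and the kernel estimates of \cite{Pipiras2001}, as in Lemma~\ref{lem:AssOneAsy}. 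Your cancellation short-circuits all of that machinery.

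The substantive point, however, is your diagnosis of the obstruction at $p=2$, and it is correct: the degeneracy is intrinsic and the paper's route does not actually evade it. In the paper's proof, $\EE[\log\Dr_u V_r\mid\FfB_u]$ contains the $\FfB_u$-measurable summand $\log\sqrt{Y_\beta}$; with the correct probability normalisation $\Delta(T-s)$ in the Jensen step, exponentiation produces precisely the factor $Y_\beta^{-p/2}$ (the subsequent Cauchy--Schwarz even worsens it to $\EE[Y_\beta^{-p}]^{1/2}$), while the $\Delta\sqrt{T-s}$ normaliser written in the paper does not define a probability measure, and restoring Jensen correctly leaves a $(T-s)^{-p/2}$ prefactor that brings the same power back. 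The justification offered there --- treating $Y_\beta$ as a constant and ``relying on the existence of the MGF of the M-Wright distribution'' --- conflates positive exponential moments of $Y_\beta$ (which do exist, $\Ef_\beta$ being entire, cf.~\eqref{eq:LaplaceM-Wright}) with \emph{negative power} moments, which by \eqref{eq:MWrightMom}, and since $\Mf_\beta(0)=1/\Gamma(1-\beta)>0$ for $\beta<1$, are finite only below order one. Your converse estimate --- on $\{Y_\beta\le\eps\}$, conditionally on $Y_\beta$, the VIX quantities stay in a positive band with probability bounded below, so $u_s\lesssim\sqrt{Y_\beta}$ there --- can be made rigorous by conditioning on $Y_\beta$ and a Markov-inequality localisation, and then yields $\EE[u_s^{-p}]=\infty$ for $p\ge2$ whenever $\beta<1$. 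So the lemma as stated holds only for $p\in(1,2)$, exactly the range your proof delivers; the remaining task you correctly identify is downstream, since \cite[Proposition~1]{Jacquier2021RoughOptions}, as invoked in Proposition~\ref{prop:gBergomi_VIX_asym}, is stated for all $p>1$ and the curvature expansion in particular consumes higher negative moments of $u_s$, so one must check which orders are genuinely needed before concluding that $p<2$ suffices.
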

\begin{proof}
    The proof follows that of~\cite[Lemma~6.15]{Jacquier2021RoughOptions}. Since $\Dr_u V_r$ is positive for all $u\leq r$ almost surely and since $\VIX$ and $1/\VIX$ are dominated by $R\in L^p$ for all $p\geq 1$, then
    \begin{equation*}
        \ff_u \coloneqq \ff_u(\VIX_T) = \EE\left[ \frac{\int_T^{T+\Delta}\Dr_u V_r \D r}{2 \Delta \VIX_T}\middle\vert \FfB_u\right] \geq \EE\left[\frac{\int_T^{T+\Delta}\Dr_u V_r \D r}{2 \Delta R}\middle\vert \FfB_u\right],
    \end{equation*}
    almost surely.
    Since $1/\Ffr \leq R$ by Lemma~\ref{lem:AssOneAsy} Jensen's inequality implies
    \begin{align*}
        u_s^{-2} = \frac{T-s}{\int_s^T|\phi_u|^2 \D u} \leq \frac{R^2 (T-s)}{\int_s^T |\ff_u|^2 \D u} \leq \left| \frac{R \sqrt{(T-s)}}{\int_s^T \ff_u \D u} \right|^2 \leq 4R^2 \left( \frac{\int_s^T \int_T^{T+\Delta} \EE\left[\frac{\Dr_u V_r}{R}\middle\vert \FfB_u\right] \D r \D u}{\Delta\sqrt{(T-s)}}\right)^{-2}.
    \end{align*}
    \sloppy Next, by using the exp-log identity together with Jensen's, Cauchy-Schwarz and ${\exp\left(p\EE_u[\log R]\right) \leq \EE_u[R^p]}$ inequalities we have
    \begin{align*}
        &\EE\left[\left(\frac{1}{\Delta\sqrt{(T-s)}} \int_s^T \int_T^{T+\Delta} \EE\left[\frac{\Dr_u V_r}{R}\middle\vert\FfB_u\right] \D r \D u\right)^{-p}\right] \\ 
        &= \EE\left[\exp\left\{-p\log\left(\frac{1}{\Delta\sqrt{(T-s)}} \int_s^T \int_T^{T+\Delta} \EE\left[\frac{\Dr_u V_r}{R}\middle\vert\FfB_u\right] \D r \D u\right)\right\}\right] \\
        &\leq \EE\left[\exp\left\{-\frac{p}{\Delta\sqrt{(T-s)}}\int_s^T \int_T^{T+\Delta} \EE\left[\log\Dr_u V_r - \log R\middle\vert\FfB_u\right] \D r \D u\right\}\right] \\
        &\leq \left(\EE\left[\exp\left\{-\frac{2p}{\Delta\sqrt{(T-s)}}\int_s^T \int_T^{T+\Delta} \EE\left[\log\Dr_u V_r\middle\vert\FfB_u\right] \D r \D u\right\}\right]\right)^{\half} \sqrt{\EE\left[R^{2p}\right]}.
    \end{align*}
Focusing on the conditional expectation, by computations in~\eqref{eq:malliavin_gBergomi},
\begin{align*}
    \EE\left[\log\Dr_u V_r\middle\vert\FfB_u\right] &= \EE\left[\log\left(\eta \cf^2 \sqrt{Y_\beta} V_r (r-u)^{\Hm}\right)\middle\vert\FfB_u\right] \\
    &= \log\left(\eta \cf^2 \sqrt{Y_\beta} (r-u)^{\Hm}\right) + \EE\left[\log V_r\middle\vert\FfB_u\right],
\end{align*}
the first term is uniformly bounded by the same computations as in the proof of~\cite[Lemma~6.15]{Jacquier2021RoughOptions} treating $Y_\beta$ as a constant and relying on the existence of MGF of M-Wright distribution. As for the expectation of the log-volatility term,
\begin{align*}
    \EE\left[\log V_r\middle\vert\FfB_u\right] = \log\xi_0 - \log\Ef_\beta\left(\frac{\eta^2 r^{2H}}{2}\right) + \eta\EE\left[\sqrt{Y_\beta} \Ba_r\middle\vert\FfB_u\right],
\end{align*}
the double integral over a compact corresponding to the second term is uniformly bounded, since~\cite[Theorem~9]{Jia2019SomeFunction} gives the bound~$1\leq\Ef_\beta(x)\leq C\E^{x}$ for all $x\geq 0$ and some constant~$C>0$. Then by~\cite[Theorem~7.1]{Pipiras2001}, the second term equals to
\begin{equation*}
    \EE\left[\sqrt{Y_\beta} \Ba_r\middle\vert\Ff^B_u, Y_\beta\right] = \sqrt{Y_\beta}\left(\Ba_u + \int_0^u\Psi(u,r,v)\D \Ba_v\right),
\end{equation*}
whose corresponding integrals over a compact are again finite by~\eqref{eq:integralfBmFubini} and the ensuing arguments in Lemma~\ref{lem:AssOneAsy}.
\end{proof}

Although not immediately obvious from the Proposition~\ref{prop:gBergomi_VIX_asym}, numerical experiments show that the short-time ATM skew is positive for all choices of parameters, which is indeed what we observe, an upward-slopping VIX smile.
Figure~\ref{fig:vixasymptotics} showcases the behaviour in~$\beta$ of the asymptotic results from Proposition~\ref{prop:gBergomi_VIX_asym}.

\begin{figure}[htb]
\centering
    \subfloat{\includegraphics[width=0.45\linewidth]{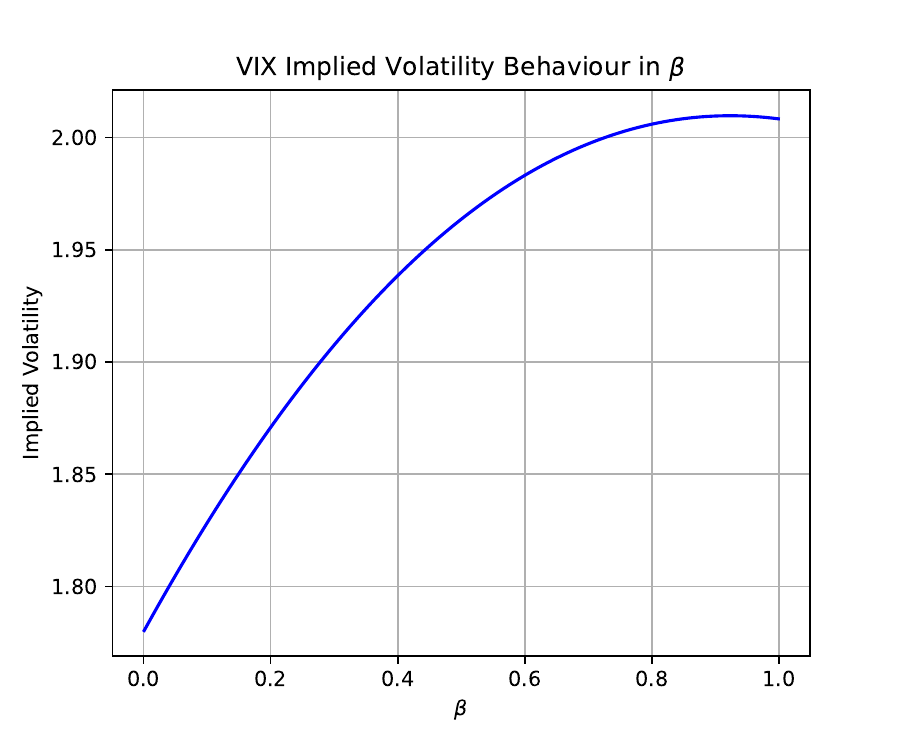}}
    \subfloat{\includegraphics[width=0.45\linewidth]{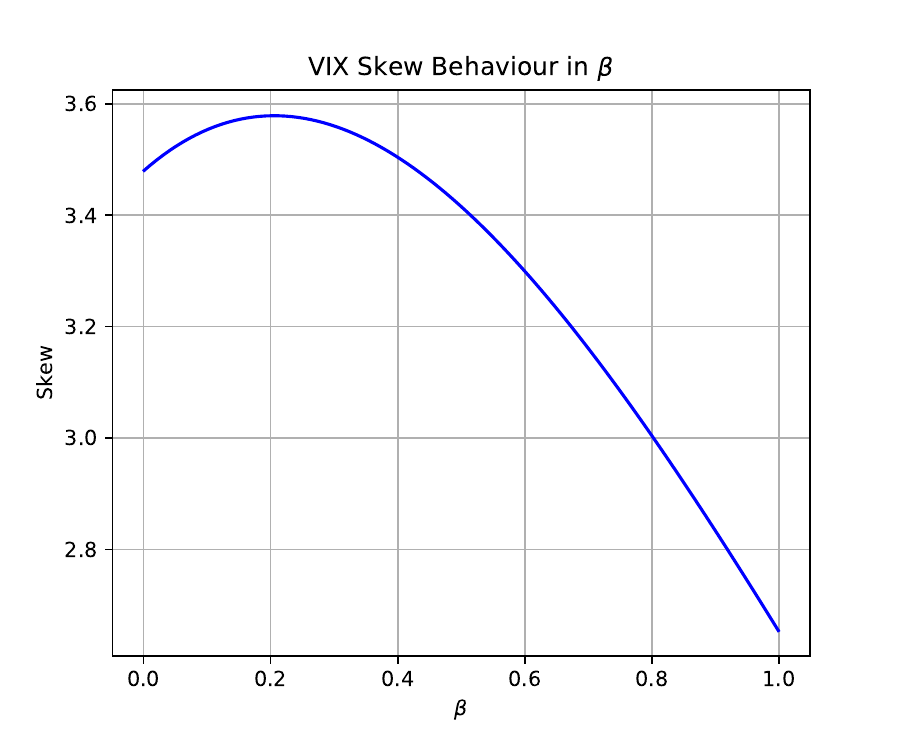}}
    \newline\vspace*{-0.75\baselineskip}
    \includegraphics[width=0.45\linewidth]{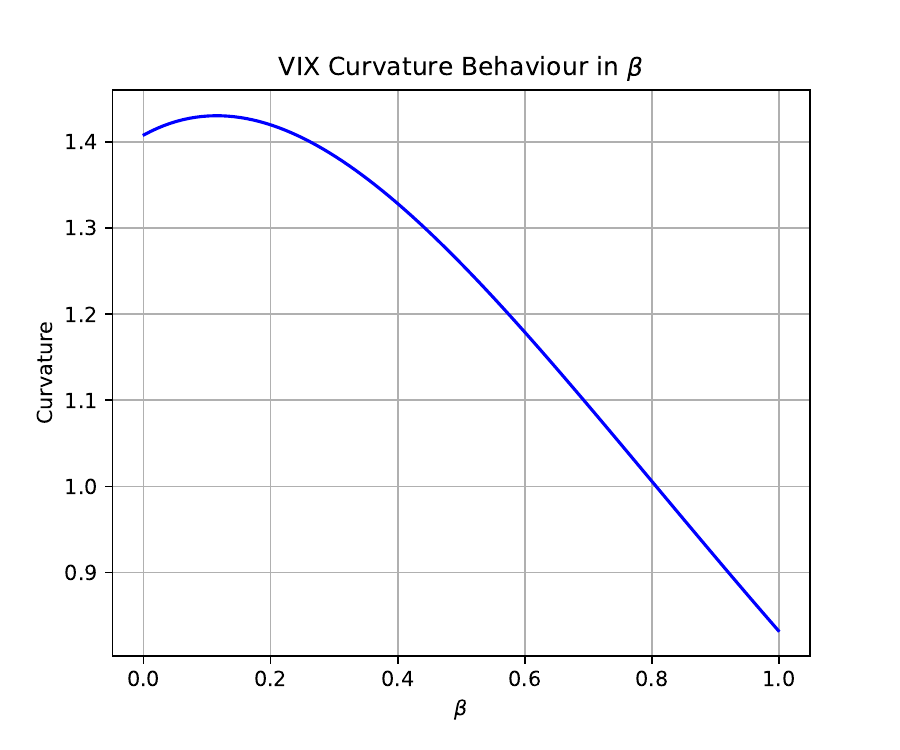}
    \caption{At-the-money VIX level, skew, curvature asymptotics with
    $(H,\eta, \xi_0) = (0.07, 1.23, 0.235^2)$, taken from~\cite[Section~3.3.3]{jacquier2018vix}.}
\label{fig:vixasymptotics}
\end{figure}

\subsection{SPX asymptotics}\label{sec:SPX_asymptotics}
To differentiate from the VIX, we slightly modify the notations, denoting the ATM implied volatility level by~$\widehat \Ii_T$ and the skew by~$\widehat\Ss_T$. 
The general set-up above still applies in the case of SPX, however now with two sources of noise.
The following proposition summarises this, and is numerically illustrated in Figure~\ref{fig:spxskew}.

\begin{proposition}\label{prop:gBergomi_SPX_asym}
The following small~maturity behaviours hold:
$$
\lim _{T \downarrow 0} \widehat{\Ii}_T = \sqrt{\xi_0}
\qquad\text{and}\qquad
\lim_{T\downarrow 0} \frac{\widehat\Ss_T}{T^{H + \frac{3}{2}}} =  \frac{\rho\eta \cf\sqrt{\pi}}{(2H + 1)(2H + 3)\Gamma(1+\frac12 \beta)}.
$$
\end{proposition}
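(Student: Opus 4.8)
The plan is to invoke the general short-maturity expansion of~\cite[Proposition~1]{Jacquier2021RoughOptions} in the two-noise setting prepared above, taking $A_T=S_T$ so that $\Ffr^A_{t,T}=\EE[S_T\mid\Ff_t]=S_t$ is a genuine $\FF$-martingale by Theorem~\ref{thm:bbBmartingale}. First I would make the Malliavin weights $\boldsymbol\phi=(\phi^1,\phi^2)$ explicit. Since $V$ is $\FFB$-adapted, $\Dr^1_s V\equiv 0$, and differentiating $X_T=\log S_T$ against $W^1=W$ and $W^2=B$ gives, for $s\le T$,
\[
\Dr^1_s X_T=\sqrt{1-\rho^2}\sqrt{V_s},\qquad
\Dr^2_s X_T=\rho\sqrt{V_s}+\rho\int_s^T\Dr^2_s\sqrt{V_u}\,\D B_u-\half\int_s^T\Dr^2_s V_u\,\D u+\sqrt{1-\rho^2}\int_s^T\Dr^2_s\sqrt{V_u}\,\D W_u,
\]
with $\Dr^2_s V_u=\eta\cf\sqrt{Y_\beta}V_u(u-s)^{\Hm}$ from~\eqref{eq:malliavin_gBergomi}. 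Using $\Dr^i_s S_T=S_T\Dr^i_s X_T$ together with the tower property, one gets $\phi^1_s=\sqrt{1-\rho^2}\sqrt{V_s}$ exactly and $\phi^2_s=S_s^{-1}\EE[S_T\Dr^2_sX_T\mid\Ff_s]=\rho\sqrt{V_s}+o(1)$ as $s,T\downarrow0$, the remaining Lebesgue and stochastic integrals carrying a factor $(u-s)^{\Hm}$ and hence vanishing in the short-maturity limit.

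For the level, $\|\boldsymbol\phi_s\|^2=(1-\rho^2)V_s+(\phi^2_s)^2\to V_s$, whence $u_0^2=\frac1T\int_0^T\|\boldsymbol\phi_s\|^2\D s\to V_0=\xi_0$ by continuity of $V$ and the fact that $\Bh_0=0$ and $\Ef_\beta(0)=1$ force $V_0=\xi_0$. Since~\cite[Proposition~1]{Jacquier2021RoughOptions} identifies $\lim_{T\downarrow0}\widehat\Ii_T$ with $\lim_{T\downarrow0}u_0$, this gives $\lim_{T\downarrow0}\widehat\Ii_T=\sqrt{\xi_0}$.

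For the skew, the same proposition reduces the ATM slope, to leading order, to a universal multiple of the $\rho$-weighted Malliavin correlation functional; carrying out this reduction in the present setting yields
\[
\widehat\Ss_T=\frac{\rho}{2\xi_0}\int_0^T\!\!\int_s^T\EE[\Dr^2_s V_u]\,\D u\,\D s\,\bigl(1+o(1)\bigr),\qquad T\downarrow0.
\]
I would then insert $\EE[\Dr^2_s V_u]=\eta\cf(u-s)^{\Hm}\EE[\sqrt{Y_\beta}V_u]$ and use that $V_u\to\xi_0$ as $u\downarrow0$, so that $\EE[\sqrt{Y_\beta}V_u]\to\xi_0\EE[\sqrt{Y_\beta}]=\frac{\sqrt\pi\,\xi_0}{2\Gamma(1+\frac12\beta)}$ by the $\M$-Wright moment~\eqref{eq:MWrightMom} with $\kappa=\half$. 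The two elementary integrals
\[
\int_s^T(u-s)^{\Hm}\D u=\frac{(T-s)^{\Hp}}{\Hp},\qquad \int_0^T\frac{(T-s)^{\Hp}}{\Hp}\D s=\frac{T^{H+\frac32}}{\Hp(H+\frac32)},
\]
together with the identity $\Hp(H+\frac32)=\frac14(2H+1)(2H+3)$, then give $\widehat\Ss_T\sim\frac{\rho\eta\cf\sqrt\pi}{(2H+1)(2H+3)\Gamma(1+\frac12\beta)}\,T^{H+\frac32}$, which is the claim upon dividing by $T^{H+\frac32}$.

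The main obstacle is the justification of the leading-order reduction rather than the computation itself. One must verify the hypotheses of~\cite[Proposition~1]{Jacquier2021RoughOptions} --- domination of $1/\Ffr^A$, of $V$ and of its first three iterated Malliavin derivatives, and uniform $L^p$-control of $u_s^{-1}$ --- by the same arguments as in Lemmas~\ref{lem:AssOneAsy}--\ref{lem:AssThreeAsy}, the non-Gaussian factor $\sqrt{Y_\beta}$ being handled throughout via the finiteness of the $\M$-Wright moment generating function~\eqref{eq:ggBmMGF}. One then has to exchange limit and integral in the replacement $\EE[\sqrt{Y_\beta}V_u]\to\xi_0\EE[\sqrt{Y_\beta}]$ by dominated convergence, and to check that the $o(1)$ corrections in $\phi^2$ and in the correlation functional are genuinely of lower order than $T^{H+\frac32}$; both follow from the $(u-s)^{\Hm}$ decay of the Malliavin kernel and the boundedness of $\Ef_\beta(\cff u^{2H})$ near zero.
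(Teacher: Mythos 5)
Your proposal follows essentially the same route as the paper: reduce to the general short-maturity result of \cite{Jacquier2021RoughOptions} (the paper invokes Proposition~5.1 there, the SPX/two-noise version, rather than Proposition~1, which is the VIX version you cite --- the latter would require you to redo the reduction yourself, which is exactly the weight computation you sketch), verify its hypotheses via Lemmas~\ref{lem:AssOneAsy}--\ref{lem:AssThreeAsy}, note that $\Dr^W V\equiv 0$ so only the $B$-direction contributes to the skew, and evaluate $\int_0^T\int_s^T\EE[\Dr_s V_u]\,\D u\,\D s$ using the $\M$-Wright moment \eqref{eq:MWrightMom}; your elementary integrals and the identity $\Hp(H+\tfrac32)=\tfrac14(2H+1)(2H+3)$ reproduce the paper's constant exactly. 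Two execution-level differences are worth noting. First, you treat $\EE[\sqrt{Y_\beta}\,V_u]\to\xi_0\,\EE[\sqrt{Y_\beta}]$ asymptotically via dominated convergence (justified, since $\Ef_\beta$ is entire so $\EE[\sqrt{Y_\beta}\,\E^{cY_\beta}]<\infty$ for all $c$), whereas the paper computes $\EE[\Dr_s V_u]=\frac{\xi_0\eta\cf\sqrt\pi}{2\Gamma(1+\frac\beta2)}(u-s)^{\Hm}$ as an exact identity by conditioning on $\sigma(Y_\beta)$; both yield the same limit, and your version is the more careful one given the unconditional normalisation $\Ef_\beta(\cff u^{2H})$ in the definition of $\Eeb$. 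Second, and this is the one genuine (though easily repaired) gap: the hypothesis list you verify is the one for the VIX result (domination of $1/\Ffr^A$ and of three iterated Malliavin derivatives), while the SPX proposition instead requires, besides (i)--(iii), the condition $\limsup_{T\downarrow0}\EE[(\sqrt{V_T/V_0}-1)^2]=0$, which your heuristic $u_0\to\sqrt{V_0}$ does not replace; the paper dispatches it in one line from almost-sure path continuity of $V$ together with the reverse Fatou lemma, and you should do the same to make the level limit $\widehat\Ii_T\to\sqrt{\xi_0}$ rigorous.
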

\begin{proof}
The proof of the proposition relies on
\cite[Proposition~5.1]{Jacquier2021RoughOptions}, for which we need to check the following assumptions:
    There exists $H \in\left(0, \half\right)$ and a random variable $R$ such that, for all $0 \leq s \leq u$, 
    and $p \geq 1, R \in L^p$,
    \begin{enumerate}[(i)]
        \item $V_s \leq R$ almost surely;
        \item $\Dr_s V_u \leq R(u-s)^{\Hm}$ almost surely;
        \item $\sup _{s \leq T} \EE\left[u_s^{-p}\right]<\infty$;
        \item $\limsup _{T \downarrow 0} \EE[(\sqrt{V_T / V_0}-1)^2]=0$.
    \end{enumerate}
Under these assumptions, Proposition~5.1 in~\cite{Jacquier2021RoughOptions} states that the short-time limit of the implied volatility is given as in the proposition and the short-time skew reads (since $\Dr^W_u V_s = 0$, the Brownian motion driving the stock does not play a role~\cite[Section~5]{Jacquier2021RoughOptions})
\[
\lim _{T \downarrow 0} \frac{\widehat{\Ss}_T}{T^{H + \frac{3}{2}}}
 = \frac{\rho}{2 V_0}
     \lim _{T \downarrow 0} \frac{\int_0^T \int_s^T \EE\left[\Dr_s V_u\right] \D u \D s}{T^{H + \frac{3}{2}}}.
\]
The proof of Proposition~\ref{prop:gBergomi_VIX_asym} establishes that the assumptions (i)-(iii) are satisfied. 
As for assumption~(iv), since $\{V_t\}_{t\geq 0}$ has almost surely continuous paths, it follows that the ratio~$V_t/V_0$ converges to~$1$ almost surely and~(iv) thus holds by the reverse Fatou's lemma. 
For the ATM skew the same calculations as in Section~\ref{sec:VIX_asymptotics} yield
$$
\EE\left[\Dr_s V_u\right]
= \EE\left[\eta \cf \sqrt{Y_\beta} V_u (u-s)^{\Hm}\right]
= \frac{\xi_0 \eta \cf\sqrt{\pi}}{2\Gamma(1+\frac12 \beta)}(u-s)^{\Hm}.
$$
The result then follows
from the computation of the double integral
$$
\int_0^T \int_s^T\EE\left[\Dr_s V_u\right]\D u \D s
= \frac{\xi_0 \eta \cf\sqrt{\pi}}{\Gamma(1+\frac{\beta}{2})}\frac{T^{H + \frac{3}{2}}}{(H + \half)(2H + 3)}.
$$
\end{proof}

\begin{figure}[ht]
\centering
    \includegraphics[width=0.5\linewidth]{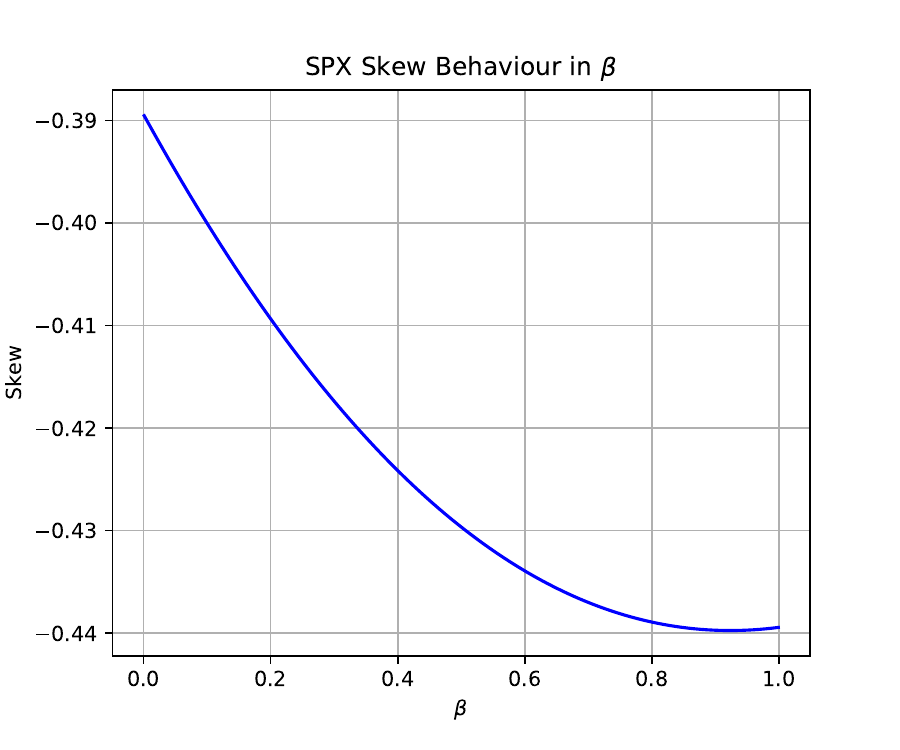}
    \caption{Short-term SPX skew with~$(H, \eta, \rho) = (0.07, 1.23,-1)$.}
\label{fig:spxskew}
\end{figure}

\section{Joint Calibration}\label{sec:calibration}
We now focus on calibrating the gBergomi model to market data.
Using the formulae in Section~\ref{sec:asymptotics} we require VIX options implied volatility, skew, and curvature and SPX skew data, available on \texttt{Yahoo Finance}.
We smooth the data by interpolation:
cubic spline and a smooth parametric form (to be able to differentiate) of the form
${f_{\VIX}(x; a,b,c,d) = a \arctan(bx + c) + d}$
for some $a,b,c,d \in \RR$ that we calibrate on data.
The smoothed market data is shown in Figure~\ref{fig:vixsmilecalibrated}. 
The first part of the calibration procedure then reads
$$
(H^*, \beta^*, \eta^*)
 := \argmin_{(H, \beta, \eta)} \left\{\left|\lim_{T \downarrow 0} \Ii_T - \Ii^{\text{MKT}}\right|^2 + \left|\lim_{T \downarrow 0}\Ss_T - \Ss^{\text{MKT}}\right|^2 + \left|\lim_{T \downarrow 0} \frac{\Cc_T}{T^{3H - \half}} -  \frac{\Cc^{\text{MKT}}}{T_{\text{MKT}}^{3H - \half}}\right|^2\right\},
$$
where $\Ii^{\text{MKT}}$, $\Ss^{\text{MKT}}$, and $\Cc^{\text{MKT}}$ are the market ATM VIX options implied volatility, skew, and curvature, $T_{\text{MKT}}$  the time until expiry of the VIX options, and we take $\xi_0 = \VIX^2_0$. 
We then 
calibrate the correlation parameter via
$$
\rho^* := \argmin_{\rho \in [-1,1]} \left(\lim_{T\downarrow 0}\frac{\widehat\Ss_T}{T^{H^*+\frac{3}{2}}} - \frac{\widehat{\Ss}^{\text{MKT}}}{T_{\text{MKT}}^{H^*+\frac{3}{2}}}\right)^2,
$$
where $\widehat{\Ss}^{\text{MKT}}$ is the market observed ATM SPX skew.
Considering VIX options with expiry $T = 0.094$ (accessed on 26/10/2024), the optimal parameters read
$$
(H^*, \beta^*, \eta^*, \rho^*)
 = (0.054, 1, 0.468, -1).
$$
\begin{remark}
    Note that the low value for the volatility-of-volatility term may be explained by the fact that we are not calibrating to 1-day expiry VIX options.
\end{remark}
Notice that $\beta^* = 1$, implying that the VIX has log-Normal dynamics, which is clearly not the case (see Appendix~\ref{apx:VIX_smile_rBergomi}). This suggests that the ATM implied volatility, skew, and curvature do not carry enough information to accurately calibrate the model. This is further reinforced by Figure~\ref{fig:vixsmile}, where the VIX smile is computed for the same maturity as before but with~$(H, \beta, \eta) = (0.015, 0.11, 2)$. These parameter values were obtained by fixing~$\eta$ and computing a grid search over $(H, \beta)$.

\begin{figure}[ht]
    \centering
    \includegraphics[width=0.5\linewidth]{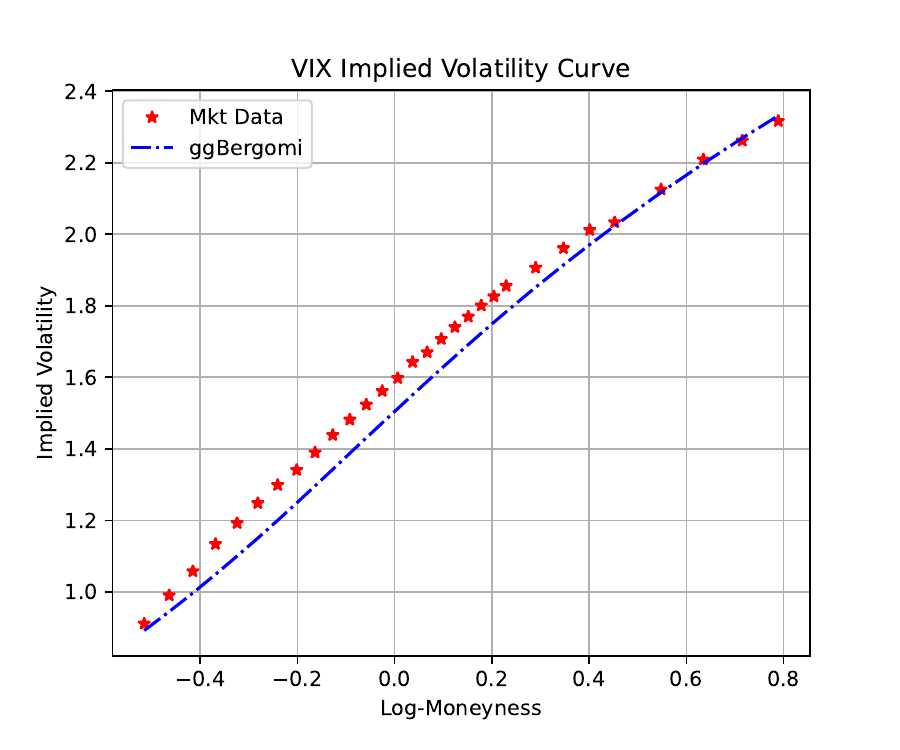}
    \caption{VIX smile with $T = 0.094$,
    $(H, \beta, \eta) = (0.015, 0.11, 2)$.}
    \label{fig:vixsmile}
\end{figure}

The joint calibration process can be completed by calibrating to the SPX smile via a grid search over $(\eta, \rho)$. This process results in the calibrated values $(\eta,\rho) = (0.4, -1)$.

Notably, the estimated volatility-of-volatility parameter is approximately~$80\%$ lower than the value obtained during the VIX smile calibration. A similar discrepancy was observed in~\cite{jacquier2018vix}, where the authors postulated that the inconsistency could indicate a potential source of arbitrage.

As illustrated in Figure~\ref{fig:spxsmile}, the model does not achieve a particularly accurate fit, especially with respect to the short-dated skew of the SPX smile, which remains challenging to capture during calibration. This may be explained by numerous reasons, in particular the data source itself and the fact that the data was collected very close to the 2024 U.S. presidential elections. 

Compared to the data available via CBOE (the primary source of the data), \texttt{Yahoo Finance} aggregates data from third-party sources and APIs. The time in which the data is collected carries importance since calibration schemes are more likely to fail or underperform when markets are stressed from large macro and geo-political events.

\begin{remark}
    Since both $\beta$ and $\eta$ play similar roles in terms of controlling the value of the volatility-of-volatility, we choose to fix $\beta$ once calibrated to the VIX smile. This way, a direct comparison of the value of $\eta$ obtained through the SPX smile can be made. In doing so, we avoid making distributional comparisons, as a unique feature of this model is that one can fit the SPX smile with several combinations of $(\beta,\eta)$ (if the VIX smile calibration is of course ignored). 
\end{remark}

\begin{figure}[ht]
    \centering
    \includegraphics[width=0.5\linewidth]{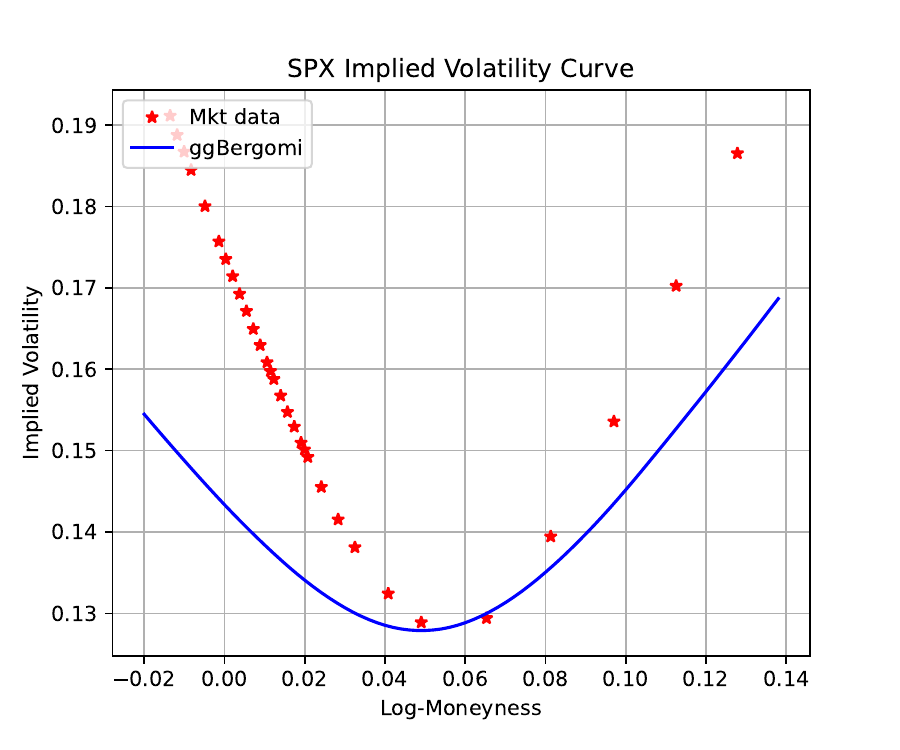}
    \caption{SPX smile, $T = 0.094$. $(H, \beta, \eta, \rho) = (0.015, 0.11, 0.42, -1)$.}
    \label{fig:spxsmile}
\end{figure}
\vspace*{-\baselineskip}
\begin{figure}[ht]
\centering
    \subfloat{\includegraphics[width=0.5\linewidth]{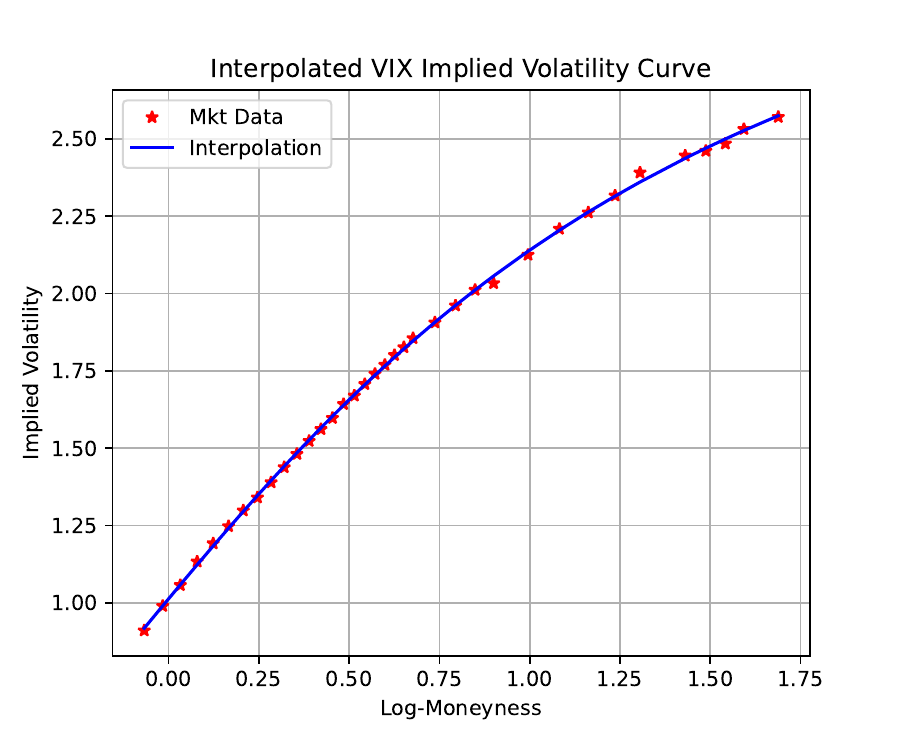}}
    \subfloat{\includegraphics[width=0.5\linewidth]{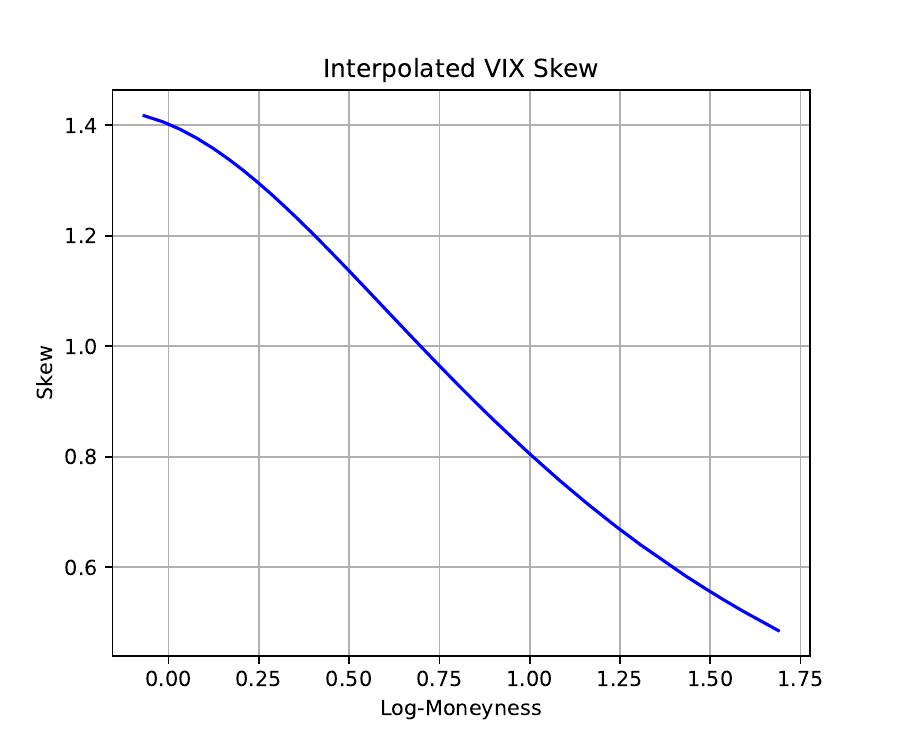}}
    \newline
    \includegraphics[width=0.5\linewidth]{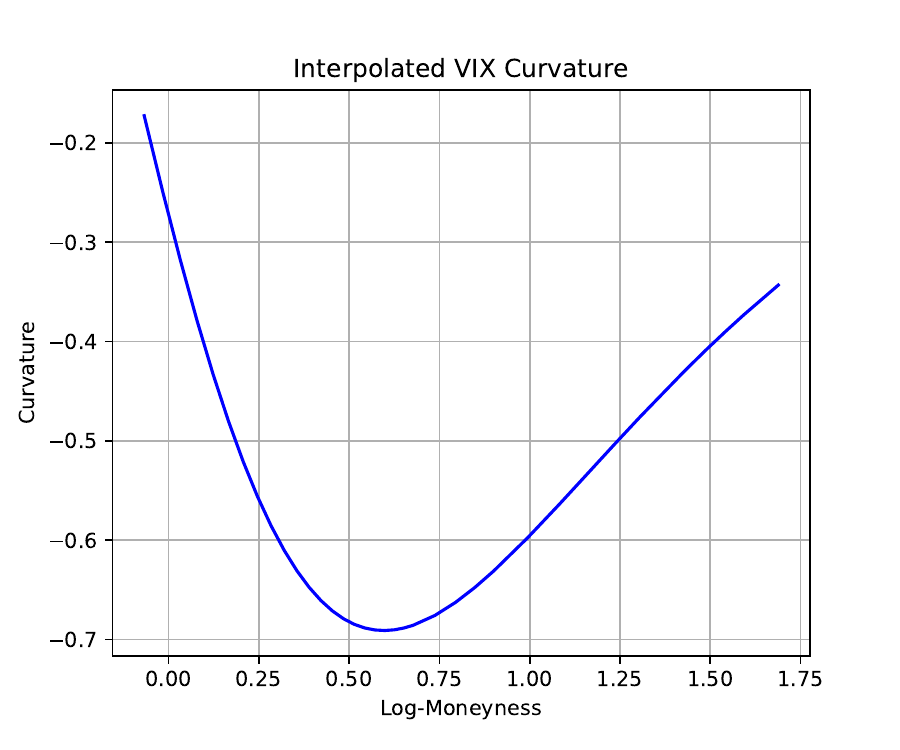}
    \caption{Interpolated VIX Call options smile, skew, and curvature on 26/10/24 with $T = 0.094$ and   $(a, b, c, d) = 1.913, 0.746, -2.113, 0.761)$.}
\label{fig:vixsmilecalibrated}
\end{figure}

\newpage
\bibliography{bibliography}
\bibliographystyle{siam}

\appendix

\section{Markovian Approximation of grey Bergomi}\label{apx:markovian_approximation}
Since the variance process in gBergomi is not Markovian, simulation schemes are computationally more expensive. A workaround is to consider Markovian approximations~\cite{abi2019multifactor, bayer2022markovian, Carmona2000}. 
Writing the power law kernel $K(t) := \frac{1}{\Gamma(\Hp)}t^{\Hm}$ as the Laplace transform
$$
K(t) = \int_0^{\infty} \E^{-tx} \mu(\D x), 
\quad \text{with } \mu(\D x) := \frac{\D x}{\Gamma(\Hm)\Gamma(\Hp) x^{\Hp}},
$$
an application of the stochastic Fubini theorem yields
$$
\int_0^t K(t-s) \D B_s = \int_0^{\infty} \int_0^t \E^{-(t-s)x} \D B_s \mu (\D x)  = \int_0^{\infty} Y_t^x \mu(\D x),
$$
where $Y_t^x := \int_0^t \E^{-(t-s)x} \D B_s$ is an Ornstein-Uhlenbeck process.
Approximating the measure~$\mu$ by a finite sum of Dirac masses yields 
$$
K(t) = \int_0^{\infty} \E^{-tx} \mu(\D x) \approx \sum_{i=0}^N w^N_i \E^{-x^N_i t},
$$
where $(w^N_i)_{i=1}^N$ are positive weights and $(x^N_i)_{i=1}^N$ are mean-reverting speeds. Applying this to the gBergomi model results in the Markovian approximations $S^N$ and $V^N$,
\begin{align}
    &S^N_t = S^N_0 \exp\left\{- \half \int_0^t V^N_s \D s + \int_0^t \sqrt{V^N_s} \left(\rho \D B_s + \sqrt{1-\rho^2}\D B^{\bot}_s\right)\right\},\\
    &V_t^N = \frac{\xi_0(t)}{\Ef_{\beta}(\cff t^{2H})} \exp\left\{\eta \sqrt{Y_\beta} \int_0^t \sum_{i=0}^N w^N_i \E^{-x^N_i (t-s)} \D B_s\right\},
\end{align}
where $S^N_0 = S_0 = 1$ and $V_0^{N} = V_0 = 0.$ Given the positive weights $(w^N_i)_{i=1}^N$ and the mean-reverting speeds $(x^N_i)_{i=1}^N$, one can simulate the Markovian approximation of the gBergomi model. 
Carefully choosing weights and mean-reverting speeds can drastically improve the value of~$N$ one needs in order to obtain a given accuracy. An example of this can be seen in comparing the methodology in~\cite{bayer2022markovian} and~\cite{Zhu_2021}, where in~\cite{bayer2022markovian} one can afford to use a considerably smaller $N$, thus, decreasing simulation times.


\section{Empirical analysis of the VIX distribution}\label{apx:VIX_smile_rBergomi}
We provide here an empirical analysis of the~VIX distribution, discussing the log-Normal assumption for the~VIX.
In the rough Bergomi model~\eqref{eq:rBergomi_dynamics} with $\Gm = B^{H}$,
the volatility exhibits log-Normal dynamics,
so that the VIX, as an integral over a short one-month interval of log-Normal variables, 
is close to log-Normal by~\cite{Fenton1960TheSystems}. 
Figure~\ref{fig:VIX_dist} shows the distribution of the VIX log-returns during 3/1/05--28/10/22;
clearly, the presence of residual skewness and kurtosis cannot be fully accounted by a Gaussian assumption.
We further compare the VIX log-returns using QQ plots in Figure~\ref{fig:QQplots}
to various probability distributions over different periods. 
While the Normal distribution fails to capture the tail behaviour of VIX returns, both the Student's t and Laplace distributions exhibit improved fit in the tails. 
A notable limitation is the inability of these distributions to capture the asymmetry present in the VIX returns fully, 
although the severity of the misfit differs through the time periods.
Table~\ref{tab:normality_test} summarises the Gaussian test hypothesis on the VIX log-returns. 
The very low D'Agostino $p$-value~\cite{DAgostino1990ANormality}
rejects the null Gaussian hypothesis based on the skewness values,
confirmed by the small $p$-value 
from the Anscombe's test~\cite{Anscombe1983DistributionSamples},
based on the kurtosis,
and by the norm test statistic~\cite{DAgostino1973Tests1, DAgostino1971AnSamples}.
Finally, the Augmented Dickey-Fuller on the VIX log-returns over the whole period indicates stationarity.

\begin{figure}[ht]
    \centering
    \includegraphics[scale=0.33]{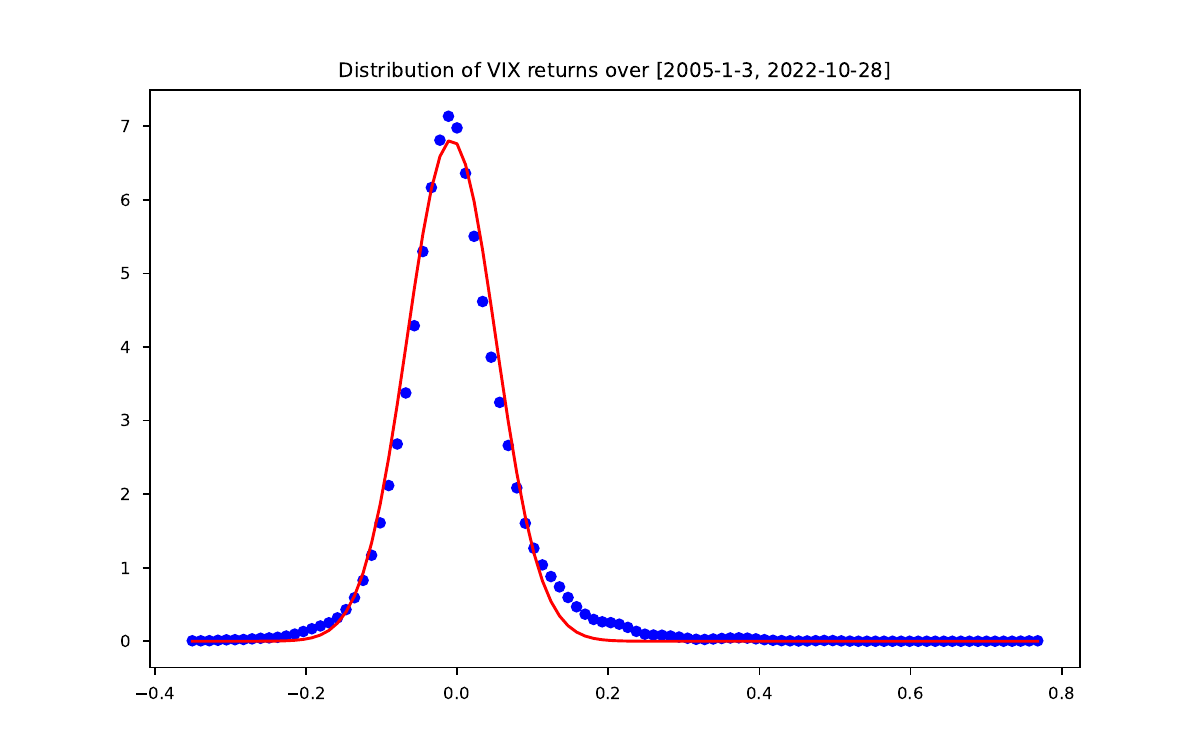}
    \centering
    \begin{subfigure}[b]{\textwidth}
        \centering
        \includegraphics[scale=0.33]{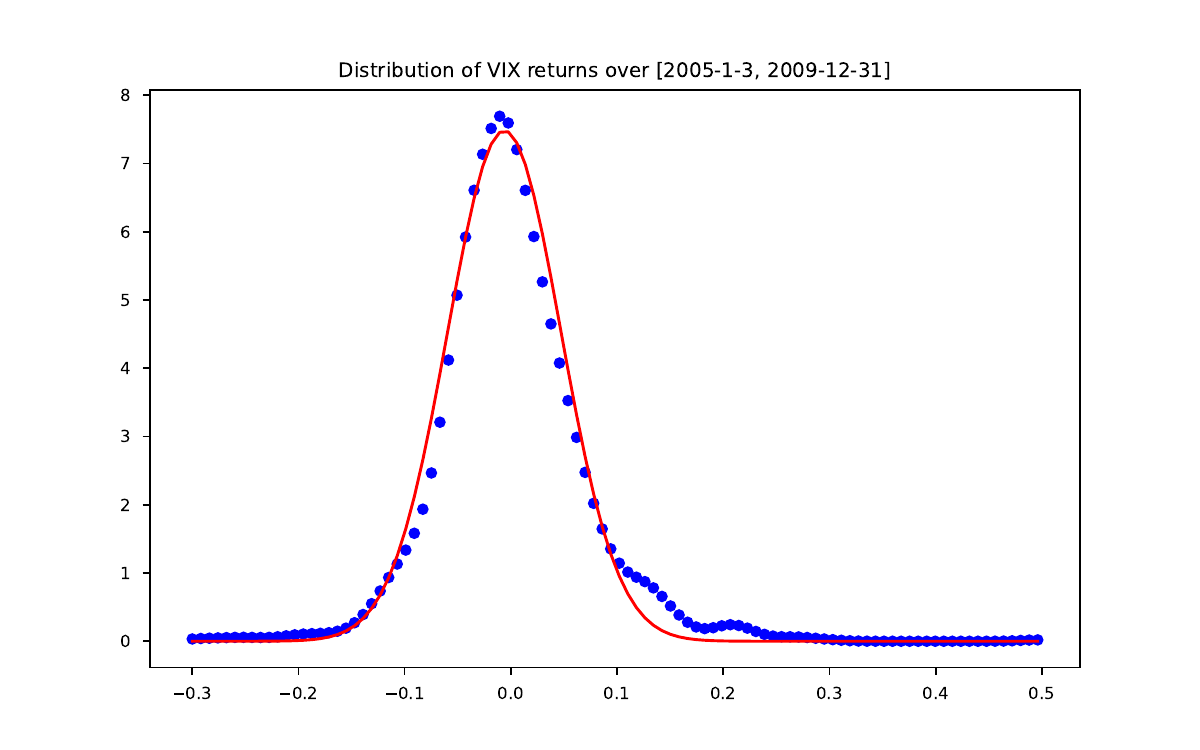}
        \includegraphics[scale=0.33]{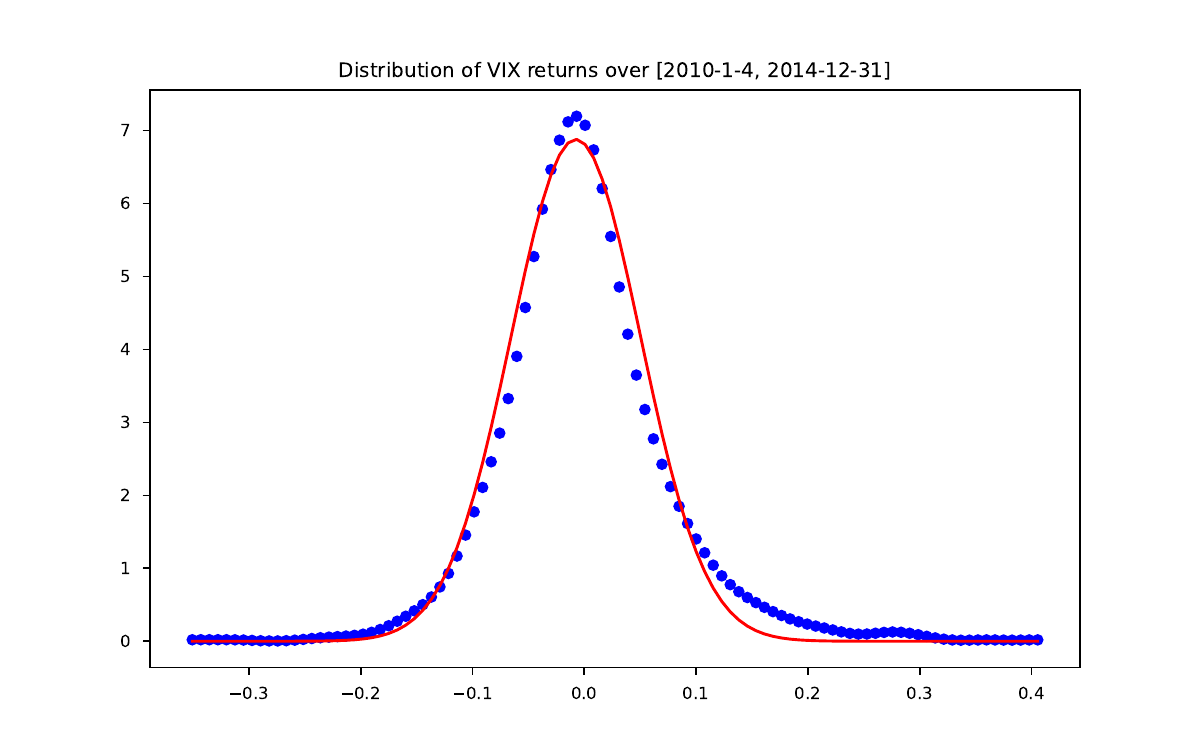}
        \includegraphics[scale=0.33]{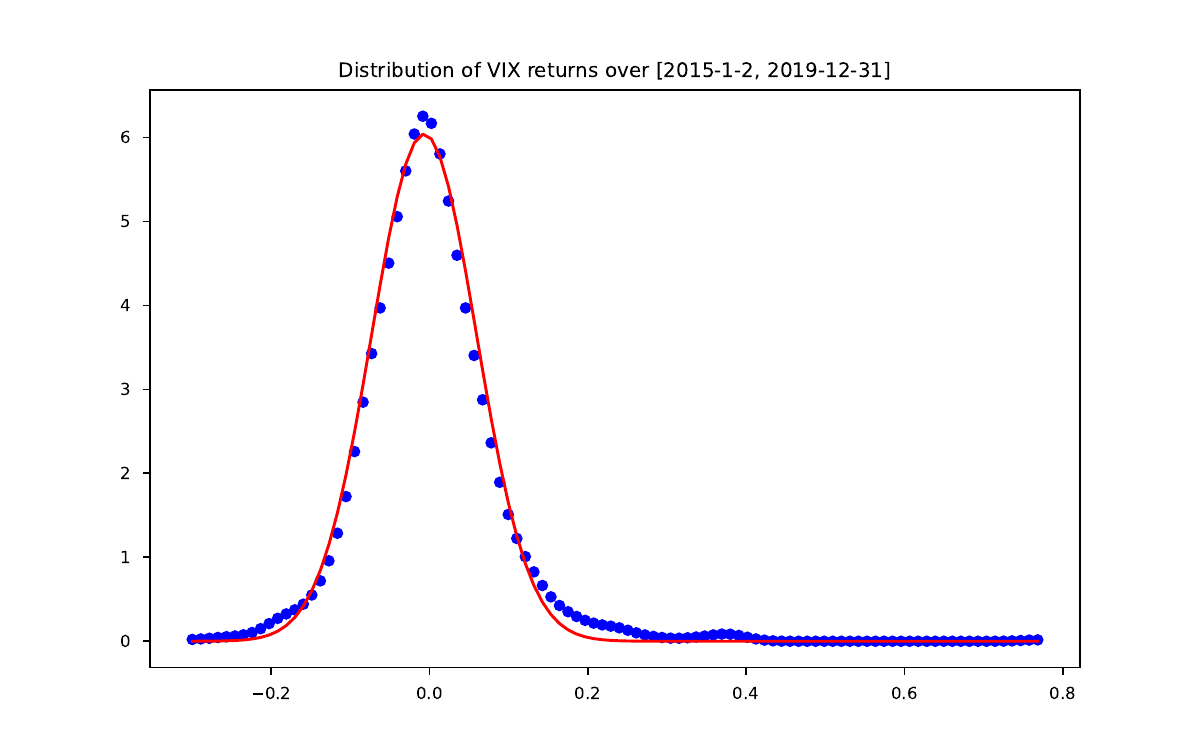}
        \includegraphics[scale=0.33]{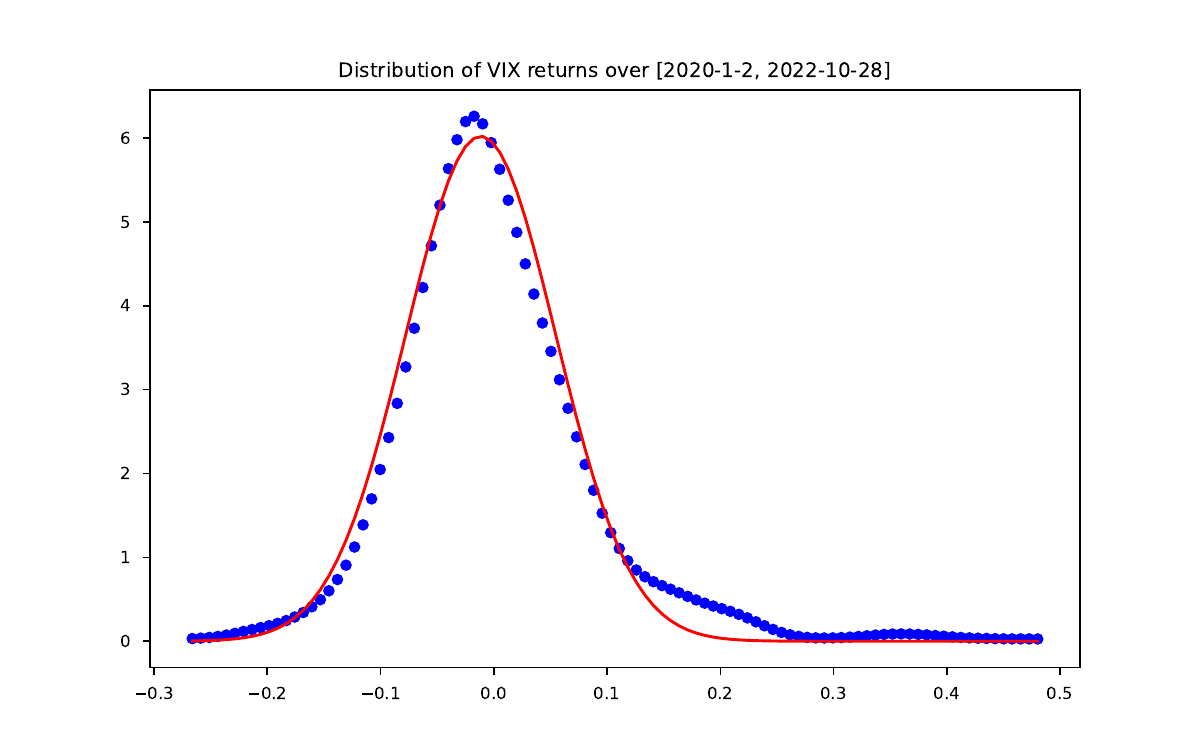}
    \end{subfigure}
    \caption{Gaussian fit to VIX log-returns over several time periods (CBOE data).}
    \label{fig:VIX_dist}
\end{figure}

\begin{figure}[htbp]
    \centering
    \begin{subfigure}[b]{0.32\textwidth}
        \centering
        \includegraphics[width=\textwidth]{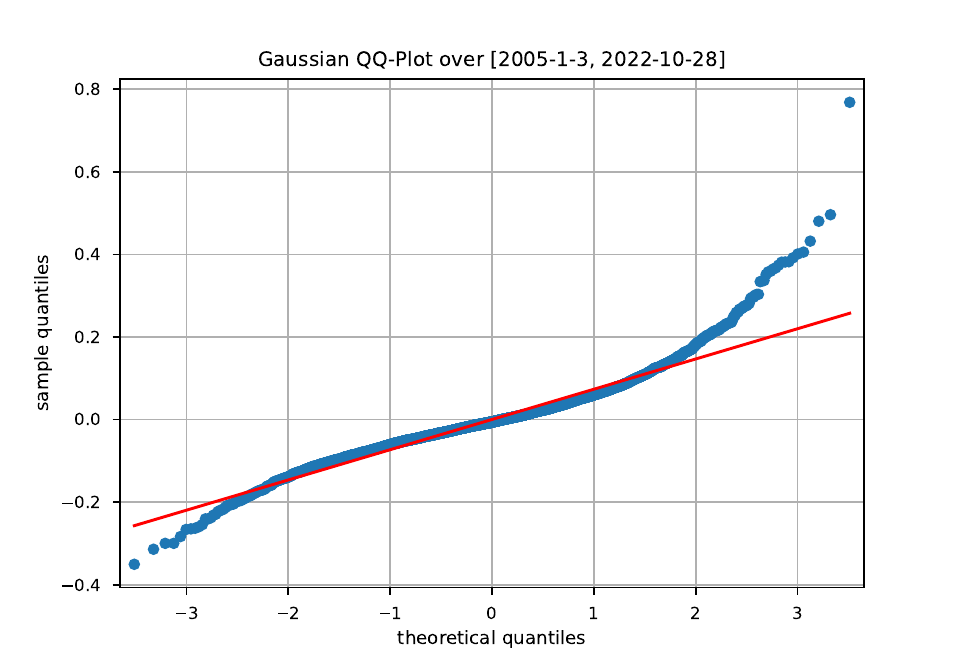}
        \caption{Gaussian}
    \end{subfigure}
    \hfill
    \begin{subfigure}[b]{0.32\textwidth}
        \centering
        \includegraphics[width=\textwidth]{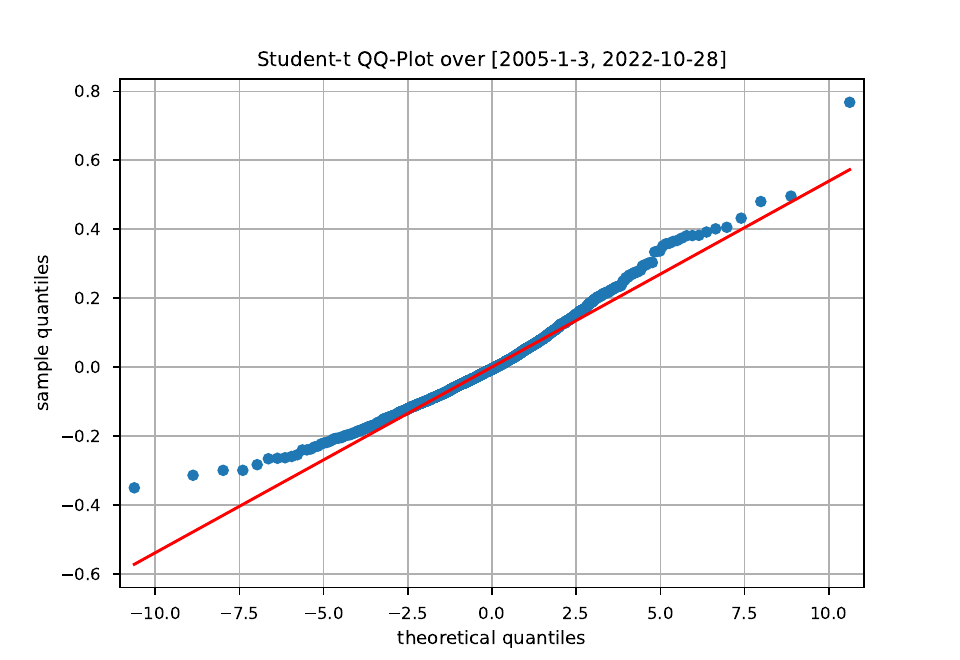}
        \caption{Student's t}
    \end{subfigure}
    \hfill
    \begin{subfigure}[b]{0.32\textwidth}
        \centering
        \includegraphics[width=\textwidth]{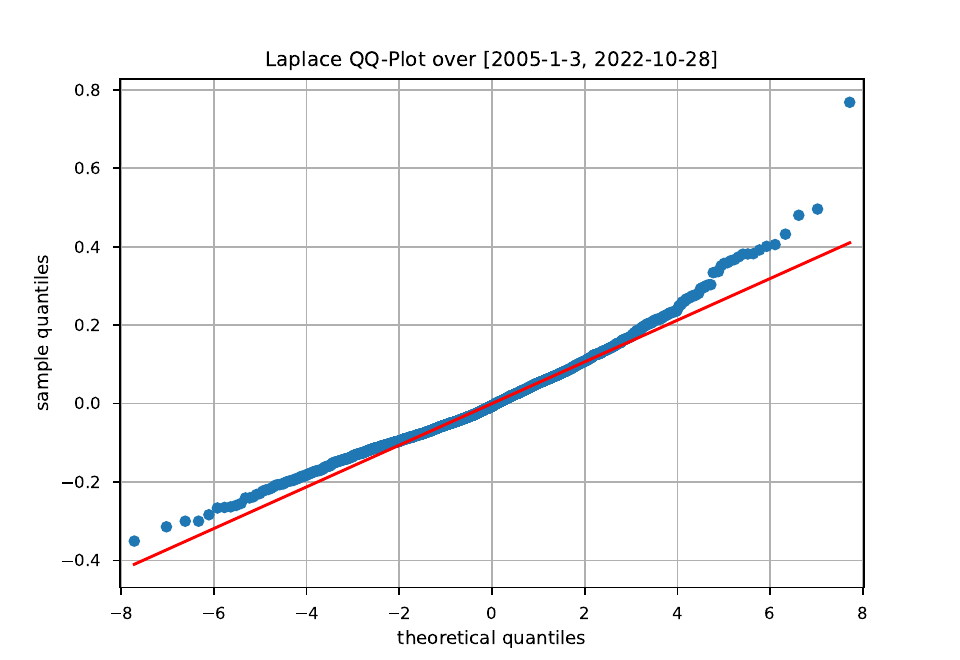}
        \caption{Laplace}
    \end{subfigure}

    \centering
    \begin{subfigure}[b]{0.32\textwidth}
        \centering
        \includegraphics[width=\textwidth]{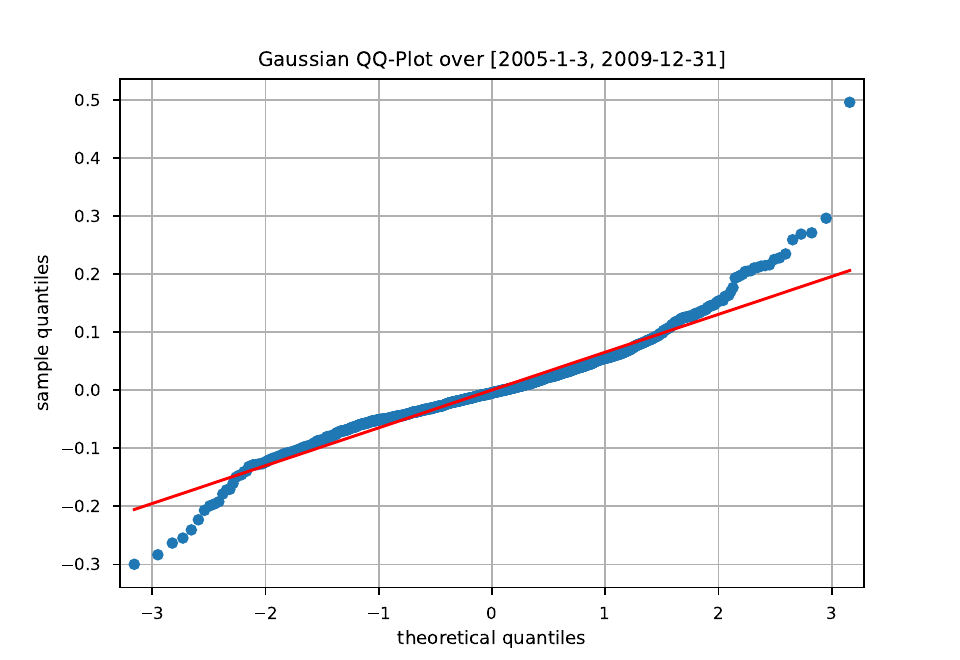}
        \caption{Gaussian}
    \end{subfigure}
    \hfill
    \begin{subfigure}[b]{0.32\textwidth}
        \centering
        \includegraphics[width=\textwidth]{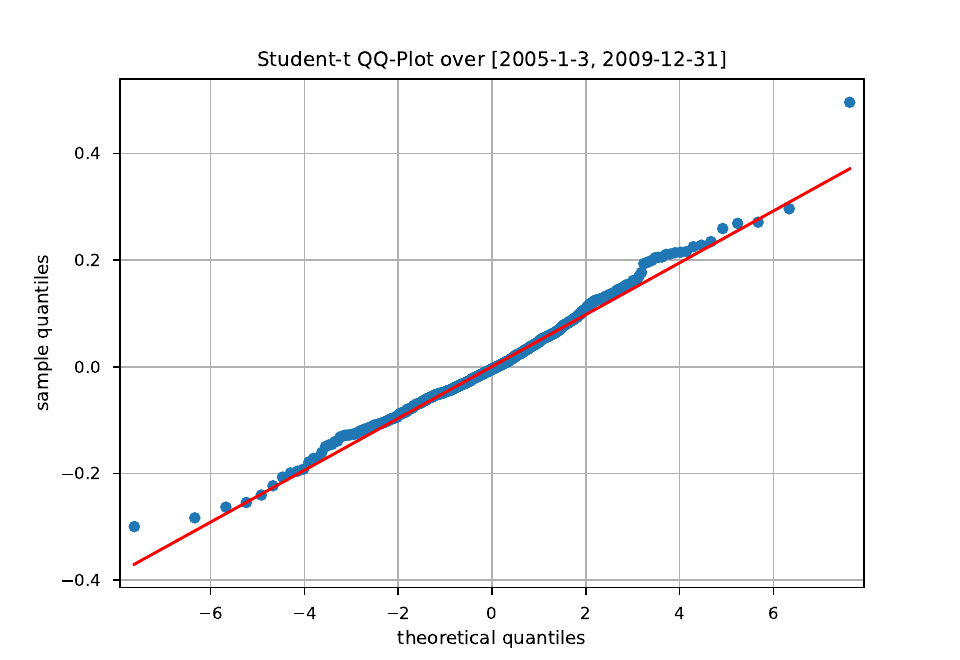}
        \caption{Student's t}
    \end{subfigure}
    \hfill
    \begin{subfigure}[b]{0.32\textwidth}
        \centering
        \includegraphics[width=\textwidth]{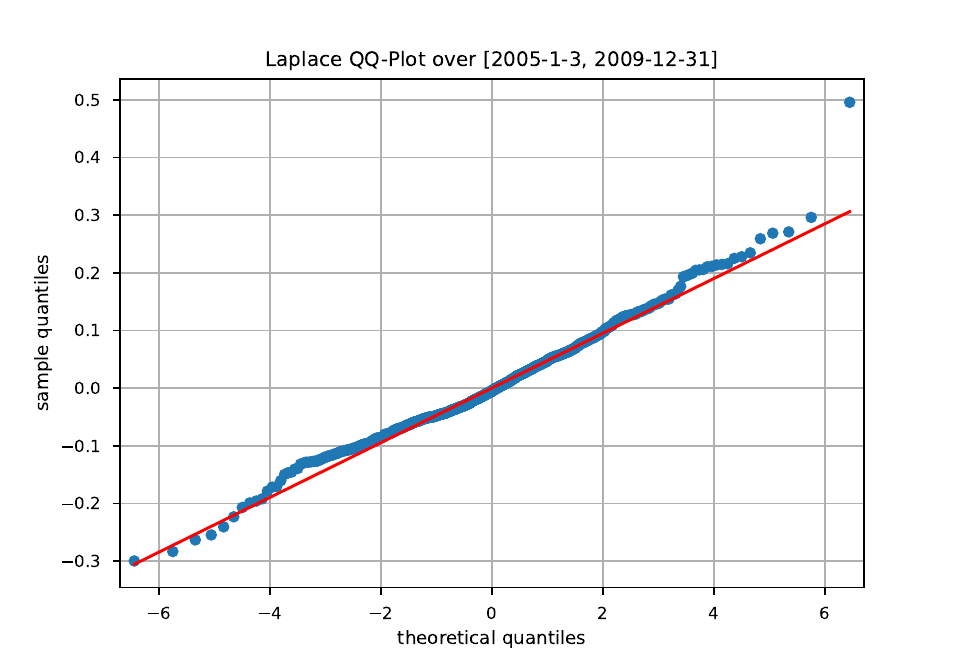}
        \caption{Laplace}
    \end{subfigure}

    \centering
    \begin{subfigure}[b]{0.32\textwidth}
        \centering
        \includegraphics[width=\textwidth]{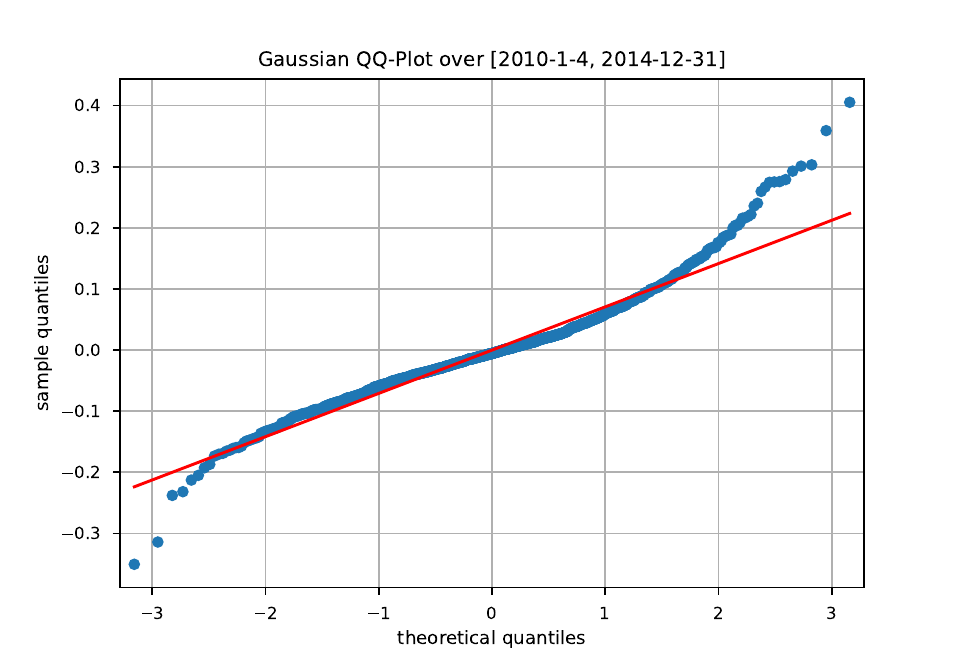}
        \caption{Gaussian}
    \end{subfigure}
    \hfill
    \begin{subfigure}[b]{0.32\textwidth}
        \centering
        \includegraphics[width=\textwidth]{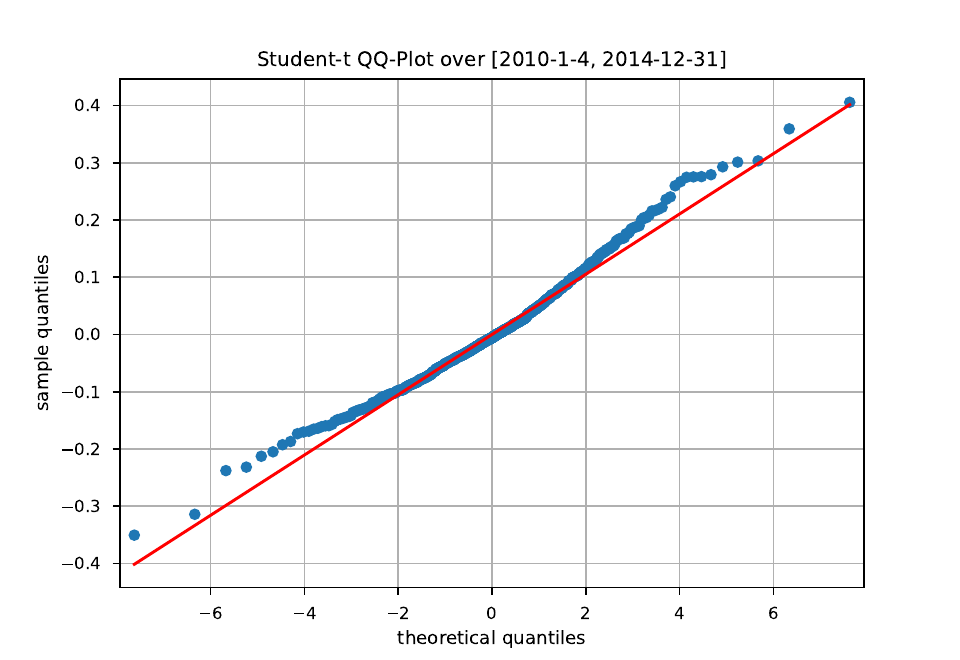}
        \caption{Student's t}
    \end{subfigure}
    \hfill
    \begin{subfigure}[b]{0.32\textwidth}
        \centering
        \includegraphics[width=\textwidth]{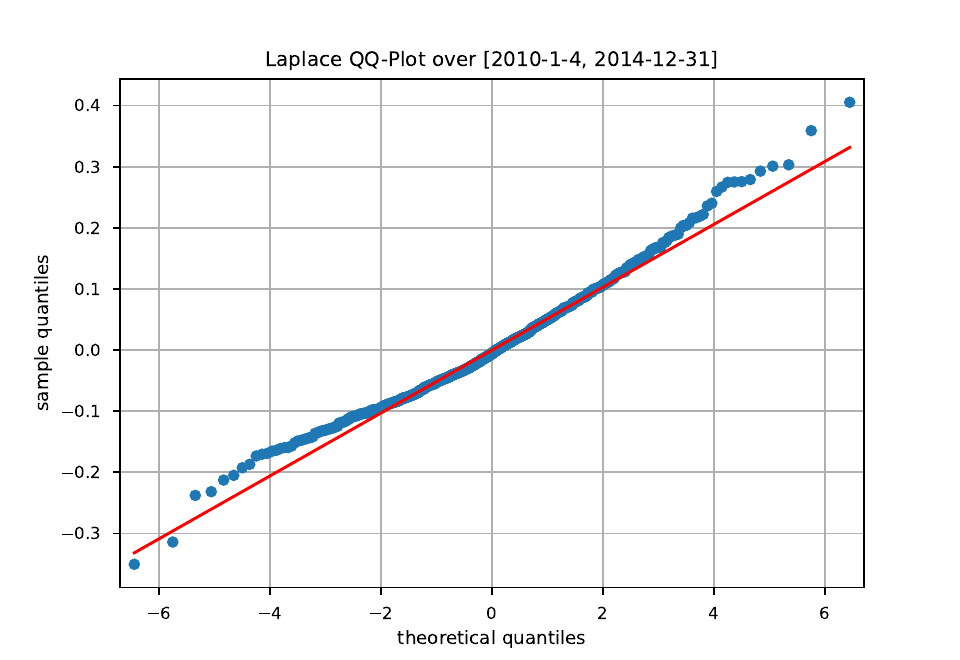}
        \caption{Laplace}
    \end{subfigure}

    \centering
    \begin{subfigure}[b]{0.32\textwidth}
        \centering
        \includegraphics[width=\textwidth]{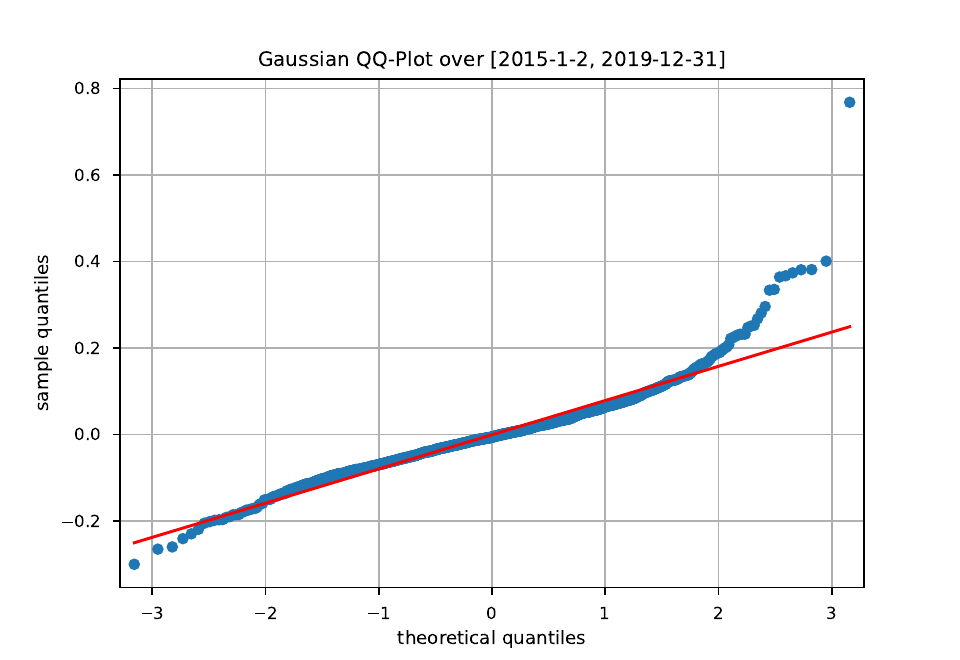}
        \caption{Gaussian}
    \end{subfigure}
    \hfill
    \begin{subfigure}[b]{0.32\textwidth}
        \centering
        \includegraphics[width=\textwidth]{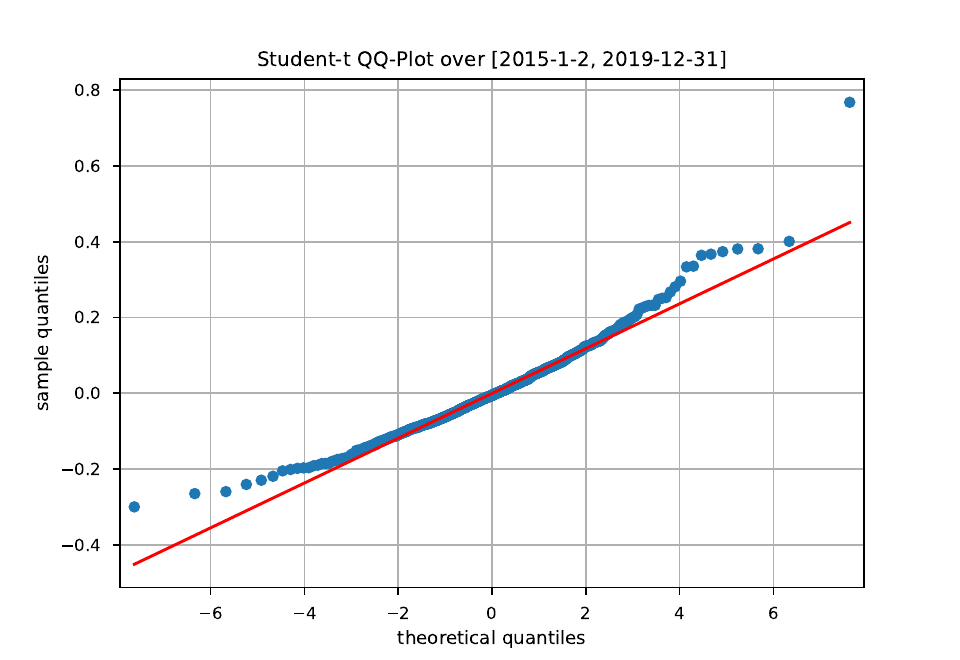}
        \caption{Student's t}
    \end{subfigure}
    \hfill
    \begin{subfigure}[b]{0.32\textwidth}
        \centering
        \includegraphics[width=\textwidth]{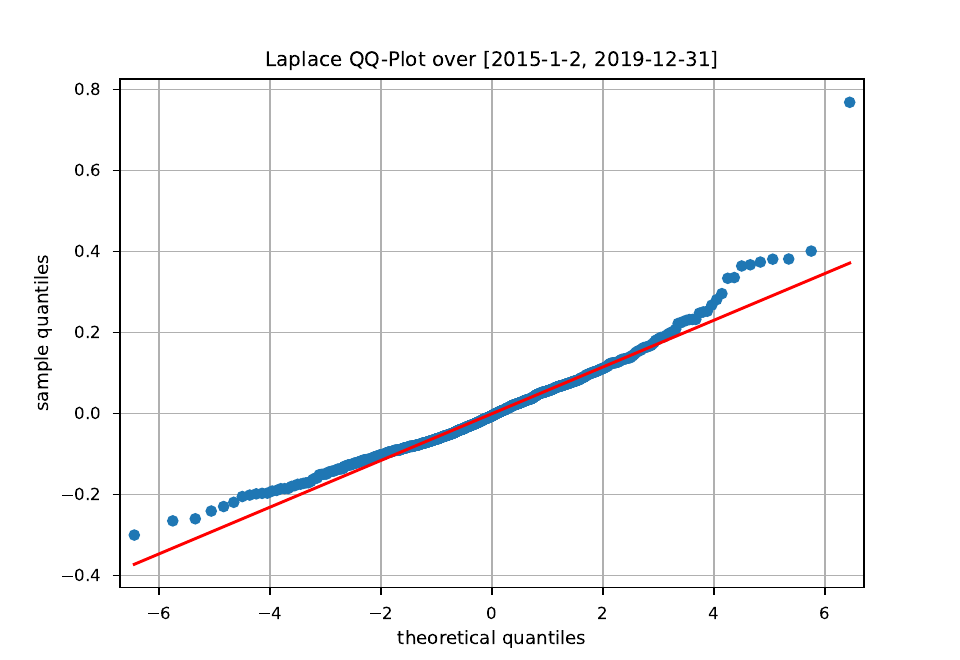}
        \caption{Laplace}
    \end{subfigure}

    \centering
    \begin{subfigure}[b]{0.32\textwidth}
        \centering
        \includegraphics[width=\textwidth]{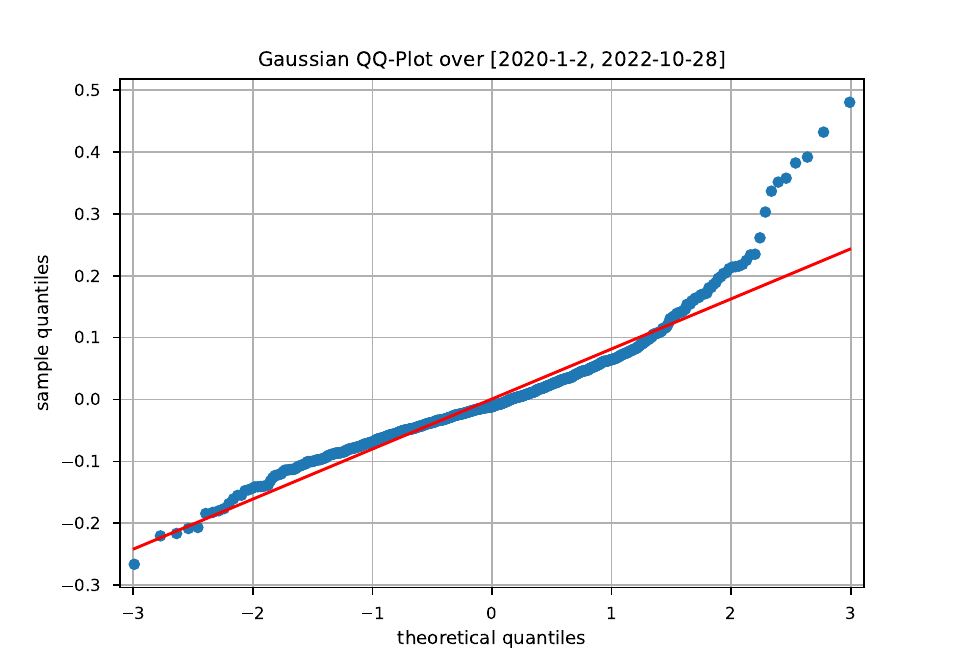}
        \caption{Gaussian}
    \end{subfigure}
    \hfill
    \begin{subfigure}[b]{0.32\textwidth}
        \centering
        \includegraphics[width=\textwidth]{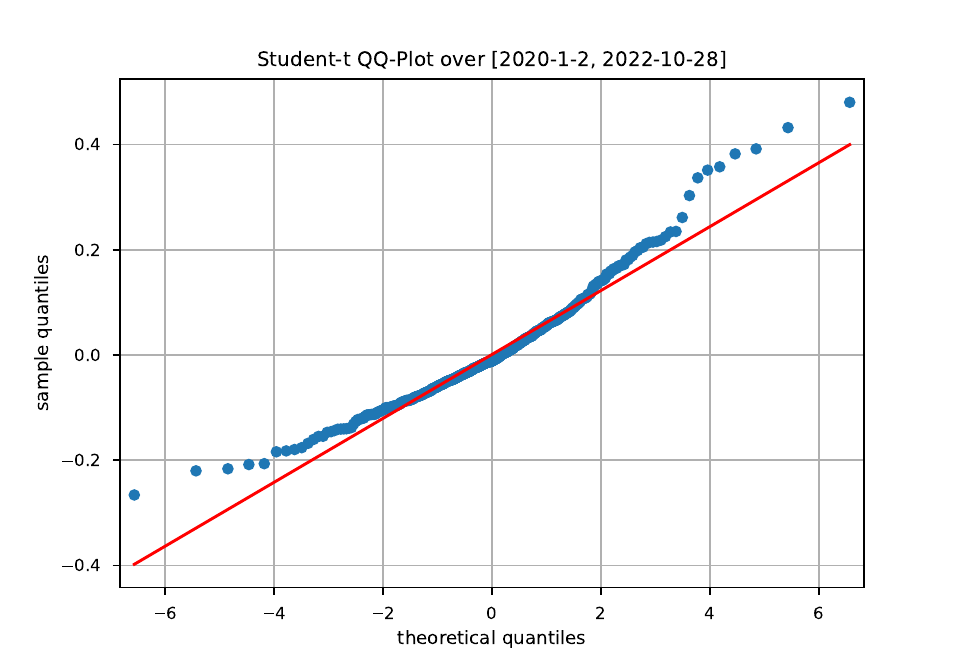}
        \caption{Student's t}
    \end{subfigure}
    \hfill
    \begin{subfigure}[b]{0.32\textwidth}
        \centering
        \includegraphics[width=\textwidth]{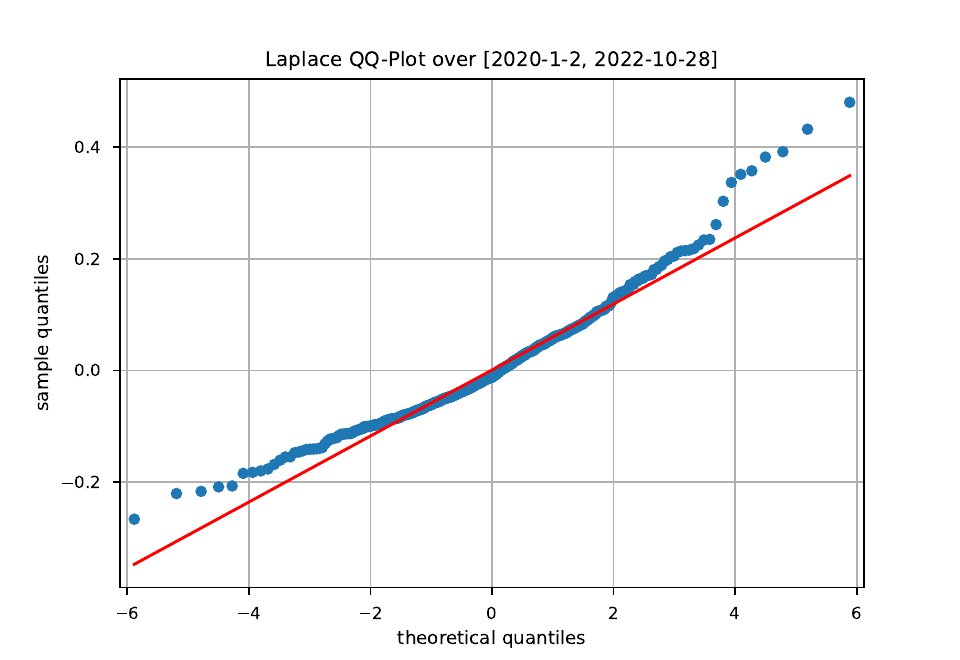}
        \caption{Laplace}
    \end{subfigure}
    \caption{QQ plots of VIX log-returns vs popular distributions over several time periods.}\label{fig:QQplots}
\end{figure}

\begin{table}[hbt!]
\begin{tabular}{@{}cccccc@{}}
\toprule
\textbf{Period}    & \textbf{Skewness} & \textbf{Skewness $p$-value} & \textbf{Kurtosis} & \textbf{Kurtosis $p$-value} & \textbf{Norm test} \\ \midrule
{2005-2022} & 1.05              & 2.2e-126                 & 6.10              & 6.7e-128                 & 1.3e-250          \\
{2005-2010} & 0.64              & 1.27e-17                  & 4.57              & 2.19e-31                  & 4.42e-46           \\
{2010-2015} & 0.70              & 3.32e-20                  & 3.37              & 1.49e-24                  & 7.47e-42           \\
{2015-2020} & 1.34              & 1.12e-51                  & 8.59              & 7.31e-48                  & 2.84e-96           \\
{2020-2022} & 1.28              & 8.08e-29                  & 4.62              & 1.93e-19                  & 2.51e-45 \\ \bottomrule
\end{tabular}
\caption{Normality tests on VIX log returns over different periods. The tests evaluate the skewness~\cite{DAgostino1990ANormality} and kurtosis~\cite{Anscombe1983DistributionSamples} of the underlying population compared to a Normal distribution, as well as the overall adherence of the sample to a Normal distribution~\cite{DAgostino1973Tests1, DAgostino1971AnSamples}. 
}\label{tab:normality_test}
\end{table}


\end{document}